\newcommand{\R}{\mathbb{R}}
\newcommand{\bZ}{\mathbb{Z}}
\newcommand{\hA}{\widehat{A}}
\newcommand{\AhA}{A\times\widehat{A}}
\newcommand{\cE}{\mathcal{E}}
\newcommand{\cH}{\mathcal{H}}
\newcommand{\ket}[1]{|#1\rangle}
\newcommand{\bra}[1]{\langle #1|}
\newcommand{\braket}[2]{\langle #1|#2\rangle}
\DeclareMathOperator{\Tr}{Tr}
\let\originalleft\left
\let\originalright\right
\renewcommand{\left}{\mathopen{}\mathclose\bgroup\originalleft}
\renewcommand{\right}{\aftergroup\egroup\originalright}
\newtheorem{theorem}{Theorem}[section]
\newtheorem{lemma}[theorem]{Lemma}
\newtheorem{proposition}[theorem]{Proposition}
\newtheorem{remark}[theorem]{Remark}
\newtheorem{definition}[theorem]{Definition}
\newtheorem{corollary}[theorem]{Corollary}
\newtheorem{example}[theorem]{Example}
\numberwithin{equation}{section}
\title[The wave function of stabilizer states]{The wave function of stabilizer states and the Wehrl conjecture}
\author[Fabio Nicola]{Fabio Nicola}
\address{Dipartimento di Scienze Matematiche, Politecnico di Torino, Corso Duca degli Abruzzi 24, 10129 Torino, Italy}
\email{fabio.nicola@polito.it}
\date{}
\begin{document} 
 
\begin{abstract}
We focus on quantum systems represented by a Hilbert space $L^2(A)$, where $A$ is a locally compact Abelian group that contains a compact open subgroup.
We examine two interconnected issues related to Weyl-Heisenberg operators. First, we provide a complete and elegant solution to the problem of describing the stabilizer states in terms of their wave functions, an issue that arises in quantum information theory. Subsequently, we demonstrate that the stabilizer states are precisely the minimizers of the Wehrl entropy functional, thereby resolving the analog of the Wehrl conjecture for any such group. Additionally, we construct a moduli space for the set of stabilizer states, that is, a parameterization of this set, that endows it with a natural algebraic structure, and we derive a formula for the number of stabilizer states when $A$ is finite. Notably, these results are novel even for finite Abelian groups. 
\end{abstract}

\subjclass[2010]{81P45, 81S30, 43A25}
\keywords{Stabilizer states, locally compact Abelian groups, Weyl-Heisenberg operators, Wehrl entropy}

\maketitle

\section{Introduction and discussion of the main results}
Consider a Hilbert space $\cH$ together with a ``representation", i.e. a unitary operator $U:\cH\to L^2(A)$, where $A$ is a locally compact Abelian (LCA) group. For $\ket{\psi}\in\cH$ we denote by $\psi=U \ket{\psi}\in L^2(A)$ its {\it wave function}.
 We denote by $\hA$ the group of continuous characters of $A$, namely the continuous homomorphisms $\xi:A\to U(1)$ (the multiplicative group of complex numbers of modulus $1$). For $x\in A$, $\xi\in\hA$ we denote by $\xi(x)\in U(1)$ the value of $\xi$ at $x$. In addition, we denote the group laws in $A$ and $\hA$ additively. We interpret $A$ as the classical configuration space, and the group product $\AhA$ as the classical phase space associated with $\cH$. 
 
Given $(x,\xi)\in\AhA$, we define the phase space shift $\pi(x,\xi):L^2(A)\to L^2(A)$  by
\begin{equation}\label{eq tfshift}
\pi(x,\xi)
\psi(y)=\xi(y) \psi(y-x)\qquad y\in A,\ \psi\in L^2(A). 
\end{equation}
We also define the Weyl-Heisenberg operators $W(x,\xi):\cH\to\cH$ as 
\begin{equation}\label{eq weylop}
W(x,\xi)= U^\dagger \pi(x,\xi)U\qquad (x,\xi)\in\AhA.
\end{equation}
These operators are widely used in both mathematics and physics, and their importance can hardly be overrated (see, e.g., \cite{degosson2011symplectic,degosson2021quantum,grochenig_book,luef2009quantum}). For a spinless particle in $\R^d$, in the Schr\"odinger representation, hence $A=\R^d$, they are generated by the usual position and momentum operators. The following is an illustration in a finite-dimensional context.

\begin{example}\label{exa quantum}
Suppose that $\cH$ has finite dimension, and let $\{\ket{x}\}_{x\in A}$ be an orthonormal basis, labeled by elements of the (finite) Abelian group $A$. This is a common framework for most systems occurring in quantum information, where such a fixed basis is referred to as the computational basis. For example, in the case of $n$ qudits, we have $A=\bZ_d^n$. The product $A=\bZ_{d_1}\times...\times \bZ_{d_n}$ arises similarly for multipartite systems; see \cite{basile2024weyl,nest2011} and the references therein. Observe that every finite Abelian group is isomorphic to the direct sum of finite cyclic groups and, therefore, occurs in this way. Here, the representation $U:\cH\to L^2(A)$, by definition, maps the vector $\ket{\psi}$ into its wave function $\psi(x):=\langle x|\psi\rangle$. 

For $(x,\xi)\in\AhA$, the Weyl-Heisenberg operator $W(x,\xi):\cH\to\cH$ acts on the computational basis as
\begin{equation}
W(x,\xi) \ket{y}=\xi(y+x)\ket{y+x} \qquad y\in A. 
\end{equation}

In the case of one qubit, with $A=\mathbb{Z}_2=\{0,1\}$, the shift operator $W(1,0)$ is known as the Pauli gate $X$, while the operator $W(0,1)$ is the Pauli gate $Z$; see \cite{basile2024weyl,bertlmann2008bloch,gross2006hudson,hostens2005stabilizer,weyl1927quantenmechanik}, the monographs \cite{holevo2011,holevo2019quantum},  \cite[Section 10.5]{nielsen2010quantum} and the references therein.
\end{example}
In this note we study two interconnected problems related to the Weyl-Heisenberg operators, that is, we offer a thorough and elegant solution to the problem of describing the so-called stabilizer states in terms of their wave functions, and we show that such states are exactly the minimizers of the Wehrl entropy functional. 
\subsection{Stabilizer states} 
Given $A$ as above, consider the corresponding {\it Heisenberg group}, defined as
\begin{equation}\label{eq heisen}
\mathbb{H}_A=\{u\, W(x,\xi):\ u\in U(1),\ (x,\xi)\in\AhA\}.
\end{equation}
It is a noncommutative locally compact and Hausdorff topological group, endowed with the topology induced by the natural bijection $\mathbb{H}_A\simeq U(1)\times\AhA$ (equivalently, the strong operator topology of bounded operators in $\mathcal{H}$; see, e.g., \cite[Lemma 1]{igusa68}). When $A$ is finite, say of cardinality $N$, we also consider the subgroup of $\mathbb{H}_A$, known as the {\it Pauli group}, defined as
 \begin{equation}\label{eq pauli}
\mathbb{P}_A=\{\zeta^k W(z):\ k\in\mathbb{Z},\ z\in\AhA\},
\end{equation}
where $\zeta=\exp(\pi i/N)$. 

It is well known that, if $A$ is finite, say of cardinality $N$, and $\mathcal{G}\subset \mathbb{P}_A$ is an Abelian subgroup of cardinality $N$ that does not contain multiples of the identity other than the identity itself, there exists exactly one state that is stabilized by $\mathcal{G}$, that is, a common eigenstate, with eigenvalues $1$, of all elements of $\mathcal{G}$. The study of these particular states began with \cite{gottesman1998heisenberg}, motivated by the realization that the most interesting states in quantum information theory are frequently more manageable via their stabilizer group $\mathcal{G}$ rather than through direct examination. Since then, these states have become an essential resource in the development of quantum error correction codes; see \cite[Chapter 10]{nielsen2010quantum} and the references therein. 

Not every state is a stabilizer state. Hence, one faces the problem of describing these states more explicitly in terms of their wave function. This issue was addressed in \cite{dehaene2003} and \cite{hostens2005stabilizer} for a system of $n$ qubits ($A=\bZ_2^n$) and $n$ qudits ($A=\bZ_d^n$) respectively. There, the authors presented a clever algorithm to construct the wave function of the stabilizer state from a given group $\mathcal{G}$ as above, requiring certain matrix manipulations in modular arithmetic, the search for a minimal set of generators of $\mathcal{G}$, the reduction to a Smith normal form and finally the resolution of a Diophantine system (see also
\cite{gross2006hudson} for the case where $A=\bZ^n_d$ and $d$ is an odd integer, in which the analysis is substantially simplified due to the fact that multiplication by $2$ acts as an isomorphism of $\bZ_d$, in this case).  
Although this method could theoretically be extended to finite Abelian groups, albeit with additional algebraic complexities, it seems that a more conceptual and systematic understanding has not yet been developed. In the first part of this note we will bridge this deficiency by introducing an elegant and general rule to derive the wave function of a stabilizer state from the group $\mathcal{G}$. 
Building on the foundational work \cite{segal1963,weil64} on Weyl-Heisenberg operators, we shall consider a general locally compact Abelian (LCA) group $A$ (containing a compact open subgroup; see below). Our analysis will be ``coordinate-free", based solely on the structure of $A$ as an LCA group, without depending on other choices, such as a set of generators. This could also offer significant practical advantages, as highlighted in \cite{nest2011}. We stress that the derived formula is novel even for finite Abelian groups and that our analysis is \textit{not} a modification of the approach discussed in the aforementioned references.

To properly state our results, we introduce some terminology. We recall that, according to \cite{weil64}, a \textit{character of second degree} (alias {\it quadratic form}) of an LCA group $A$ is a continuous function $h:A\to U(1)$ such that
\begin{equation}\label{eq hbeta}
h(x+y)=h(x)h(y)\,\beta(x)(y)\qquad x,y\in A
\end{equation}
for some continuous symmetric homomorphism $\beta:A\to\widehat{A}$; hence $\beta(x)(y)=\beta(y)(x)$ for $x,y\in A$. The set of characters of second degree of $A$ is easily seen to be a group that will be denoted by ${\rm Ch}_2(A)$. Moreover, we will denote by ${\rm Sym}(A)$ the group of continuous symmetric homomorphisms $\beta$ as above. It is clear that $\hA\subset {\rm Ch}_2(A)$ and it is known that ${\rm Ch}_2(A)/ \hA\simeq {\rm Sym}(A)$ (algebraic isomorphism).

We also say that a  function $h:A\to \mathbb{C}$ is a {\it subcharacter of second degree} of $A$ if there exists a compact open subgroup $H\subset A$ such that $h(x)=0$ for $x\in A\setminus H$ and the restriction of $h$ to $H$ is a character of second degree of $H$; we refer to Section \ref{sec 2} for details and for an explicit description (construction) of all characters of second degree of a finite Abelian group.

We now select a special class of state vectors.

\begin{definition}[$S$-state]\label{def state} We say that $\ket{\phi}\in\cH$ is an $S$-state vector if its wave function has the form 
\begin{equation}\label{eq stab phi}
\phi(x)=c\,h_0(x-y)
\end{equation}
for some $c\in\mathbb{C}\setminus\{0\}$, $y\in A$ and some subcharacter $h_0$ of second degree of $A$. We denote by $\mathfrak{S}$ the set of corresponding $S$-states, that is, states represented by an $S$-state vector. 
\end{definition}
%This terminology is justified by the fact that, as we will see in Section... (Proposition \ref{}) these states are precisely those whose characteristic function has support of minimum cardinality, namely $|A|$. 

We also give the following definition. We endow $\AhA$ with the Radon product measure, where the Haar measures on $A$ and $\hA$ are chosen so that the Fourier inversion formula holds with constant $1$. 
\begin{definition}\label{def gsubgroups}
Let $A$ be any LCA group and let $\mathbb{H}_A$ be the corresponding Heisenberg group. Consider the collection $\mathfrak{G}$ of the compact  subgroups $\mathcal{G}\subset \mathbb{H}_A$ such that the homomorphism 
\begin{equation}\label{eq 5g}
    \mathcal{G}\to\AhA\qquad u\,W(z)\mapsto z
\end{equation}
is injective (hence $\mathcal{G}$ is Abelian) and has an image of measure $1$ in $\AhA$.

We say that $\ket{\psi}\in\cH\setminus\{0\}$ is  a stabilizer state vector if there exists $\mathcal{G}\in \mathfrak{G}$, such that $\ket{\psi}$ is an eigenvector, with eigenvalue $1$, of all elements of $\mathcal{G}$. In this case, we will say that $\mathcal{G}$ stabilizes $\ket{\psi}$.
\end{definition} 
We refer to Remark \ref{rem osservazioni su mainteo} and Proposition \ref{pro no loss generality} below for a discussion and some results that justify the conditions that appear in the definition of the class $\mathfrak{G}$. Here, we limit ourselves to pointing out that $\mathfrak{G}$ is nonempty precisely when $A$ contains a compact open subgroup and that this scenario turns out to be the natural framework for a general stabilizer formalism (incidentally, if $A$ lacks any compact open subgroup, the statements below are vacously true). We also anticipate that, as a consequence of Theorem \ref{teo stab1} below, every $\mathcal{G}\in \mathfrak{G}$ stabilizes exactly one state.

By assumption, for any $\mathcal{G}\in\mathfrak{G}$ the homomorphism in \eqref{eq 5g} is injective. This implies that $\mathcal{G}$ is the graph of some function, that is, it has necessarily the form
\begin{equation}\label{eq 1}
\mathcal{G}=\{\alpha(z)W(z):\ z\in K\},
\end{equation}
for some compact  subgroup $K\subset\AhA$ of measure $|K|=1$ and some function $\alpha:K\to U(1)$.  Since $\mathcal{G}$ is compact, $\alpha$ is continuous (see Proposition \ref{pro struttura sottogruppo}). 

We can now state our first result. We recall that the annihilator of a subgroup $H\subset A$ is the subgroup $H^\bot\subset\hA$ given by \begin{equation}\label{eq annihilator}
H^\bot=\{\xi\in\hA:\ \xi(x)=1\ \textrm{for all}\ x\in H\}.
\end{equation}

\begin{theorem}\label{teo stab1}
Any $S$-state vector $\ket{\phi}$ (Definition \ref{def state}) is stabilized by only one $\mathcal{G}\in\mathfrak{G}$ (Definition \ref{def gsubgroups}), which can be obtained as follows. Let $\phi\in L^2(A)$ be the wave function of $\ket{\phi}$, hence of the form \eqref{eq stab phi} for some $y\in A$ and some subcharacter $h_0$ of second degree of $A$.
Let $H=\{x\in A:\ h_0(x)\not=0\}$, and let $h=h_0|_H\in{\rm Ch}_2(H)$ be associated with a continuous symmetric homomorphism $\beta\in{\rm Sym}(H)$ as in \eqref{eq hbeta} (hence, $\beta(x)\in\widehat{H}$ for $x\in H$). Then $\mathcal{G}$ has the form \eqref{eq 1} where \begin{equation}\label{eq K}
K=\{(x,\xi)\in\AhA:\ x\in H,\ \xi|_H=\beta(x)\}
\end{equation}
and
\begin{equation}\label{eq alpha}
\alpha(x,\xi)=\overline{h_0(-x)\xi(y)}\qquad (x,\xi)\in K.
\end{equation}

Conversely, given any subgroup $\mathcal{G}\in\mathfrak{G}$, there is only one corresponding stabilizer state. It is an $S$-state, that is, its wave function is given by the formula \eqref{eq stab phi}, where $h_0$ and $y$ are given as follows. 
\begin{itemize}
\item[(i)] With $\mathcal{G}$ as in \eqref{eq 1}, let $H$ be the image of $K$ under the canonical projection $\AhA\to A$. The map $\xi\mapsto\alpha(0,\xi)$ is a continuous character of $H^\bot$, and therefore there exists $y\in A$ such that \[
\xi(y)=\overline{\alpha(0,\xi)}\quad \textrm{for}\ \xi\in H^\bot.
\]
\item[(ii)] The function $h:H\to U(1)$ given by \[
h(-x)=\overline{\alpha(x,\xi)\xi(y)}\quad \text{for}\ x\in H\ \text{and any}\ \xi\in\hA\ \text{with}\ (x,\xi)\in K
\]
is well defined and belongs to ${\rm Ch_2}(H)$. 
\item[(iii)] Let $h_0:A\to\mathbb{C}$ be defined by $h_0(x)=h(x)$ for $x\in H$ and $h_0(x)=0$ for $x\in A\setminus H$. Then for these $h_0$ and $y$, the wave function $\phi(x):=h_0(x-y)$ defines the stabilizer state of $\mathcal{G}$. 
\end{itemize}

As a consequence, a state is a stabilizer state if and only if it is an $S$-state.
\end{theorem}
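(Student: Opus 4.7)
The plan is to prove both directions of the theorem by explicit computation with the action \eqref{eq tfshift} of phase-space shifts on wave functions, using repeatedly the quadratic identity \eqref{eq hbeta} and the Weyl commutation relation $\pi(x_1,\xi_1)\pi(x_2,\xi_2)=\overline{\xi_2(x_1)}\,\pi(x_1+x_2,\xi_1+\xi_2)$. For the forward direction, starting from $\phi(x)=c\,h_0(x-y)$ I would compute
\begin{equation*}
\pi(x,\xi)\phi(z+y)=c\,\xi(y)\,\xi(z)\,h_0(z-x)\qquad(z\in A),
\end{equation*}
and seek $(x,\xi,u)$ with $u\,\pi(x,\xi)\phi=\phi$. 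Matching supports forces $x\in H$; for $z\in H$ the quadratic identity, in the form $h(z-x)=h(z)\,h(-x)\,\overline{\beta(x)(z)}$ (using symmetry of $\beta$), reduces the equation to $\xi(y)\,h_0(-x)\,\xi(z)\,\overline{\beta(x)(z)}=\bar u$ for all $z\in H$, which holds iff $\xi|_H=\beta(x)$ (i.e.\ $(x,\xi)\in K$) and $u=\overline{h_0(-x)\,\xi(y)}=\alpha(x,\xi)$. This identifies $\mathcal{G}$ uniquely with \eqref{eq K}--\eqref{eq alpha}; membership in $\mathfrak{G}$ is then checked directly, since $K$ is the continuous image of the graph map $(x,\eta)\mapsto(x,\beta(x)+\eta)$ on $H\times H^\perp$, hence a compact subgroup of measure $|H|\cdot|H^\perp|=1$, the cocycle condition on $\alpha$ reduces to the symmetry $\beta(x_1)(x_2)=\beta(x_2)(x_1)$, and injectivity of \eqref{eq 5g} is automatic since $\phi\neq 0$.

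For the converse, given $\mathcal{G}\in\mathfrak{G}$ of the form \eqref{eq 1}, I set $H:=\mathrm{pr}_A(K)$ and $K_0:=K\cap(\{0\}\times\hA)$. The abelianness of $\mathcal{G}$ combined with the Weyl commutation gives $\xi_1(x_2)=\xi_2(x_1)$ for $(x_i,\xi_i)\in K$, so with $x_1=0$ we obtain $K_0\subset H^\perp$. A Steinhaus-type argument (a compact subgroup of positive Haar measure is open) shows $K$, hence $H$ and $H^\perp$, are open; then the identity $|K|=|H|\cdot|K_0|=1$ together with $|H|\cdot|H^\perp|=1$ forces $K_0=H^\perp$. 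Thus each fiber of $\mathrm{pr}_A|_K$ is a single coset of $H^\perp$, and $\beta(x):=\xi|_H$ is a well-defined continuous symmetric homomorphism $H\to\widehat{H}$. The restriction of the cocycle to $\{0\}\times H^\perp$ makes $\xi\mapsto\alpha(0,\xi)$ a character of $H^\perp$, and $\widehat{H^\perp}\simeq A/H$ yields (i). The formula for $h$ in (ii) is independent of the choice of $\xi$ by $\alpha(x,\xi+\eta)=\alpha(0,\eta)\,\alpha(x,\xi)$ for $\eta\in H^\perp$, and the full cocycle translates after a short manipulation into $h(-x_1-x_2)=h(-x_1)\,h(-x_2)\,\beta(x_1)(x_2)$, so $h\in{\rm Ch}_2(H)$ with associated form $\beta$. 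The verification in (iii) is then the forward computation read backwards.

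For uniqueness, any stabilized $\psi\neq 0$ satisfies $\xi(y'-y)\,\psi(y')=\psi(y')$ for every $\xi\in H^\perp$, which by Pontryagin duality forces $\psi$ to vanish outside $y+H$. Writing $y'=y+z$ with $z\in H$, the stabilizer equation for $(x,\xi)\in K$ reduces via $\xi(z)=\beta(x)(z)$ to $\psi(y+z)=\overline{h_0(-x)}\,\beta(x)(z)\,\psi(y+z-x)$; setting $z=x$ and using the identity $h(-x)=\overline{h(x)}\,\beta(x)(x)$ (a consequence of $h(0)=1$) yields $\psi(y+x)=\psi(y)\,h(x)$, exhibiting $\psi$ as a scalar multiple of $\phi$. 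The concluding equivalence ``stabilizer state $\iff$ $S$-state'' is then immediate. The main technical obstacle I anticipate is the careful bookkeeping of signs, conjugations, and symmetries when translating between the additive notation on $\hA$ and the multiplicative pairing $\xi(x)\in U(1)$, particularly in verifying that the $h$ produced in (ii) is a character of second degree with exactly the form $\beta$; a secondary but cleaner point is establishing openness of $H$ from the measure condition via the Steinhaus argument.
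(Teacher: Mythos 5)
Your route is genuinely different from the paper's and, in outline, it works. The paper never touches the eigenvalue equation pointwise: it characterizes stabilization through the coherent state transform (Proposition \ref{pro 3g}: $\ket{\phi}$ is stabilized by $\mathcal{G}$ iff $V_\phi\phi=\alpha$ on $K$ and $0$ elsewhere), then imports the computation of $V_{h_0}h_0$ and the measure-one support characterization from Theorem \ref{teo nic}, gets uniqueness of the stabilized state from Proposition \ref{pro ambiguity phase}, and gets the structure of $K$ and $\mathcal{G}$ from Theorem \ref{teo iso sub} and Proposition \ref{pro struttura sottogruppo}. You instead solve $u\,\pi(x,\xi)\phi=\phi$ directly on wave functions, which makes the proof self-contained (no appeal to the ambiguity-function machinery), and your direct uniqueness argument for the stabilized state (vanishing off $y+H$ via $H^{\perp\perp}=H$, then $\psi(y+x)=h(x)\psi(y)$) is a nice elementary replacement for Proposition \ref{pro ambiguity phase}. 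Your computations check out: the support matching, the reduction to $\xi|_H=\beta(x)$ and $u=\overline{h_0(-x)\xi(y)}$, the cocycle verification via symmetry of $\beta$, the identification $K_0=H^\perp$ from $|K|=|H|\,|K_0|=|H|\,|H^\perp|=1$, and the algebra showing $h(-x_1-x_2)=h(-x_1)h(-x_2)\beta(x_1)(x_2)$. (Two small completeness points you should spell out: uniqueness of $\mathcal{G}$ requires the one-line remark that a compact subgroup of $K$ of full measure equals $K$, since its complement is open of measure zero; and $|K|=|H|\,|K_0|$ deserves the Fubini justification for the indicator of the compact open set $K$.)

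There is, however, a genuine gap in the converse direction as written: you never establish \emph{continuity} of $h$ (nor, really, of $\beta$), yet membership in ${\rm Ch}_2(H)$ and ${\rm Sym}(H)$ requires it by Definitions \ref{def charac} and \ref{def sym}, and without it $h_0$ is not a subcharacter of second degree and $\phi$ is not an $S$-state in the sense of Definition \ref{def state}. This is trivial for finite $A$ but is a real step in the stated LCA generality, and it is exactly where the paper spends a paragraph: since $H^\perp$ is open, $\widehat H\simeq\hA/H^\perp$ is discrete, so $\beta(H)$ is finite, say $\{\eta_1,\dots,\eta_N\}$; the sets $H_j=\beta^{-1}(\{\eta_j\})$ partition $H$ into open sets, and choosing extensions $\xi_j\in\hA$ of the $\eta_j$ gives a locally constant section $\gamma:H\to\hA$ with $(x,\gamma(x))\in K$, whence $h(-x)=\overline{\alpha(x,\gamma(x))\gamma(x)(y)}$ is continuous (local constancy of $\beta$ itself follows from openness of $K$). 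The same extension/section subtlety undercuts your forward-direction justification that $K$ is ``the continuous image of the graph map $(x,\eta)\mapsto(x,\beta(x)+\eta)$ on $H\times H^\perp$'': $\beta(x)\in\widehat H$ is not an element of $\hA$, so this map is not defined until an extension is chosen, and compactness of the image needs a continuous (e.g.\ locally constant) choice; alternatively you can avoid the issue there entirely by writing $K=\bigcup_j\bigl(H_j\times(\xi_j+H^\perp)\bigr)$, or by citing Theorem \ref{teo iso sub} and Proposition \ref{pro max compact open sub} as the paper does. With these continuity arguments inserted, your proof is complete.
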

 Observe that, for practical purposes, if $\mathcal{G}$ is given in the form \eqref{eq 1}, the subgroup $H^\bot$ is also explicitly known, because $H^\bot=\{\xi\in\hA:\ (0,\xi)\in K\}$ (see Theorem \ref{teo iso sub}). See also Remark \ref{rem 3.5} for other observations regarding Theorem \ref{teo stab1}. 

Theorem \ref{teo stab1} suggests a method to build a \textit{moduli space} for the set of stabilizer states, that is, a set that parameterizes the collection of stabilizer states and endows it with some inherent nontrivial algebraic structure. Indeed, we will see (Corollary \ref{cor moduli space}) that for the collections $\mathfrak{G}$ and $\mathfrak{S}$ of stabilizer subgroups and $S$-states, respectively (Definitions \ref{def gsubgroups} and \ref{def state}) there are natural bijections
\[
\mathfrak{G}\simeq \mathfrak{S}\simeq \sqcup_{H\subset A} A\times_H{\rm Ch}_2(H)
\]
where the first identification is given by Theorem \ref{teo stab1}. The disjoint union is taken over all compact open subgroups $H\subset A$ and the set $A\times_H{\rm Ch}_2(H)$ will be suggestively constructed as ``a bundle over $A/H$ associated with the $H$-principal bundle $A\to A/ H$, with typical fiber ${\rm Ch_2}(H)$". Moreover, each fiber has a \textit{natural} structure of Abelian group. Hence, the sets $\mathfrak{G}$ and $ \mathfrak{S}$, as disjoint unions of groups, are naturally groupoids. 

%Notice the formal analogy with other classical moduli spaces, such as that of algebraic curves, which has a structure of groupoid as well; see \cite{madsen} (of course, our setting is much more elementary because the above moduli space does not seem to have an interesting topological structure). 

As a consequence of the above construction, we obtain a formula for the number of stabilizer states for an arbitrary finite Abelian group, that is  
\[
\#\mathfrak{G}=\# \mathfrak{S}=\#A\cdot \sum_{H\subset A} \#{\rm Sym}(H).
\]
Explicit formulas were previously known for certain cases, specifically for $A=\bZ^n_p$, with $p$ being a prime number or, more generally, $A=(\mathbb{F}_{p^m})^n$ ($\mathbb{F}_{p^m}$ being the Galois field) \cite{gross2006hudson} and for $A=\bZ_d^n$, with an arbitrary $d$ -- a case settled only recently in \cite{singal2023counting} (see also the references therein). 
\subsection{Wehrl entropy} 
We now consider the problem of minimizing the Wehrl entropy for a system as above. 

%Now we suppose that $\cH$, and therefore $L^2(A)$, are separable. 

For any ``reference" ket $\ket{\phi}\in \cH$, with $\|\ket{\phi}\|=1$, consider the corresponding {\it Weyl-Heisenberg coherent states} 
\begin{equation}\label{eq phiz}
\ket{\phi_z}:=W(z)\ket{\phi} \qquad z\in\AhA.
\end{equation}
We recall that, given $\ket{\phi}\in\cH$, with $\|\ket{\phi}\|=1$, and a density operator $\rho$ (a compact nonnegative operator in $\cH$ with trace $1$), the corresponding \textit{Husimi function} is given by
\begin{equation}\label{eq husimi}
u_{\phi,\rho}(z):=\langle \phi_z|\rho|\phi_z\rangle\qquad z\in\AhA.
\end{equation}

If $G:[0,1]\to\R$ is a concave function, the (generalized) Wehrl entropy of a density operator $\rho$, with respect to a reference state $\ket{\phi}\in\cH$, with $\|\ket{\phi}\|=1$, is then defined as
\begin{equation}\label{eq wehrl entropy}
\cE_G(\phi,\rho):=\int_{\AhA} G(u_{\phi,\rho}(x,\xi))\,dx\,d\xi
\end{equation}
provided the integral makes sense. 
In the following theorem we assume that $G(0)=0$, which ensures that the negative part of the integrand function is summable. Hence, the above integral is well defined, although it might be $+\infty$ (see Remark \ref{rem wehrl makessense}). 

The next result provides a full characterization of the minimal entropy states. 

\begin{theorem}\label{teo wehrl}
Let $G:[0,1]\to\R$ be a concave function with $G(0)=0$. For every $\ket{\phi}\in\cH$, with $\|\ket{\phi}\|=1$, and density operator $\rho$, we have 
\begin{equation}\label{eq 1.7}
\cE_G(\phi,\rho)\geq G(1).
\end{equation}
Moreover, if $G$ is not linear (that is, not of the form $G(\tau)=\alpha\tau$ for any $\alpha\in\R$) the following facts are equivalent.
\begin{itemize}
\item[(i)] Equality occurs in \eqref{eq 1.7} for a pair $\ket{\phi}$, $\rho$ as above.  
\item[(ii)] The Husimi function $u_{\phi,\rho}$ is the indicator function of a subset $\mathcal{K}\subset \AhA$ of measure $|\mathcal{K}|=1$.
\item[(iii)] $\ket{\phi}$ is an $S$-state (Definition \ref{def state}), $\rho=\ket{\psi}\bra{\psi}$ is a pure state, and there exist $z\in\AhA$, $\theta\in\R$ such that
\[
\ket{\psi}=e^{i\theta} W(z) \ket{\phi}.
\]
\end{itemize}
\end{theorem}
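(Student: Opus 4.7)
The plan is to derive the inequality from a pointwise estimate on the Husimi function and then dissect the equality case via Cauchy-Schwarz and Theorem \ref{teo stab1}. Two facts are fundamental: (a) $0 \leq u_{\phi,\rho}(z) \leq 1$ pointwise, since $\|\phi_z\|=1$ and $0 \leq \rho \leq I$; and (b) $\int_{\AhA} u_{\phi,\rho}(z)\,dz = \Tr\rho = 1$, which is the Moyal / Schur orthogonality of the Weyl-Heisenberg system applied to a spectral decomposition of $\rho$, with the Haar normalizations fixed in the paper. Concavity of $G$ together with $G(0) = 0$ gives $G(\tau) \geq \tau G(1)$ on $[0,1]$, and integrating against (a)--(b) yields $\cE_G(\phi,\rho) \geq G(1)$.

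For (i) $\Leftrightarrow$ (ii): if equality holds, then $f(u_{\phi,\rho}(z)) = 0$ a.e., where $f(\tau) := G(\tau) - \tau G(1) \geq 0$ is concave on $[0,1]$ with $f(0) = f(1) = 0$. A chord argument shows that if $f$ were to vanish at an interior point of $(0,1)$, it would vanish identically, contradicting the non-linearity of $G$. Hence $u_{\phi,\rho}(z) \in \{0,1\}$ a.e., so $u_{\phi,\rho} = \mathbf{1}_{\mathcal{K}}$ a.e.\ with $|\mathcal{K}| = 1$. The reverse implication is immediate from $G(0) = 0$.

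The core step is (ii) $\Rightarrow$ (iii). Decompose $\rho = \sum_j p_j \ket{\psi_j}\bra{\psi_j}$ with $\|\psi_j\| = 1$. At any $z \in \mathcal{K}$, the identity $\sum_j p_j |\langle\phi_z|\psi_j\rangle|^2 = 1$ combined with the pointwise bound $|\langle\phi_z|\psi_j\rangle|^2 \leq 1$ forces $|\langle\phi_z|\psi_j\rangle| = 1$ for each $p_j > 0$; Cauchy-Schwarz equality then yields $\ket{\psi_j} \propto \ket{\phi_z}$, so $\rho = \ket{\phi_z}\bra{\phi_z}$ is pure and, choosing any $z_0 \in \mathcal{K}$, $\ket{\psi} = e^{i\theta} W(z_0)\ket{\phi}$. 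Substituting back gives $u_{\phi,\rho}(w) = |\langle\phi|W(z_0 - w)\phi\rangle|^2$ (the phase in $W(w)^\dagger W(z_0) = c(w,z_0) W(z_0 - w)$ has unit modulus and drops out), so $\mathcal{K}' := \{v : |\langle\phi|W(v)\phi\rangle| = 1\}$ is a set of measure $1$. Continuity of the ambiguity function makes $\mathcal{K}'$ closed; on $\mathcal{K}'$, Cauchy-Schwarz equality gives $W(v)\ket{\phi} = \lambda(v)\ket{\phi}$ for some $\lambda(v) \in U(1)$; and the projective composition rule for Weyl operators makes $\mathcal{K}'$ a subgroup. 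Steinhaus then upgrades a closed subgroup of positive Haar measure to open, and an open subgroup of finite ambient Haar measure must be compact. Hence $\mathcal{G} := \{\overline{\lambda(v)}W(v) : v \in \mathcal{K}'\}$ is a compact Abelian subgroup of $\mathbb{H}_A$ belonging to $\mathfrak{G}$ that stabilizes $\ket{\phi}$; by the converse direction of Theorem \ref{teo stab1}, $\ket{\phi}$ is an $S$-state.

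Finally, for (iii) $\Rightarrow$ (i): Theorem \ref{teo stab1} provides $\mathcal{G} \in \mathfrak{G}$ stabilizing $\ket{\phi}$, so $|\langle\phi|W(v)\phi\rangle| = 1$ on the associated compact open subgroup $K$ of measure $1$. The Moyal identity $\int_{\AhA} |\langle\phi|W(v)\phi\rangle|^2\,dv = 1$, together with continuity and the openness of $K^c$, forces vanishing off $K$, so $u_{\phi,\rho}(w) = \mathbf{1}_{z - K}(w)$ and $\cE_G(\phi,\rho) = G(1)$ since $G(0) = 0$. The main obstacle I expect is the group-theoretic part of the (ii) $\Rightarrow$ (iii) argument: verifying that $\mathcal{K}'$ is genuinely a compact open subgroup so that $\mathcal{G} \in \mathfrak{G}$ and the converse part of Theorem \ref{teo stab1} can be invoked. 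Everything else reduces to concavity and Cauchy-Schwarz bookkeeping.
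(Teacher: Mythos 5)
Your argument is correct and follows the same overall skeleton as the paper's proof ((i)$\Rightarrow$(ii)$\Rightarrow$(iii)$\Rightarrow$(i), with a spectral decomposition of $\rho$), but the two middle implications are handled by genuinely different means. For (ii)$\Rightarrow$(iii) the paper first shows $|V_\phi\psi_j|=\chi_{\mathcal K}$ for \emph{every} $j$, invokes Theorem \ref{teo nic} (iii) to obtain at once the $S$-state structure of $\phi$ and $\psi_j=e^{i\theta_j}W(z_j)\phi$, and only then deduces purity, by observing that two orthogonal $\psi_j,\psi_k$ would have supports $z_j+K$ and $z_k+K$ in distinct cosets, contradicting that both equal $\mathcal K$. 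You reverse the order: purity comes first, from Cauchy--Schwarz saturation at a \emph{single} point $z\in\mathcal K$ (all $\psi_j$ with $p_j>0$ proportional to $\phi_z$, incompatible with orthonormality), which is arguably cleaner than the coset argument; you then recover the $S$-state property by building a stabilizer subgroup over $\mathcal K'=\{v:\,|V_\phi\phi(v)|=1\}$ and invoking the converse half of Theorem \ref{teo stab1}. That detour works: the ``obstacle'' you flag is essentially Proposition \ref{pro max sambiguity} (the level-$1$ set of $V_\phi\phi$ is a compact isotropic subgroup), combined with the observation in Section \ref{sec 2} that a compact subgroup of positive measure is open, and with the continuity of $\lambda(v)=\langle\phi|W(v)|\phi\rangle$ on $\mathcal K'$, which makes $\mathcal G$ compact as the graph of a continuous function on a compact set. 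Alternatively you could bypass the group construction altogether: your own computation gives $|V_\phi\phi|^2=\chi_{z_0-\mathcal K}$, so the support of $V_\phi\phi$ has measure one and Theorem \ref{teo nic} (ii) says directly that $\phi$ is an $S$-state --- this is exactly the paper's shortcut. Your (iii)$\Rightarrow$(i) re-derives what is essentially Proposition \ref{pro 3g} together with Theorem \ref{teo nic} (ii), and your lower bound via $G(\tau)\ge\tau G(1)$ is equivalent to the paper's normalization $G(1)=0$.

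One small point to repair: in (i)$\Rightarrow$(ii) your chord argument only yields $u_{\phi,\rho}(z)\in\{0,1\}$ for \emph{almost every} $z$, while statement (ii) asserts that $u_{\phi,\rho}$ is exactly an indicator function. To upgrade, use the continuity of $u_{\phi,\rho}$ --- which for a non-pure (possibly infinite-rank) $\rho$ requires the uniform convergence of $\sum_j p_j|V_\phi\psi_j|^2$, i.e. Lemma \ref{lem husimi} --- so that the open set $\{0<u_{\phi,\rho}<1\}$, being null, is empty. The same lemma also covers the interchange of sum and integral in your fact (b) and, together with Remark \ref{rem wehrl makessense}, the well-definedness of $\cE_G(\phi,\rho)$ for general concave $G$ with $G(0)=0$.
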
 
We emphasize that this result is new even in the case of finite Abelian groups.

In Theorem \ref{teo wehrl}, instead of $S$-states we could equivalently refer to stabilizer states, in view of Theorem \ref{teo stab1}. Also, note that Theorem \ref{teo wehrl} is relevant for groups $A$ that have a compact open subgroup. Indeed, if this condition is not met, equality in \eqref{eq 1.7} never occurs. Hence, the above result is in a sense \textit{ complementary} to that for a spinless particle in $\mathbb{R}^d$ in the Schr\"odinger representation, therefore $A=\mathbb{R}^d$, originally studied by Wehrl \cite{wehrl79} in the case in which $G(\tau)=-\tau\log \tau$. There, it was conjectured that if $\ket{\phi}$ represents the ground state of the harmonic oscillator, the entropy $\cE_G(\phi,\rho)$ is minimized by the pure states $\rho=\ket{\phi_z}\bra{\phi_z}$. This conjecture was proved in \cite{lieb}, whereas the uniqueness of the minimizers was subsequently established in \cite{carlen91}, that is, the Wehrl entropy (with $\ket{\phi}$ as specified above) is minimized exclusively by these states. The analogous conjecture for the $SU(2)$ and certain $SU(N)$ coherent states was proved in \cite{lieb2014} (see also \cite{giovannetti2015}), and \cite{lieb2016}, respectively. The uniqueness of the minimizers in the $SU(2)$ case has been obtained only recently, simultaneously, and independently in \cite{frank22} and \cite{kulikov22bis}. See also \cite{frank2023generalized} for a quantitative version of this lower bound and \cite{nicola22} for closely related estimates. 

Hence, Theorem \ref{teo wehrl} solves the analog of the Wehrl conjecture for any LCA group $A$ that contains a compact open subgroup. Further results, in particular, a characterization of the maximizers of the Wehrl entropy in the finite-dimensional setting, are presented in Section \ref{sec finite dim}. 

 We are confident that the methods discussed in this note and the characterization of stabilizer states in Theorem \ref{teo stab1} can be utilized to investigate optimizers of other entropic inequalities in quantum information theory (see \cite{depalma2018} for an extensive review of current advances of this issue in the case of Gaussian channels; hence $\cH\simeq L^2(\R^d)$).  
We will postpone the examination of these topics to subsequent investigations.

\section{Tools from harmonic analysis}\label{sec 2}
\subsection{Notation}
As explained in the introduction, we denote by $A$ a locally compact Abelian (LCA) group, and by $\hA$ its dual group, whole elements are the continuous characters $\xi:A\to U(1)$. Endowed with the topology of the uniform convergence on compact subsets, $\hA$ becomes an LCA group (\cite[Theorems 23.13 and 23.15]{hewitt63}). 
We endow $A$ with {\it a} Haar measure and $\hA$ with {\it the} normalized Haar measure so that the Fourier inversion formula holds with constant $1$. We denote by $|H|$ the measure of a subset $H$ of $A$, or $\hA$ or $\AhA$ (the latter endowed with the \textit{Radon product} measure; see \cite[Section 13]{hewitt63}). 
The inner product in $L^2(A)$ is defined as
\[
\braket{\phi}{\psi}_{L^2(A)}=\int_A \overline{\phi(x)}\psi(x)\, dx,
\]
where $dx$ is the Haar measure chosen on $A$ (similarly, we write $d\xi$ for the Haar measure on $\hA$ specified above). 

We recall that the Haar measure on any LCA group assigns a positive measure to nonempty open subsets; see, e.g., \cite[Proposition 2.19]{follandbook}. Conversely, if $H\subset A$ is a compact subgroup of measure $|H|>0$ then $H$ is open. Indeed, the set where $\chi_H\ast\chi_H>0$ ($\chi_H$ denoting the indicator function of $H$) is open (because $\chi_H\ast\chi_H$ is continuous) and nonempty if $|H|>0$, and is easily seen to be contained in $H$. Hence, $H$ is open since it has a nonempty interior. 

For a subgroup $H\subset A$, the annihilator of $H$ is the closed subgroup of $\hA$ given by
\[ 
H^\bot=\{\xi\in\hA:\ \xi(x)=1\ \textrm{for all}\ x\in H\} 
\] 
(cf. \cite[Definition 23.23]{hewitt63}). We recall that if $H\subset A$ is a compact and open subgroup, the same holds for $H^\bot\subset\hA$ (\cite[Remark 4.2.22]{reiter00}). Moreover, as a consequence of the Fourier inversion formula, we have
\[
|H||H^\bot|=1
\]
(see \cite[Formula (4.4.6)]{reiter00}). We also recall that a subgroup $H\subset A$ is open if and only if $G/H$ is discrete (\cite[Theorem 5.21]{hewitt63}). 

We observe that, given a closed subgroup $H\subset A$, we have a natural isomorphim of LCA groups (\cite[Theorem 24.11]{hewitt63})
\begin{equation}\label{eq iso}
\widehat{H}\to \hA/H^\bot\qquad \xi\mapsto\underline{\xi}
\end{equation}
with $\underline{\xi}=\xi'+H^\bot$, where $\xi'\in\hA$ is any extension of $\xi\in \widehat{H}$, that is,  $\xi'|_H=\xi$.

We refer to \cite{hewitt63} for a comprehensive introduction to the theory of LCA groups. 
\begin{definition}\label{def sym}
    A continuous homomorphism $\beta:A\to\hA$ is called symmetric if $\beta(x)(y)=\beta(y)(x)$ for every $x,y\in A$. The set of continuous symmetric homorphisms $\beta:A\to\hA$ will be denoted by ${\rm Sym}(A)$. It is easily seen to be a group with respect to the product given by \[
(\beta_1\beta_2)(x):=\beta_1(x)+\beta_2(x) \qquad x\in A\]
(the sum being understood in $\hA$). 
\end{definition}
Observe that a homomorphism $\beta\in{\rm Sym}(A)$  can be identified with the continuous symmetric bicharacter $A\times A\to U(1)$ given by $\beta(x,y):=\beta(x)(y)$.

\subsection{Isotropic subgroups}
For an LCA group $A$ as above, the phase space $\AhA$ is also an LCA group. It is naturally endowed with a symplectic structure given by the bicharacter \[
\sigma:(\AhA)\times(\AhA)\to U(1)\]
defined as
\[
\sigma((x,\xi),(y,\eta))=\xi(y)\overline{\eta(x)}\qquad (x,\xi),(y,\eta)\in\AhA. 
\]
\begin{definition}
    A subgroup $K\subset\AhA$ is called {\it isotropic} if $\sigma(z,w)=0$ for every $z,w\in K$.
    \end{definition}
    The next result will be crucial later (see \cite[Propositions 3.5 and 3.6]{nicola2023maximally}). 
    \begin{proposition}\label{pro max compact open sub}
A compact open isotropic subgroup $K\subset\AhA$ necessarily has measure $|K|\leq 1$. Moreover $|K|=1$ if and only if it is a maximal compact open isotropic subgroup, that is, if it is not strictly contained in any compact open isotropic subgroup.
\end{proposition}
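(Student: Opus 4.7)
The plan is to work with the symplectic orthogonal complement
\[
K^{\perp_\sigma}:=\{w\in\AhA:\ \sigma(z,w)=1\ \text{for all}\ z\in K\},
\]
and observe that, by definition, $K$ is isotropic if and only if $K\subset K^{\perp_\sigma}$. The key measure-theoretic input will be the identity $|K|\cdot |K^{\perp_\sigma}|=1$. I would derive this by noting that the continuous bicharacter $\sigma$ is nondegenerate, hence induces a topological isomorphism $\AhA\to\widehat{\AhA}$ via $w\mapsto \sigma(\cdot,w)$; under this isomorphism $K^{\perp_\sigma}$ corresponds to the usual annihilator of $K$ in $\widehat{\AhA}$, and the paper has already recalled that for a compact open subgroup of an LCA group the annihilator is again compact open, with measures multiplying to $1$ under the Fourier-inversion normalization. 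In particular, $K^{\perp_\sigma}$ itself is compact and open.

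From here the bound $|K|\leq 1$ is immediate: isotropy gives $K\subset K^{\perp_\sigma}$, so
\[
|K|\leq |K^{\perp_\sigma}|=\frac{1}{|K|},
\]
hence $|K|^{2}\leq 1$. Moreover, since $K$ is open in $K^{\perp_\sigma}$, the quotient $K^{\perp_\sigma}/K$ is a finite discrete group of cardinality $|K^{\perp_\sigma}|/|K|=1/|K|^{2}$. Therefore $|K|=1$ holds precisely when $K=K^{\perp_\sigma}$.

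For the characterization of maximality, one direction is formal: if $K=K^{\perp_\sigma}$ and $K\subset K'$ with $K'$ compact open isotropic, then $K'\subset (K')^{\perp_\sigma}\subset K^{\perp_\sigma}=K$, forcing $K'=K$. Conversely, assume $K$ is maximal compact open isotropic; I would argue by contradiction. If $|K|<1$ then $K\subsetneq K^{\perp_\sigma}$, so pick $w\in K^{\perp_\sigma}\setminus K$ and let $K'$ denote the closed subgroup of the compact group $K^{\perp_\sigma}$ generated by $K\cup\{w\}$. Then $K'$ is compact (closed in $K^{\perp_\sigma}$), open (it contains the open subgroup $K$), and properly contains $K$ (since $K$ is closed and $w\notin K$). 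To check isotropy, it suffices by continuity and bilinearity of $\sigma$ to verify the identity $\sigma(a,b)=1$ on a set of generators: this holds for $a,b\in K$ by assumption, for $a\in K$, $b=w$ since $w\in K^{\perp_\sigma}$, for $a=w$, $b\in K$ by skew-symmetry (which is automatic from $\sigma(z,z)=\xi(x)\overline{\xi(x)}=1$), and for $a=b=w$ by the same automatic identity. This contradicts maximality.

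The main obstacle is the last step: producing a strictly larger compact open isotropic subgroup once $|K|<1$. The subtle point is that the subgroup algebraically generated by $K$ and $w$ need not be closed, so one must pass to its closure inside the ambient \emph{compact} group $K^{\perp_\sigma}$ in order to preserve compactness, and then appeal to continuity of $\sigma$ to retain isotropy under this closure. The rest of the argument is a clean manipulation of the duality identity $|K|\cdot|K^{\perp_\sigma}|=1$ combined with the elementary fact that $K\subset K^{\perp_\sigma}$ encodes isotropy.
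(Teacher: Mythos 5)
Your argument is correct, and it follows the standard duality route that the paper itself relies on: the proposition is not proved in the text but quoted from the reference [Nicola, \emph{Maximally localized Gabor orthonormal bases\dots}, Propositions 3.5 and 3.6], where the reasoning is of the same type — isotropy as $K\subset K^{\perp_\sigma}$, the identity $|K|\,|K^{\perp_\sigma}|=1$, finiteness of $K^{\perp_\sigma}/K$, and enlargement by the closed subgroup generated by $K$ and a point of $K^{\perp_\sigma}\setminus K$ (isotropy surviving by bimultiplicativity, continuity and the automatic skew-symmetry $\sigma(z,z)=1$). The only step deserving one explicit line in your write-up is that the map $w\mapsto\sigma(\cdot,w)$ from $\AhA$ to $\widehat{\AhA}$ is measure preserving for the product measure on $\AhA$ and the dual measure on $\widehat{\AhA}$ (under the canonical identifications it is just $(y,\eta)\mapsto(-\eta,y)$), which is what turns the recalled formula $|K|\,|K^\bot|=1$ for annihilators into the identity $|K|\,|K^{\perp_\sigma}|=1$ you use.
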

The class of maximal compact open isotropic subgroups plays a role analogous to that of Lagrangian subspaces in the symplectic geometry over $\R$. A class of examples is given by subgroups of product type, that is,  $K=H\times H^\bot$ for any compact open subgroup $H\subset A$. The following result from \cite[Propositions 3.1, 3.5 and 3.6]{nicola2023maximally} describes, in fact, all the maximal compact open isotropic subgroups. We will make use of the notation $\xi$ and $\underline{\xi}$ for the correspondence given in \eqref{eq iso}. Hence, if $\beta\in{\rm Sym}(H)$ and $x\in H$, we have $\beta(x)\in\widehat{H}$ and $\underline{\beta(x)} \in \hA/H^\bot$.

\begin{theorem}\label{teo iso sub}
There is a one-to-one correspondence between the class of maximal compact open isotropic subgroups $K\subset \AhA$ and the set of pairs $(H,\beta)$, where $H\subset A$ is a compact open subgroup and $\beta\in{\rm Sym}(H)$. Exactly, every such $K$ can be uniquely written in the form
\begin{align*}
K&=\{(x,\xi)\in\AhA:\ x\in H,\ \xi\in\underline{\beta(x)}\}\\
&=\{(x,\xi)\in\AhA:\ x\in H,\ \xi|_{H}=\beta(x)\}.
\end{align*}
Moreover $H^\bot= \{\xi\in\hA:\ (0,\xi)\in K\}$.
\end{theorem}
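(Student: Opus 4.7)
The plan is to establish the bijective correspondence by constructing one direction and its inverse, with Proposition \ref{pro max compact open sub} supplying the maximality criterion.

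\textbf{Construction of $K$ from $(H,\beta)$.} Define $K$ as in the statement. Since $\beta$ is a homomorphism, $K$ is closed under the group operation: $(\xi_1+\xi_2)|_H=\beta(x_1)+\beta(x_2)=\beta(x_1+x_2)$. The first projection $K\to H$ is a continuous surjective homomorphism with kernel $\{0\}\times H^\bot$. Because $H$ and $H^\bot$ are compact open (cf. Section \ref{sec 2}), so is $K$, with $|K|=|H||H^\bot|=1$. Isotropy is immediate from symmetry of $\beta$:
\[
\sigma((x_1,\xi_1),(x_2,\xi_2))=\xi_1(x_2)\overline{\xi_2(x_1)}=\beta(x_1)(x_2)\overline{\beta(x_2)(x_1)}=1.
\]
Maximality then follows from Proposition \ref{pro max compact open sub}.

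\textbf{Recovery of $(H,\beta)$ from $K$.} Let $H$ be the image of $K$ under the first projection. As the continuous image of a compact set, $H$ is compact; and it is open because $|K|=1$ forces $|H|>0$ via a Fubini argument applied to the projection $K\to H$. The crux of the construction is the identity
\[
K\cap(\{0\}\times\hA)=\{0\}\times H^\bot.
\]
The inclusion $\subseteq$ uses isotropy: for $(0,\xi)\in K$ and any $(x,\eta)\in K$, $1=\sigma((x,\eta),(0,\xi))=\overline{\xi(x)}$, so $\xi|_H=1$. For the reverse inclusion, pick $\xi_0\in H^\bot$ and note that the subgroup generated by $K$ and $(0,\xi_0)$ is still isotropic, since $\sigma((x,\eta),(0,\xi_0))=\overline{\xi_0(x)}=1$ for $(x,\eta)\in K$, and it is compact open because $H^\bot$ is. Maximality of $K$ then forces this enlarged subgroup to coincide with $K$, giving $(0,\xi_0)\in K$.

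With this identity at hand, define $\beta:H\to\widehat{H}$ by $\beta(x)=\xi|_H$ for any $\xi$ with $(x,\xi)\in K$; well-definedness is a direct consequence of the identity above. Then $\beta$ is a homomorphism by construction, and symmetric by isotropy, since $\beta(x_1)(x_2)=\xi_1(x_2)=\xi_2(x_1)=\beta(x_2)(x_1)$. Continuity of $\beta$ follows from factoring the continuous map $K\to\widehat{H}$, $(x,\xi)\mapsto\xi|_H$, through the continuous open surjection $K\to H$. The explicit description of $K$ in terms of $(H,\beta)$, the uniqueness of the pair, and the final assertion $H^\bot=\{\xi:(0,\xi)\in K\}$ then drop out immediately.

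\textbf{Main obstacle.} The delicate step is the reverse inclusion in the key identity above, where maximality of $K$ must be invoked through an enlargement argument. Verifying that the augmented subgroup is still compact, open, and isotropic, so that maximality applies, is the cornerstone on which the entire correspondence rests.
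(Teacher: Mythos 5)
This theorem is not proved in the paper at all: it is imported verbatim from \cite{nicola2023maximally} (Propositions 3.1, 3.5 and 3.6), so there is no internal proof to compare yours against; I can only assess your argument on its own terms. In outline it is correct and is the natural one: build $K$ from $(H,\beta)$ as an extension $0\to H^\bot\to K\to H\to 0$, compute $|K|=|H|\,|H^\bot|=1$ and invoke Proposition \ref{pro max compact open sub} for maximality; conversely project, prove $K\cap(\{0\}\times\hA)=\{0\}\times H^\bot$, and read off $\beta$. Two points deserve tightening. First, in the enlargement step the subgroup generated by $K$ and a single element $(0,\xi_0)$ is $K+\langle(0,\xi_0)\rangle$, and the cyclic group $\langle\xi_0\rangle$ need not be closed in $\hA$, so this subgroup need not be compact as written; either pass to its closure (still isotropic by continuity of $\sigma$, and compact because it sits inside the compact set $K+(\{0\}\times H^\bot)$), or, more cleanly, enlarge in one stroke to $K':=K+(\{0\}\times H^\bot)$, which is visibly a compact open isotropic subgroup containing $K$, so that maximality gives $K'=K$ and hence $\{0\}\times H^\bot\subset K$ for all of $H^\bot$ at once. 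Second, some assertions are stated more quickly than they are justified: surjectivity of the projection $K\to H$ in the first direction tacitly uses that every character of $H$ extends to a character of $A$ (\cite[Corollary 24.12]{hewitt63}, used elsewhere in the paper), compactness of $K$ rests on the standard fact that a closed subgroup with compact kernel and compact quotient is compact, and openness of the projection $K\to H$ (needed for continuity of $\beta$) uses the open mapping theorem for the compact, hence $\sigma$-compact, group $K$. None of these is a genuine gap, but each should be said explicitly; with those words added your proof is complete and self-contained, which the paper itself does not provide.
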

Observe that $\underline{\beta(x)}\in \hA /H^\bot$ is a subset of $\hA$ -- a coset of $H^\bot $ in $\hA$. Hence $H$ is the image of $K$ under the canonical projection $\AhA\to A$ and the fiber in $K$ over $x$ is identified with $\underline{\beta(x)}$. In algebraic terms, $K$ is regarded as an extension of $H$ by $H^\bot$, that is, we have an exact sequence of Abelian groups
\[
0\to H^\bot\to K\to H\to0,
\]
where the second arrow is the map $\xi\mapsto (0,\xi)$ and the third arrow is the restriction to $K$ of the canonical projection $(x,\xi)\mapsto x$. 

We observe that, with the above notation, if $x,y\in H$ then 
\begin{equation}
\beta(x)(y)=\xi(y)\qquad\textrm{for every}\ \xi\in\underline{\beta(x)}.
\end{equation}
Indeed, if $x,y\in H$ and $\underline{\beta(x)}=\xi'+H^\bot$, where $\xi'\in\hA$ extends $\beta(x)\in\widehat{H}$, and $\xi=\xi'+\eta$, with $\eta\in H^\bot$ we have 
\begin{equation}\label{eq 2.2bis}
\beta(x)(y)=\xi'(y)=\xi'(y)\eta(y)=(\xi'+\eta)(y)=\xi(y),
\end{equation}
where we used that $\eta(y)=1$, since $\eta\in H^\bot$ and $y\in H$.

%Similarly one defines the inner product in $\ell^2(\AhA)$, linear in the second argument. 
\subsection{Coherent state transform} Given $(x,\xi)\in\AhA$ and $\psi\in L^2(A)$, we have already defined in \eqref{eq tfshift} the phase-space shift $\pi(x,\xi):L^2(A)\to L^2(A)$.
It is easy to check that, for $x,y\in A$, $\xi,\eta\in \hA$ we have
\begin{equation}\label{eq tf product}
\pi(x,\xi)\pi(y,\eta)=\overline{\eta(x)}\pi(x+y,\xi+\eta),
\end{equation}whence the following commutation relations follow:
\begin{equation}\label{eq 2.2}
\pi(x,\xi)\pi(y,\eta)=\xi(y)\overline{\eta(x)}\pi(y,\eta)\pi(x,\xi).
\end{equation}

\begin{definition}\label{def coherent}
    Given $\phi \in L^2(A)$, the {\it coherent state transform} (alias {\it short-time Fourier transform}) of a function $        \psi\in L^2(A)$ with ``window" $\phi$ is the function
\[
V_\phi \psi(z):=\langle \pi(z)\phi,\psi\rangle_{L^2(A)}\qquad z\in\AhA.
\]
\end{definition}
This phase-space  transform is widely used in harmonic analysis and signal processing \cite{grochenig_book}, as well as in mathematical physics \cite{degosson2021quantum,lieb_book}. 
We list here some basic properties whose proof can be found (in the context of a general LCA group) in \cite{grochenig1998aspects,nicola2023maximally}. 

It is known (see, e.g., \cite[Proposition 2.1]{nicola2023maximally}) that $V_\phi \psi$ is a continuous function that vanishes at infinity, i.e., for every $\epsilon>0$ there exists a compact subset $K\subset\AhA$ such that $|V_\phi \psi(z)|<\epsilon$ for $z\in A\setminus K$.  

As a consequence of the commutation relations \eqref{eq 2.2} we have the following covariance-type properties, for $x,y\in A$, $\xi,\eta\in\hA$:  
\begin{equation}\label{eq 2.3}
V_{\phi}(\pi(x,\xi)\psi)(y,\eta)= \xi(x)\overline{\eta(x)} V_\phi \psi(y-x,\eta-\xi)
\end{equation}
and
\begin{equation}\label{eq 2.4}
V_{\pi(x,\xi)\phi}(\pi(x,\xi)\psi)(y,\eta)= \xi(y)\overline{\eta(x)}  V_\phi \psi(y,\eta).
\end{equation}
By the Cauchy-Schwarz inequality we also have
\begin{equation}\label{eq cs}
|V_\phi \psi(x,\xi)|\leq \|\psi\|_{L^2(A)} \|\phi\|_{L^2(A)}\qquad x\in A,\ \xi\in\hA.
\end{equation}
Moreover, as a consequence of the Parseval Formula for the Fourier transform on $A$, we have the following Parseval-type formula for the coherent state transform: 
\begin{equation}\label{parseval}
\int_{\AhA} |V_\phi \psi(x,\xi)|^2\, dx\,d\xi= \|\phi\|^2_{L^2(A)} \|\psi\|^2_{L^2(A)}. 
\end{equation}
The following result from \cite[Proposition 4.1]{nicola2023maximally} is less known. It provides some information on the subset of $\AhA$ where the function  $V_{\psi}\psi$ achieves its maximum value. First we observe that if $\phi\in L^2(A)$, with $\|\phi\|_{L^2(A)}=1$ then by \eqref{eq cs} we have 
\[
|V_{\phi}\phi(z)|\leq V_{\phi}\phi(0)=1\qquad z\in\AhA.
\]
\begin{proposition}\label{pro max sambiguity}
Let $\phi\in L^2(A)$, with $\|\phi\|_{L^2(A)}=1$. The set 
\[
\{z\in \AhA:\  |V_{\phi}\phi(z)|=1\}
\]
is a compact isotropic subgroup of $\AhA$. 
\end{proposition}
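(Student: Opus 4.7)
The plan is to exploit the equality case of the Cauchy--Schwarz inequality underlying \eqref{eq cs}, together with the composition rule \eqref{eq tf product} and commutation relations \eqref{eq 2.2} for the phase-space shifts. Throughout, write $S=\{z\in \AhA:|V_\phi\phi(z)|=1\}$.

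First I would address compactness. Since $V_\phi\phi$ is continuous, $S$ is closed. Moreover $V_\phi\phi$ vanishes at infinity, so there is a compact $K\subset\AhA$ outside of which $|V_\phi\phi|<1/2$; hence $S\subset K$ and $S$ is compact.

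Next, the crucial observation: for $z\in S$, the inequality \eqref{eq cs} holds with equality, i.e.\ $|\braket{\pi(z)\phi}{\phi}|=\|\pi(z)\phi\|\,\|\phi\|=1$. The equality case of Cauchy--Schwarz then forces $\pi(z)\phi=c_z\phi$ for some scalar $c_z$, and since $\pi(z)$ is unitary with $\|\phi\|=1$ we must have $c_z\in U(1)$. Conversely, any $z$ for which $\pi(z)\phi$ is a unimodular multiple of $\phi$ lies in $S$. Thus
\[
S=\{z\in\AhA:\pi(z)\phi\in U(1)\cdot\phi\}.
\]
From this characterization the subgroup property is automatic: if $z,w\in S$ with $\pi(z)\phi=c_z\phi$ and $\pi(w)\phi=c_w\phi$, then applying \eqref{eq tf product} we get $\pi(z+w)\phi\in U(1)\cdot\phi$, so $z+w\in S$; similarly, $\pi(-z)\phi\in U(1)\cdot\phi$ follows from the invertibility of $\pi(z)$ (again via \eqref{eq tf product} with $w=-z$). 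Clearly $0\in S$ since $V_\phi\phi(0)=\|\phi\|^2=1$.

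Finally, for isotropy, I would apply the commutation relation \eqref{eq 2.2}: for $z,w\in S$,
\[
\pi(z)\pi(w)\phi=\sigma(z,w)\,\pi(w)\pi(z)\phi,
\]
and since both sides equal $c_z c_w\phi$ (up to the factor $\sigma(z,w)$ on the right), we deduce $\sigma(z,w)=1$, i.e.\ $S$ is isotropic. No step here appears to be a real obstacle; the only delicate point is invoking the Cauchy--Schwarz equality case correctly in the $L^2(A)$ setting, but this is standard and yields the key rigidity $\pi(z)\phi\in U(1)\cdot\phi$ that drives the rest of the argument.
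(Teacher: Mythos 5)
Your proof is correct and is essentially the standard argument: the paper itself does not prove this proposition in-text but cites it from Proposition 4.1 of \cite{nicola2023maximally}, and the proof there rests on the same rigidity you use — the Cauchy--Schwarz equality case forcing $\pi(z)\phi\in U(1)\cdot\phi$ on the level set, after which the composition rule \eqref{eq tf product} gives the subgroup property and the commutation relations \eqref{eq 2.2} give isotropy, with compactness from continuity and decay of $V_\phi\phi$. Note only that the paper's definition of isotropic reads $\sigma(z,w)=0$, an evident typo for $\sigma(z,w)=1$ (as $\sigma$ takes values in $U(1)$), which is exactly the condition you derive.
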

Finally we recall from \cite[Proposition 2.3]{nicola2023maximally} the following uniqueness result (see \cite[Section 4.2]{grochenig_book} for the case $A=\R^d$). 
\begin{proposition}\label{pro ambiguity phase}
Let $\phi,\psi\in L^2(A)$. Then 
\[
V_{\phi}\phi(z)=V_\psi \psi(z)\quad\textrm{for every}\ z\in\AhA
\]
if and only if  $\psi=e^{i\theta} \phi$ for some $\theta\in\R$. 
\end{proposition}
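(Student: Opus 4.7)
The ``if'' direction is immediate: if $\psi = e^{i\theta}\phi$, then by linearity/antilinearity of the inner product,
\[
V_\psi\psi(z) = \langle\pi(z)e^{i\theta}\phi,\, e^{i\theta}\phi\rangle_{L^2(A)} = |e^{i\theta}|^2\langle\pi(z)\phi,\phi\rangle_{L^2(A)} = V_\phi\phi(z).
\]
For the nontrivial direction my plan is to reduce matters to the equality case of the Cauchy--Schwarz inequality, using the Moyal-type orthogonality relation obtained by polarization of \eqref{parseval}.

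First I would upgrade \eqref{parseval} to the full orthogonality identity
\[
\langle V_{\phi_1}\psi_1,\, V_{\phi_2}\psi_2\rangle_{L^2(\AhA)} \;=\; \overline{\langle\phi_1,\phi_2\rangle_{L^2(A)}}\,\langle\psi_1,\psi_2\rangle_{L^2(A)}.
\]
Since $V_\phi\psi$ is antilinear in the window $\phi$ and linear in $\psi$, both sides are sesquilinear in the pair $(\phi_1,\phi_2)$ and in the pair $(\psi_1,\psi_2)$, so the identity follows from \eqref{parseval} by two successive applications of the standard polarization formula in a Hilbert space.

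Now assume $V_\phi\phi\equiv V_\psi\psi$ on $\AhA$. Evaluating at $z=0$ (where $\pi(0)=\mathrm{Id}$) gives $\|\phi\|_{L^2(A)}^2 = V_\phi\phi(0) = V_\psi\psi(0) = \|\psi\|_{L^2(A)}^2$; if this common value vanishes then $\phi=\psi=0$ and there is nothing to prove, so assume it is positive. Applying the orthogonality identity with $(\phi_1,\psi_1)=(\phi,\phi)$ and $(\phi_2,\psi_2)=(\psi,\psi)$ yields
\[
\|\phi\|^4 \;=\; \|V_\phi\phi\|_{L^2(\AhA)}^2 \;=\; \langle V_\phi\phi,\, V_\psi\psi\rangle_{L^2(\AhA)} \;=\; |\langle\phi,\psi\rangle_{L^2(A)}|^2,
\]
so $|\langle\phi,\psi\rangle| = \|\phi\|\|\psi\|$. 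The equality case of Cauchy--Schwarz then forces $\psi = c\phi$ for some $c\in\mathbb{C}$, and substituting this back gives $V_\psi\psi = |c|^2 V_\phi\phi$; comparing with $V_\phi\phi$ (evaluated at $z=0$ where the value is $\|\phi\|^2>0$) yields $|c|=1$, i.e.\ $c=e^{i\theta}$.

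The main -- and essentially only nontrivial -- step is the polarization upgrading \eqref{parseval} to the displayed orthogonality identity. One has to keep track of the antilinearity of $V_\phi\psi$ in the window $\phi$, but no extra LCA harmonic analysis beyond what Section \ref{sec 2} already provides is needed; after that, the argument is pure Hilbert-space bookkeeping via the equality case of Cauchy--Schwarz.
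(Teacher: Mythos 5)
Your proof is correct: the Moyal-type orthogonality relation does follow from \eqref{parseval} by polarization in each pair of arguments, and the Cauchy--Schwarz equality case then yields $\psi=c\phi$ with $|c|=1$. This is essentially the same argument as in the sources the paper cites for this proposition (\cite[Proposition 2.3]{nicola2023maximally}, and \cite[Section 4.2]{grochenig_book} for $A=\R^d$), so no further comparison is needed.
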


\subsection{Characters of second degree} The notion of {\it character of second degree} was introduced explicitly in \cite{weil64} and found many applications in harmonic analysis \cite{reiter89}, number theory \cite{igusa2012theta} and quantum information theory, where these characters are generally referred to as {\it quadratic forms}, see, e.g., \cite{nest2011}.   
\begin{definition} \label{def charac}
Consider a symmetric homomorphism $\beta\in{\rm Sym}(A)$ (Definition \ref{def sym}). 
A continuous function $h:A\to U(1)$ is a character of second degree of $A$, associated with $\beta$, if 
\[
h(x+y)=h(x)h(y)\,\beta(x)(y)\qquad x,y\in A. 
\]
The set of characters of second degree of $A$, endowed with the product defined by $(h_1h_2)(x)=h_1(x)h_2(x)$, $x\in A$, forms an Abelian group denoted by ${\rm Ch_2}(A)$. 
\end{definition}
We have the following important existence and uniqueness result.

\begin{theorem}\label{teo charac} For every symmetric homomorphism $\beta\in{\rm Sym}(A)$ there exists a corresponding character of second degree, and two characters of second degree associated with the same $\beta$ differ by the multiplication by a character of $A$. In other terms, we have the following short exact sequence of Abelian groups:
\[
0\to \widehat{A}\to {\rm Ch}_2(A)\to {\rm Sym}(A)\to0.
\]

\end{theorem}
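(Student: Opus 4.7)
I would verify exactness of the sequence
\[
0\to \widehat{A}\to {\rm Ch}_2(A)\to {\rm Sym}(A)\to 0
\]
at each of its positions. First I dispense with the easier parts. Define $\Phi\colon {\rm Ch}_2(A)\to {\rm Sym}(A)$ by sending $h$ to its associated $\beta$. Uniqueness of $\beta$ follows from the relation $\beta(x)(y)=h(x+y)\overline{h(x)h(y)}$, so $\Phi$ is well-defined. A direct computation shows it is a group homomorphism: if $h_i$ is associated with $\beta_i$ for $i=1,2$, then
\[
(h_1h_2)(x+y)=h_1(x)h_1(y)h_2(x)h_2(y)\,\beta_1(x)(y)\beta_2(x)(y),
\]
so $h_1h_2$ is associated with $\beta_1\beta_2$. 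The kernel of $\Phi$ is exactly $\widehat{A}$: $\beta=0$ iff $h(x+y)=h(x)h(y)$, i.e.\ $h$ is a character of $A$, and conversely every character trivially lies in ${\rm Ch}_2(A)$ with $\beta=0$. This settles exactness at $\widehat{A}$ and ${\rm Ch}_2(A)$.

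The core difficulty is the surjectivity of $\Phi$. I would recast it as a splitting problem for an Abelian central extension. Given $\beta\in{\rm Sym}(A)$, form the set $A^\natural:=A\times U(1)$ with operation
\[
(x,u)\cdot(y,v):=(x+y,\,uv\,\beta(x)(y)).
\]
The homomorphism property of $\beta$ makes this associative, and the symmetry of $\beta$ makes $A^\natural$ Abelian. Endowed with the product topology, $A^\natural$ becomes an LCA group fitting into a central extension $0\to U(1)\to A^\natural\to A\to 0$. A continuous group-theoretic section $x\mapsto(x,h(x))$ is precisely the data of a character of second degree $h$ associated with $\beta$, since the homomorphism condition on the section reads $h(x+y)=h(x)h(y)\beta(x)(y)$. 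Hence surjectivity of $\Phi$ is equivalent to continuous splittability of every such extension.

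To prove splittability I would invoke the structure theorem for LCA groups, reducing first to a compactly generated open subgroup $A_0\subset A$ of the form $\R^n\times\bZ^m\times C$ with $C$ compact. On the $\R^n$-factor one takes $h(x)=\exp(\pi i\,x^\top Bx)$, where $B$ is the symmetric real matrix representing the relevant block of $\beta$. On $\bZ^m$ and on the cyclic summands $\bZ/N\bZ$ of $C$ one writes down explicit quadratic-exponential formulas, with a careful parity analysis to handle $2$-torsion when $N$ is even. These partial constructions combine to give $h$ on $A_0$, and one then extends to $A$ using discreteness of $A/A_0$ together with the divisibility (hence injectivity in the LCA category) of $U(1)$. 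The main obstacle is precisely this parity-and-gluing step, for which I would rely on the classical treatment of \cite{weil64}.
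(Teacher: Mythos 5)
The routine part of your argument (that $\Phi$ is a well-defined homomorphism with kernel exactly $\widehat A$, i.e.\ exactness at $\widehat A$ and at ${\rm Ch}_2(A)$) is correct and is exactly what the paper regards as the easy half, attributed to a remark of \cite{weil64}. The entire weight of the theorem is the surjectivity of $\Phi$, i.e.\ the existence of a character of second degree for every $\beta\in{\rm Sym}(A)$; the paper does not reprove this but cites \cite[Lemma 6]{igusa68} (its explicit construction is only for finite groups). Your reformulation of existence as the continuous splitting of the Abelian extension $0\to U(1)\to A^\natural\to A\to 0$, with $A^\natural=A\times U(1)$ twisted by $\beta$, is correct and a legitimate way to attack it.

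However, the sketch of the splitting contains a genuine gap. First, a compact Abelian group $C$ is in general \emph{not} a direct sum of cyclic groups (consider $U(1)$ itself, the $p$-adic integers $\mathbb{Z}_p$, solenoids, or infinite products), so ``the cyclic summands $\mathbb{Z}/N\mathbb{Z}$ of $C$'' on which you propose explicit quadratic-exponential formulas do not exist; this compact factor is precisely where the general existence statement is nontrivial. (A repair is available but absent from your sketch: for compact $C$ the continuity of $\beta:C\to\widehat C$ forces $\ker\beta$ to be open and $\beta(C)$ finite, and by symmetry the bicharacter $\beta(x)(y)$ descends to the finite quotient $C/\ker\beta$, reducing the compact case to the finite one.) Second, the inference ``divisibility (hence injectivity in the LCA category) of $U(1)$'' is not valid as stated: divisibility yields injectivity only for abstract Abelian groups, while injectivity of $U(1)$ in the category of LCA groups is a theorem resting on the character extension theorem (cf.\ \cite[Corollary 24.12]{hewitt63}) and duality. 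Note also that if you do grant that injectivity, the structure-theoretic detour becomes superfluous: extending the identity character of the central copy of $U(1)\subset A^\natural$ to a character of $A^\natural$ produces a retraction whose kernel gives a continuous splitting at once. As written, the hard step is ultimately deferred to ``the classical treatment'' of \cite{weil64}, so the proposal does not constitute an independent proof of existence and, in addition, rests on the incorrect structural claim about $C$.
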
 
In fact, the existence is nontrivial (in the present generality) and was first proved in \cite[Lemma 6]{igusa68}, while the uniqueness (up to multiplication by a character) is an easy consequence of the definition, as already observed in \cite[page 146]{weil64}.  

The case of finite groups is particularly important in quantum information theory. The corresponding characters of second degree can be described explicitly as follows. 
\subsubsection*{Explicit construction for finite Abelian groups} 
Consider a finite Abelian group $A$ (endowed with the discrete topology). The following construction of characters of second degree, associated with a given symmetric homomorphism $\beta\in{\rm Sym}(A)$, is inspired by that given in \cite{reiter89} in the case where $A$ is a vector space over a local field. We use the fact that any such group is isomorphic to the direct sum of a finite number of cyclic groups. 

{\it Case $A=\mathbb{Z}_d:=\mathbb{Z}/d\mathbb{Z}$}.  We identify $\widehat{\mathbb{Z}_d}$ with $\bZ_d$ via the pairing $(x,y)\mapsto \exp(2\pi i xy/d)$. Then a (symmetric) homomorphism $\beta:\mathbb{Z}_d\to\widehat{\mathbb{Z}_d}$  is just the multiplication $x\mapsto p x$, for some $p\in \{0,\ldots,d-1\}$ (we emphasize that $p$ is here an integer). Then we consider the function 
\[
h(x)=\exp\Big(\pi i\, p\, \tilde{x}^2(d+1)/d\Big)\quad x=\tilde{x}+d\bZ\in \mathbb{Z}_d.
\]
It is easy to see that $h(x)$ is well defined, i.e. the result does not depend on the representative $\tilde{x}\in\bZ$ that we choose within its residue class $x\in \bZ_d$, and that it is indeed a character of second degree associated with $\beta$ (see e.g.\ \cite{feichtinger2,nicola2023uncertainty}).\par 
{\it Case $A=A_1\oplus...\oplus A_n$, with $A_j$ cyclic.} Each $x\in A$ can be uniquely decomposed as $x=x_1+\ldots+ x_n$, with $x_j\in A_j$. Given a symmetric homomorphism $\beta:A\to\hA$, consider the symmetric bicharacter $\beta(x,y)=\beta(x)(y)$, and let $\beta_j$ be its restriction to $A_j\times A_j$. Let $h_j$ be the corresponding character of second degree of $A_j$, as constructed above. Then it is easy to check by induction on $n$ that the function 
\[
h(x):=\prod_{j=1}^n h_j(x_j) \prod_{1\leq j<k\leq n} \beta(x_j,x_k)
\]
is a character of second degree associated with $\beta$. 

Hence, we have constructed characters of second degree associated with any given $\beta\in{\rm Sym}(A)$. By Theorem \ref{teo charac} every character of second degree is obtained from these by multiplying by a character of $A$. 

We observe that the characters of second degree of any finite Abelian group were already constructed in \cite{kaiblinger09}, relying on the Smith normal form of $A$, namely the fact that $A$ is isomorphic to the direct sum $\bZ_{d_1}\oplus \bZ_{d_1}\oplus\ldots\oplus \bZ_{d_n}$ for some $d_1,d_2,\ldots,d_n$ with $d_1|d_2|...|d_n$. The above construction offers several advantages over the one in \cite{kaiblinger09}: it does not require knowing that particular normal form, but rather just the knowledge of an isomorphism of $A$ into a direct sum of cyclic groups, and it avoids any matrix manipulation in modular arithmetic.
\begin{remark}\label{rem order}
If $x\in A$, $\xi\in\hA$, $n\in\bZ$, we have $\xi(nx)=\xi(x)^n$. Hence, if $A$ is finite, say of cardinality $N$, every character $\xi\in\hA$ takes values in the multiplicative subgroup of $U(1)$ generated by $\exp(2\pi i/N)$. Similarly, by the above construction (or by a direct inspection of the very definition, cf. [Lemma 5 (f)]\cite{nest2011}) we see that any $h\in{\rm Ch}_2(A)$ takes values in the multiplicative subgroup generated by $\zeta=\exp(\pi i/N)$. This fact is related to the definition of the Pauli group \eqref{eq pauli}; see Proposition \ref{pro struttura sottogruppo}. 
\end{remark}
\subsubsection*{Connection with the Clifford group} Consider a character $h\in{\rm Ch}_2(A)$ of second degree of $A$, associated with a symmetric homomorphism $\beta\in{\rm Sym}(A)$ (Definition \ref{def charac}). Let $C_h: L^2(A)\to L^2(A)$ be the unitary operator given by the pointwise multiplication by $h$, that is, $C_h \psi= h\psi$, for $\psi\in L^2(A)$. Then a direct inspection shows that 
\[
C_h \pi(z) C_h^\dagger= \overline{h(-x)}\pi(S z),\quad z=(x,\xi)\in\AhA,\quad  S:=\begin{pmatrix}I&0\\ \beta&I\end{pmatrix},
\]
where the matrix $S$ is regarded as a homomorphism $\AhA\to \AhA$. We notice that $S$ is a symplectic matrix, that is, $\sigma(Sz,Sw)=\sigma(z,w)$ for every $z,w\in\AhA$, because $\beta$ is symmetric. Hence $C_h$ is an element of the so-called \textit{Clifford} (alias \textit{metaplectic}) group of $A$. Indeed, it intertwines the projective unitary representations $\pi(z)$ and $\pi(Sz)$ (see \cite{reiter89,weil64}).

\subsection{Subcharacters of second degree}
The notion of subcharacter of second degree was introduced in \cite{nicola2023maximally} and will play a key role in the following.
\begin{definition}\label{def subcharac second degree}
A function $h_0:A\to\mathbb{C}$ is called a subcharacter of second degree if there exists a compact open subgroup $H\subset A$ such that $h_0(x)=0$ for $x\in A\setminus H$ and the restriction of $h_0$ to $H$ is a character of second degree of $H$. 

%If $h_0|_{H}$ is associated with the continuous symmetric homomophism $\beta\in{\rm Sym}(H)$ (Definition \ref{def charac}), we say that $h_0$ is associated with the pair $(H,\beta)$. \color{red}{SERVE?}
\end{definition}
This terminology was inspired by that of \textit{subcharacter}, that is a function $A\to\mathbb{C}$ that vanishes outside some compact open subgroup $H$ and whose restriction to $H$ is a character of $H$; see \cite[Definition 43.3]{hewitt70}. 

\begin{definition}\label{def gamma12}

Let $h_0$ be a subcharacter of second degree of $A$; hence $H:=\{x\in A:\  h_0(x)\not=0\}$ is an open compact subgroup of $A$ and $h:=h_0|_{H}\in {\rm Ch}_2(H)$. Consider the symmetric homomorphism $\beta\in{\rm Sym}(H)$ associated with $h$ according to Definition \ref{def charac}. Let $K\subset\AhA$ be the unique maximal compact open isotropic subgroup corresponding to the pair $(H,\beta)$, according to Theorem \ref{teo iso sub}. Then we will say that the subgroup $K$ is associated with the subcharacter $h_0$.  
\end{definition}

The following result from \cite[Proposition 4.4, Theorem 4.5, Theorem 5.2] {nicola2023maximally}, provides an analytic counterpart of the above algebraic constructions.   
We denote by $\chi_K$ the indicator function of a subset $K\subset\AhA$.
\begin{theorem}\label{teo nic}\ 
\begin{itemize}
\item[(i)] Let $h_0$ be a subcharacter of second degree of $A$ associated with a maximal compact open isotropic subgroup $K\subset\AhA$ in the sense of Definition \ref{def gamma12}. Then
\[
V_{h_0} h_0(x,\xi)=|H| \overline{h_0(-x)}\chi_K(x,\xi)\qquad (x,\xi)\in\AhA.
\]
Moreover, the function $|H|^{-1} V_{h_0}h_0$, restricted to $K$, is a character of second degree of $K$ associated with the symmetric homomorphism $\beta'\in{\rm Sym}(K)$ given by $\beta'(x,\xi)(y,\eta)=\overline{\eta(x)}$, that is, for $(x,\xi),(y,\eta)\in K$, 
\[ V_{h_0}h_0(x+y,\xi+\eta)= |H|^{-1} V_{h_0}h_0(x,\xi)V_{h_0}h_0(y,\eta)\overline{\eta(x)}. 
\]

\item[(ii)]
Let $\phi\in L^2(A)\setminus\{0\}$. Then $K:=\{z\in\AhA:\  V_\phi \phi(z)\not=0\}$ has measure $|K|\geq 1$. Moreover $|K|=1$ if and only if 
\[
\phi(x)=c\, h_0(x-y)
\] 
for some $c\in\mathbb{C}\setminus\{0\}$, and some subcharacter $h_0$ of second degree. In that case, $K$ is the maximal compact open isotropic subgroup associated with $h_0$ (Definition \ref{def gamma12}) and the function $\|\phi\|^{-2}_{L^2(A)}V_\phi \phi$, restricted to $K$, is a character of second degree of $K$ associated with the homomorphism $\beta'\in{\rm Sym}(K)$ given in (i). 
\item[(iii)] Let 
$\phi,\psi\in L^2(A)\setminus\{0\}$. Then $\mathcal{K}:=\{z\in\AhA:\  V_\phi \psi(z)\not=0\}$ has measure $|\mathcal{K}|\geq 1$. Moreover $|\mathcal{K}|= 1$ if and only if 
\begin{equation}
\phi(x)=c\, h_0(x-y)\qquad \psi(x)=c'\,\pi(x',\xi')\phi(x) \qquad x\in A
\end{equation}
where $c,c'\in \mathbb{C}\setminus\{0\}$, $y,x'\in A$, $\xi'\in\hA$ and $h_0$ is a subcharacter of second degree of $A$. In this case, 
$\mathcal{K}$ is a coset in $\AhA$ of the maximal compact open isotropic subgroup $K$ associated with $h_0$ (Definition \ref{def gamma12}). 
\end{itemize}
\end{theorem}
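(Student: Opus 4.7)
The plan tackles the three parts in order, with (i) a direct computation, (ii) combining (i) with Propositions \ref{pro max sambiguity}, \ref{pro max compact open sub}, \ref{pro ambiguity phase} and Theorems \ref{teo iso sub}, \ref{teo charac}, and (iii) reducing to (ii) via a Cauchy--Schwarz equality argument. For (i), I would unfold
\[
V_{h_0}h_0(x,\xi)=\int_A \overline{\xi(y)\,h_0(y-x)}\,h_0(y)\,dy.
\]
The integrand vanishes unless $y,\,y-x\in H$, which forces $x\in H$. Assuming $x\in H$, the quadratic identity $h(y)=h(y-x)h(x)\beta(x)(y-x)$, together with $\beta(x)(-x)=\overline{\beta(x)(x)}$ and the consequent relation $\overline{h(-x)}=h(x)\overline{\beta(x)(x)}$, yields $\overline{h(y-x)}h(y)=\overline{h(-x)}\beta(x)(y)$, whence
\[
V_{h_0}h_0(x,\xi)=\overline{h_0(-x)}\int_H \beta(x)(y)\overline{\xi(y)}\,dy.
\]
The remaining $H$-integral equals $|H|$ precisely when $\xi|_H=\beta(x)$, giving the first formula together with the description \eqref{eq K} of $K$ supplied by Theorem \ref{teo iso sub}. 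The quadratic law on $K$ then reduces to $\overline{h(-x-y)}=\overline{h(-x)h(-y)}\,\overline{\beta(x)(y)}$, after converting $\beta(x)(y)$ into $\overline{\eta(x)}=\beta'(z)(w)$ via the defining condition $\eta|_H=\beta(y)$ and the symmetry $\beta(y)(x)=\beta(x)(y)$.

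For (ii), after normalizing $\|\phi\|_{L^2(A)}=1$, Parseval \eqref{parseval} and \eqref{eq cs} give
\[
1=\int_K|V_\phi\phi(z)|^2\,dz\le\|V_\phi\phi\|_\infty^2\,|K|\le|K|,
\]
so $|K|\ge 1$; equality forces $|V_\phi\phi|=\chi_K$ a.e., which by Proposition \ref{pro max sambiguity} makes $K$ a compact isotropic subgroup. Combined with $|K|=1$, Proposition \ref{pro max compact open sub} promotes $K$ to a maximal compact open isotropic subgroup, with associated pair $(H,\beta)$ via Theorem \ref{teo iso sub}. To reconstruct $\phi$, I would use the Cauchy--Schwarz equality case: for each $z\in K$, $|\langle\pi(z)\phi,\phi\rangle|=1$ forces $\pi(z)\phi=\overline{V_\phi\phi(z)}\phi$. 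Composing two such shifts via \eqref{eq tf product} yields, for $z=(x,\xi),\,w=(y,\eta)\in K$,
\[
V_\phi\phi(z+w)=\overline{\eta(x)}\,V_\phi\phi(z)\,V_\phi\phi(w),
\]
so $V_\phi\phi|_K\in{\rm Ch}_2(K)$ associated with the same $\beta'$ as in (i). Picking any subcharacter $h_0$ associated with $K$ and setting $\tilde h_0:=h_0/|H|^{1/2}$, Theorem \ref{teo charac} produces a character $\chi\in\widehat{K}$ with $V_\phi\phi|_K=\chi\cdot V_{\tilde h_0}\tilde h_0|_K$. Since $K$ is open, $\chi$ extends to a character $(y,\eta)\mapsto\chi_0(y)\overline{\eta(x')}$ of $\AhA$, which by \eqref{eq 2.4} is precisely the covariance factor for $\pi(x',\chi_0)\tilde h_0$; hence $V_\phi\phi=V_{\phi'}\phi'$ on $\AhA$ with $\phi':=\pi(x',\chi_0)\tilde h_0$, and Proposition \ref{pro ambiguity phase} yields $\phi=e^{i\theta}\phi'$. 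Expanding $\pi(x',\chi_0)\tilde h_0$ and using that $\chi_0|_H\cdot h\in{\rm Ch}_2(H)$ rewrites this in the form $c\,h_0'(\cdot-y)$ with $y=x'$ and $h_0'$ a subcharacter; the converse direction is immediate from (i) and \eqref{eq 2.4}.

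For (iii), the same Parseval-plus-supremum argument gives $|\mathcal K|\ge 1$. At equality $|V_\phi\psi|=\|\phi\|\|\psi\|\,\chi_{\mathcal K}$; fixing any $z_0\in\mathcal K$, the Cauchy--Schwarz equality in $|\langle\pi(z_0)\phi,\psi\rangle|=\|\phi\|\|\psi\|$ forces $\psi=c\,\pi(z_0)\phi$ for a constant $c$. The Weyl relation \eqref{eq tf product} then expresses $V_\phi\psi(z)$ as $V_\phi\phi(z-z_0)$ times a nonvanishing phase, so $\mathcal K=z_0+K_\phi$ where $K_\phi$ is the support of $V_\phi\phi$; hence $|K_\phi|=1$, and (ii) applied to $\phi$ delivers the stated representation. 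The main obstacle I anticipate is the character-of-second-degree identification of $V_\phi\phi|_K$ in (ii): the Weyl commutation bookkeeping must be executed without sign or conjugation errors under the conventions of Definition \ref{def coherent}, since only the correct quadratic law $V_\phi\phi(z+w)=\overline{\eta(x)}V_\phi\phi(z)V_\phi\phi(w)$ unlocks Theorem \ref{teo charac} and hence pins down $\phi$ up to a phase-space shift.
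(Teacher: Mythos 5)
Your argument is essentially correct, but note that the paper itself does not prove Theorem \ref{teo nic}: it is imported verbatim from \cite[Proposition 4.4, Theorem 4.5, Theorem 5.2]{nicola2023maximally}, so there is no internal proof to compare against. What you have produced is a self-contained derivation from the other tools the paper does quote (Propositions \ref{pro max compact open sub}, \ref{pro max sambiguity}, \ref{pro ambiguity phase}, Theorems \ref{teo iso sub} and \ref{teo charac}), and the chain of reasoning is sound: the computation in (i) correctly uses $\overline{h(-x)}=h(x)\overline{\beta(x)(x)}$ and the orthogonality of characters on the compact group $H$; in (ii) the Cauchy--Schwarz equality case gives $\pi(z)\phi=\overline{V_\phi\phi(z)}\phi$ for $z\in K$, the composition law \eqref{eq tf product} yields the quadratic identity for $V_\phi\phi|_K$ with the correct symmetric homomorphism $\beta'$ (symmetric precisely because $K$ is isotropic), and the comparison with $V_{\tilde h_0}\tilde h_0$ via the uniqueness part of Theorem \ref{teo charac}, the extension of the resulting character of $K$ to $\AhA\,\widehat{}\simeq\hA\times A$, and Proposition \ref{pro ambiguity phase} pin down $\phi$ up to a phase-space shift; (iii) reduces correctly to (ii). Two small points deserve tightening. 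First, in (ii) and (iii) the equality case of Parseval only gives $|V_\phi\phi|=1$ \emph{almost everywhere} on $K$; to apply Proposition \ref{pro max sambiguity} and to pick a genuine point $z_0\in\mathcal K$ with $|V_\phi\psi(z_0)|=\|\phi\|\|\psi\|$ you should upgrade this to everywhere on $K$, which follows since the full-measure set is dense in the open set $K$ and $V_\phi\phi$ is continuous. Second, the converse implication in (iii) (that the stated form of $\phi,\psi$ forces $|\mathcal K|=1$) is asserted only implicitly; it follows from (i) together with \eqref{eq 2.3}--\eqref{eq 2.4} exactly as in your converse for (ii), and is worth one explicit sentence.
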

\begin{remark}\label{rem 3.5}
    In Theorem \ref{teo nic}, given a function $\phi$ in the form \eqref{eq 1.7}, we have mentioned the maximal compact isotropic subgroup $K$ associated with $h_0$ in the sense of Definition \ref{def gamma12}. Although $h_0$ is not uniquely determined by $\phi$, it follows from the results in that theorem that $K$ is, in fact, uniquely determined by $\phi$; the same remark applies to the subgroup $K$ in \eqref{eq K}. See also Lemma \ref{lemma lin ind}. 
\end{remark}
\begin{remark}
    The results in Theorem \ref{teo nic} (ii) and (iii) were formulated in  \cite[Theorems 4.5 and 5.2]{nicola2023maximally}  in a sligthly different (but equivalent) way. Precisely, instead of $\phi(x)=c\, h_0(x-y)$, there one finds that $\phi=c\, \pi(y,\eta)h_0$ for some $(y,\eta)\in \AhA$. This amounts to the same thing, because $\pi(y,\eta)h_0(x)=c'\,h'_0(x-y)$ where $c'\in\mathbb{C}\setminus\{0\}$ and $h'(x):=\eta(x)h_0(x)$ is still a subcharacter of $A$ associated (in the sense of Definition \ref{def gamma12}) with the same maximal compact open isotropic subgroup $K\subset\AhA$ as $h_0$, by Theorem \ref{teo charac}.
\end{remark}
\begin{remark}\label{rem agg0}
    The above results are presented in terms of the phase space shifts $\pi(x,\xi)$ (cf. \eqref{eq tfshift}) and the coherent state transform $V_\phi \psi$ (Definition \ref{def coherent}), hence in terms of wave functions in $L^2(A)$. One can trivially translate everything in terms of state vectors in $\cH$. Indeed, a unitary operator $U=\cH\to L^2(A)$ is given, and the Weyl-Heisenberg operators $W(x,\xi)$ are defined by $W(x,\xi)=U^\dagger \pi(x,\xi) U$, cf. \eqref{eq weylop}. Hence, for example,
    \[
\langle \psi|W(x,\xi)|\phi\rangle=\langle \psi|\pi(x,\xi)\phi\rangle_{L^2(A)} =\overline{V_\phi \psi(x,\xi)}.
    \]
    In the following, when necessary, we will freely use the above results rephrased in terms of state vectors and Weyl-Heisenberg operators. 
\end{remark}
%\begin{remark}\label{rem agg}Consider a wave function of the type $\phi(y)=c\pi(x,\xi) h_0(y)$ for some $c\in \mathbb{C}\setminus\{0\}$, $x\in A$, $\xi\in\hA$ and some subcharacter $h_0$ of second degree of $A$. Then we can write $\phi(y)= c' h'_0(y-x)$, for some $c'\in \mathbb{C}\setminus\{0\}$ (depending on $x$ and $\xi$) and with $h'(y)=\xi(y) h_0(y)$. By Theorem \ref{teo charac}, $h'$ is still a character of second degree, with $\gamma_2(h'_0)=\gamma_2(h_0)$ (with the notation introduced before Definition \ref{def gamma12}). As a consequence, the wave functions $\phi(x)$ and $\psi(x)$ that appear in Theorem \ref{teo nic} (iii) give rise to $S$-states $\ket{\phi}$ and $\ket{\psi}$ (Definition \ref{def state}). Moreover \[   \ket{\psi}=c'' W(x'-x,\xi'-\xi)\ket{\phi}. \]  and by the covariance property \eqref{} we see that  \[ V_{\phi}\phi=V_\psi \psi =\chi_K \]   where $\gamma_2(h_0)=(H,\beta)$. \end{remark}

\section{Stabilizer states}
In this section we describe the stabilizer states in terms of their wave function, as explained in the introduction. In particular, we prove Theorem \ref{teo stab1}. We begin with a number of remarks and a result that clarify the choice of the class $\mathfrak{G}$ in Definition \ref{def gsubgroups}. 

\begin{remark}\label{rem osservazioni su mainteo} \ 

    \begin{itemize}
        \item[1)] If $\mathcal{G}\subset\mathbb{H}_A$ (cf. \eqref{eq heisen}) is a subgroup such that all elements of $\mathcal{G}$ admit a common eigenvector, with eigenvalue $1$, then it is clear that the homomorphism in \eqref{eq 5g} must be injective. We will see that if, moreover, $\mathcal{G}$ is locally compact, that is, closed in $\mathbb{H}_A$, then $\mathcal{G}$ is necessarily compact (see Proposition \ref{pro no loss generality}). 
        \item[2)] The requirement, in Definition \ref{def gsubgroups}, that the image of the homomorphism \eqref{eq 5g} has measure $1$, should be interpreted as a maximality condition (see Proposition \ref{pro struttura sottogruppo}). When $A$ is finite, say of cardinality $N$, that condition is equivalent to stating that $\mathcal{G}$ has cardinality $N$. Indeed, we can consider the counting measure as the Haar measure on $A$. Then the corresponding Haar measure on $\hA$ (for which the Fourier inversion formula holds with constant $1$) will be the counting measure multiplied by $N^{-1}$.
        \item[3)] Since a compact subgroup of an LCA group has a positive measure if only if it is open, we see that $\mathfrak{G}$ is nonempty only if $A$ contains a compact open subgroup (we do not assume this condition explicitly in Theorem \ref{rem osservazioni su mainteo}, because the statement is vacuously true when it is not satisfied). Interestingly, we will show that, in a sense, there is no loss of generality in assuming that $A$ contains a compact open subgroup (see Proposition \ref{pro no loss generality}). 
        \item[4)] The following are examples of groups that contain a compact open subgroup: finite groups, or more broadly, compact or discrete groups, finite-dimensional vector spaces over the $p$-adic field $\mathbb{Q}_p$ (a case of particular interest in harmonic analysis), and products of a finite number of such groups. 
    \end{itemize}
\end{remark}
In the following, we will make use of the coherent state transform defined in Definition \ref{def coherent} (see also Remark \ref{rem agg0}), and in particular of the relationship
\begin{equation}\label{eq ambiguity}
V_\phi \phi(z)=\langle  \pi(z)\phi|\phi\rangle_{L^2(A)}=\overline{\langle \phi|W(z)|\phi\rangle}\qquad z\in \AhA.
\end{equation}
It is well known that, if $A$ is any LCA group, there exist an integer $d\geq0$ and an LCA group $A_0$, containing a compact and open subgroup, such that $A\simeq \R^d\times A_0$ (topological isomorphism); see, e.g., \cite[Theorem 24.30]{hewitt63}. 
\begin{proposition}\label{pro no loss generality} 

Let $\mathcal{G}\subset\mathbb{H}_A$ be a closed subgroup such that there exists a common eigenvector of all the elements of $\mathcal{G}$. Then $\mathcal{G}$ is Abelian and compact.

Moreover, if $A\simeq \R^d\times A_0$ (topological isomorphism), where $d\geq0$ is an integer and $A_0$ is an LCA group, then, regarding $\mathbb{H}_{A_0}$ as a subgroup of $\mathbb{H}_{A}$, we have  $\mathcal{G}\subset\mathbb{H}_{A_0}$, and the image of $\mathcal{G}$ in $A_0\times \widehat{A_0}$ under the canonical projection $\mathbb{H}_{A_0}\simeq U(1)\times A_0\times \widehat{A_0}\to A_0\times \widehat{A_0}$ is a compact isotropic subgroup $K\subset A_0\times\widehat{A_0}$ of measure $|K|\leq 1$ (where the Haar measure of $A_0\times \widehat{A_0}$ is understood). 

\end{proposition}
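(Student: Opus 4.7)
The plan is to first control the image $K \subset \AhA$ of $\mathcal{G}$ under the projection $\mathbb{H}_A \to \AhA$, $uW(z)\mapsto z$ (which, via \eqref{eq tf product}, is a continuous group homomorphism with kernel equal to the central $U(1)$), by exploiting the coherent state transform of the common eigenvector, and then to transfer the information back to $\mathcal{G}$ itself through the ambient identification $\mathbb{H}_A \simeq U(1)\times\AhA$.

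To execute this, I would fix a common unit eigenvector $\ket{\psi}$ of $\mathcal{G}$. For each $g = uW(z) \in \mathcal{G}$ one has $g\ket{\psi} = \lambda_g\ket{\psi}$ with $|\lambda_g|=1$, so $|\langle\psi|g|\psi\rangle|=1$, and by \eqref{eq ambiguity} we get $|V_\psi\psi(z)|=1$ for every $z$ in the image $K$. Proposition \ref{pro max sambiguity} then shows that $\{z\in\AhA : |V_\psi\psi(z)|=1\}$ is a compact isotropic subgroup of $\AhA$, so $K$ (itself a subgroup, being the image of a group homomorphism) is isotropic and sits inside this compact set. Combining isotropy of $K$ with the commutation relation \eqref{eq 2.2} immediately yields $g_1 g_2 = g_2 g_1$ for every $g_1,g_2\in\mathcal{G}$, establishing the Abelian property. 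For compactness, I would use the inclusion $\mathcal{G}\subset U(1)\times\{z : |V_\psi\psi(z)|=1\}$ under $\mathbb{H}_A\simeq U(1)\times\AhA$, a compact set; since $\mathcal{G}$ is closed in $\mathbb{H}_A$, it is compact, and consequently $K$ is also compact, as the continuous image of $\mathcal{G}$.

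For the moreover part, under $A\simeq \R^d\times A_0$ one has $\AhA\simeq \R^{2d}\times(A_0\times\widehat{A_0})$, and the projection of the compact subgroup $K$ onto the factor $\R^{2d}$ is a compact subgroup of $\R^{2d}$, hence trivial (the only compact subgroup of a real vector space is $\{0\}$). This forces $K\subset A_0\times\widehat{A_0}$ and therefore $\mathcal{G}\subset U(1)\times(A_0\times\widehat{A_0}) = \mathbb{H}_{A_0}$. For the bound $|K|\leq 1$: if $|K|=0$ there is nothing to show, while if $|K|>0$, then $K$ being compact with positive Haar measure is automatically open (by the standard observation already recalled in Section \ref{sec 2}), hence a compact open isotropic subgroup of $A_0\times\widehat{A_0}$, and Proposition \ref{pro max compact open sub} applies to give $|K|\leq 1$.

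The main obstacle I anticipate is the ``lift'' from properties of the image $K$ in $\AhA$ to the corresponding properties of $\mathcal{G}$ itself: compactness of $K$ does not \emph{a priori} imply compactness of $\mathcal{G}$, since the projection has a potentially nontrivial fiber over each point of $K$. This is overcome precisely because the fiber is contained in the compact group $U(1)$, so the topological product structure $\mathbb{H}_A\simeq U(1)\times\AhA$ traps the closed subgroup $\mathcal{G}$ inside the compact rectangle $U(1)\times\{|V_\psi\psi|=1\}$.
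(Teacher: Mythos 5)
Your proposal is correct and follows essentially the same route as the paper: use the common eigenvector to show $|V_\psi\psi|=1$ on the image $K$, invoke Proposition \ref{pro max sambiguity} to get a compact isotropic subgroup containing $K$ (hence commutativity via \eqref{eq 2.2} and compactness of $\mathcal{G}$ inside $U(1)\times\{|V_\psi\psi|=1\}$), then kill the $\R^{2d}$-component by triviality of compact subgroups of a vector space, and conclude $|K|\le 1$ via openness of positive-measure compact subgroups and Proposition \ref{pro max compact open sub}. The only (immaterial) difference is the order in which compactness of $K$ and of $\mathcal{G}$ is established.
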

\begin{proof}

    Let $\ket{\psi}$ be a common normalized eigenvector of the elements of $\mathcal{G}$ (hence with eigenvalues of modulus $1$) and let $\psi(x)$ be its wave function. Let $K\subset\AhA$ be the image of $\mathcal{G}$ in $\AhA$ under the canonical projection $\mathbb{H}_A\simeq U(1)\times\AhA\to\AhA$. Since $\mathcal{G}$ is closed and $U(1)$ is compact, $K$ is closed. In addition, by \eqref{eq ambiguity},
    \[
   |V_\psi\psi(z)|=
    |\langle \pi(z)\psi|\psi\rangle_{L^2(A)} 
    |=
    |\langle \psi|W(z)|\psi\rangle |
=1\qquad z\in K.
    \]
    Hence 
    \[
    K\subset S:=\{z\in \AhA:\ |V_\psi\psi(z)|=1\}.
    \]
   By Proposition \ref{pro max sambiguity}, $S$ is a compact isotropic subgroup of $\AhA$. This implies that $K$ is compact and isotropic as well and, therefore, $\mathcal{G}$ is Abelian (cf. \eqref{eq weylop} and \eqref{eq 2.2}). In addition, $\mathcal{G}$ is compact, because, as assumed, it is a closed subset of $U(1)\times\AhA$ and is contained in the compact subset $U(1)\times K$.

   Since $\mathcal{G}$ is compact, with the identification $\mathbb{H}_A\simeq U(1)\times\AhA\simeq U(1)\times \R^d\times A_0\times \widehat{\R^d}\times \widehat{A_0}$, we see that $\mathcal{G}\subset U(1)\times \{0\}\times A_0\times \{0\}\times \widehat{A_0}\simeq\mathbb{H}_{A_0}$. That is,  $K
   \subset \{0\}\times A_0\times\{0\}\times \widehat{A_0}$.  
    
    Finally, regarded as subgroup of $A_0\times \widehat{A_0}$, $K$ is still compact and isotropic. Moreover, if $|K|>0$ then $K$ is also open in $A_0$ (as already observed, a compact subgroup of an LCA group has a positive measure if and only if it is open) and therefore, by Proposition \ref{pro max compact open sub},  $|K|\leq 1$. In all cases, we have $|K|\leq 1$. 
\end{proof}

\begin{proposition}\label{pro struttura sottogruppo}
A subset $\mathcal{G}\subset \mathbb{H}_A$ belongs to  $\mathfrak{G}$ (Definition \ref{def gsubgroups}) if and only if   
\begin{equation}\label{eq aw}
\mathcal{G}=\{\alpha(z)W(z):\ z\in K\},
\end{equation} where

\begin{itemize} \item[(i)] $K\subset\AhA$ is a maximal compact open isotropic subgroup;  
\item[(ii)] $\alpha\in{\rm Ch}_2(K)$ is associated (in the sense of Definition \ref{def charac}) with the symmetric homomorphism $\beta'\in{\rm Sym}(K)$ given by 
\[
\beta'(x,\xi)(y,\eta)=\overline{\eta(x)} \qquad (x,\xi),\,(y,\eta)\in K,
\]
that is, $\alpha:K\to U(1)$ is continuous and satisfies  
\begin{equation}\label{eq cond alpha}
\alpha(x+y,\xi+\eta)=\alpha(x,\xi)\alpha(y,\eta)\overline{\eta(x)}\qquad (x,\xi),\,(y,\eta)\in K.
\end{equation}
\end{itemize} 
 Moreover, if $A$ is finite, every subgroup $\mathcal{G}\in\mathfrak{G}$ is in fact a subgroup of the Pauli group $\mathbb{P}_A$ (cf. \eqref{eq pauli}). 
\end{proposition}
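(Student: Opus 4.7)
The plan is to prove the characterization by verifying each direction from the structural constraints imposed on $\mathcal{G}$, using the commutation relations of the Weyl-Heisenberg operators and Proposition \ref{pro max compact open sub}, and then deducing the Pauli-group containment from the value-group of quadratic characters recorded in Remark \ref{rem order}.

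\textbf{Forward direction.} Assume $\mathcal{G}\in\mathfrak{G}$. Since the homomorphism \eqref{eq 5g} is injective and takes values in the Abelian group $\AhA$, $\mathcal{G}$ is Abelian, and its image $K\subset\AhA$ is a subgroup. Because $\mathcal{G}$ is compact and the projection continuous, $K$ is compact; since $|K|=1>0$, the observation recalled in Section \ref{sec 2} shows $K$ is open. Next, using the commutation relation \eqref{eq 2.2}, which passes unchanged to the $W$-operators by \eqref{eq weylop}, the commutator of $\alpha(z)W(z)$ and $\alpha(w)W(w)$ equals $\sigma(z,w)\cdot I$; the Abelian property of $\mathcal{G}$ therefore forces $\sigma(z,w)=1$ for every $z,w\in K$, i.e.\ $K$ is isotropic. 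Then Proposition \ref{pro max compact open sub} together with $|K|=1$ gives that $K$ is maximal. The value $\alpha(z)\in U(1)$ is well defined by injectivity, giving \eqref{eq aw}. Applying \eqref{eq tf product} and \eqref{eq weylop} we obtain
\begin{equation*}
W(x,\xi)W(y,\eta)=\overline{\eta(x)}\,W(x+y,\xi+\eta),
\end{equation*}
so the closure of $\mathcal{G}$ under products forces
\begin{equation*}
\alpha(z+w)=\alpha(z)\alpha(w)\,\overline{\eta(x)},\qquad z=(x,\xi),\ w=(y,\eta)\in K,
\end{equation*}
which is exactly \eqref{eq cond alpha}. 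The map $(y,\eta)\mapsto\overline{\eta(x)}$ is visibly a continuous character of $K$, the assignment $z\mapsto\beta'(z)$ is a homomorphism $K\to\widehat{K}$, and its symmetry on $K$ follows from the isotropy of $K$; hence $\beta'\in\mathrm{Sym}(K)$. Finally, $\alpha$ is continuous because, as a closed subset of $U(1)\times\AhA$ projecting bijectively onto the compact Hausdorff space $K$, $\mathcal{G}$ is homeomorphic to $K$ via the projection, so $\alpha$ coincides with the composition of this homeomorphism's inverse with the $U(1)$-coordinate.

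\textbf{Converse direction.} Conversely, suppose $K$ and $\alpha$ satisfy (i)–(ii). The identity $\alpha(z)W(z)\cdot\alpha(w)W(w)=\alpha(z)\alpha(w)\overline{\eta(x)}W(z+w)$ combined with \eqref{eq cond alpha} shows $\mathcal{G}$ is closed under products. From \eqref{eq cond alpha} with $w=-z$ one deduces $\alpha(-z)=\overline{\alpha(z)\,\xi(x)}$ (using $\alpha(0)=1$, itself a consequence of \eqref{eq cond alpha} at $z=w=0$), which matches $W(z)^{-1}=\overline{\xi(x)}W(-z)$ and gives closure under inverses. Hence $\mathcal{G}$ is a subgroup; it is compact as the image of $K$ under the continuous map $z\mapsto\alpha(z)W(z)$ (continuity of $W$ is the strong-operator one, cf.\ the paragraph after \eqref{eq heisen}). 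The projection $\mathcal{G}\to\AhA$ is injective by construction with image $K$ of measure $1$, so $\mathcal{G}\in\mathfrak{G}$.

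\textbf{The Pauli statement.} If $A$ is finite of cardinality $N$, the counting-measure choice explained in Remark \ref{rem osservazioni su mainteo}(2) gives $|K|=\#K\cdot N^{-1}=1$, hence $\#K=N$. By Remark \ref{rem order} applied to the finite Abelian group $K$ of cardinality $N$, every element of $\mathrm{Ch}_2(K)$ takes values in the multiplicative subgroup of $U(1)$ generated by $\zeta=\exp(\pi i/N)$. Therefore $\alpha(z)\in\langle\zeta\rangle$ for every $z\in K$, which means $\alpha(z)W(z)\in\mathbb{P}_A$ for every $z\in K$, proving $\mathcal{G}\subset\mathbb{P}_A$.

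The main obstacle is the forward direction: one must simultaneously extract from the abstract assumptions on $\mathcal{G}$ both the symplectic-geometric condition (isotropy and maximality of $K$) and the quadratic-form condition on $\alpha$, while ensuring continuity. Everything else is largely bookkeeping with the commutation relations \eqref{eq tf product}--\eqref{eq 2.2}.
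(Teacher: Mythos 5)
Your proof is correct and takes essentially the same route as the paper: extract $K$ and $\alpha$ from the injectivity of \eqref{eq 5g}, use the commutation relations together with Proposition \ref{pro max compact open sub} to get isotropy, openness and maximality of $K$, read off \eqref{eq cond alpha} from the product law, obtain continuity of $\alpha$ from compactness of $\mathcal{G}$ (your continuous-bijection-onto-a-compact-Hausdorff-space argument is the same point the paper settles via the point-set closed graph theorem), and deduce the Pauli containment from Remark \ref{rem order} exactly as the paper does. The only differences are cosmetic (you spell out closure under inverses and the converse compactness via the map $z\mapsto\alpha(z)W(z)$ rather than the graph), so no further comment is needed.
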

\begin{proof}
Let $\mathcal{G}$ be as in the statement. Hence $\mathcal{G}$ has the form in \eqref{eq aw} for some maximal compact open isotropic subgroup $K\subset\AhA$ and some $\alpha\in{\rm Ch}_2(K)$ satisfying \eqref{eq cond alpha}. It is clear that the map \eqref{eq 5g} is injective. Moreover, since $K\subset\AhA$ is a subgroup and the function $\alpha:K\to U(1)$ satisfies \eqref{eq cond alpha}, using \eqref{eq weylop} and \eqref{eq tf product} we see that $\mathcal{G}$ is a subgroup of $\mathbb{H}_A$. 
 Also, recall that $\mathbb{H}_A$ is identified with $U(1)\times\AhA$, as a topological space, and $\mathcal{G}$ in \eqref{eq aw} is consequently  identified with the graph of the function $\alpha$, that is, $\{(\alpha(z),z):\ z\in K\}\subset U(1)\times \AhA$. Since $U(1)$ is Hausdorff and $\alpha$ is continuous by assumption, its graph is closed in $U(1)\times K$, and therefore is compact. 
To conclude that $\mathcal{G}\in\mathfrak{G}$ it remains to check that $|K|=1$. But this follows from the characterization of the maximal compact open isotropic subgroups of $\AhA$ in Proposition \ref{pro max compact open sub}. 

%Indeed, if there were an Abelian compact subgroup $\mathcal{G}\subset\mathbb{H}_A$, not containing multiples of the identity other than the identity itself, such that its projection on $\AhA$ is a compact open subgroup, such a projection would be a compact open isotropic subgroup containing strictly $K$, which contradicts the maximality of $K$ (here we use that there are no two elements in $\mathcal{G'}$ of the type $c_1 W(z)$ and $c_2 W(z)$ with $c_1\not=c_2$). 

Conversely, let $\mathcal{G}\in\mathfrak{G}$. Since the homomorphism $\eqref{eq 5g}$ is injective, and $\mathcal{G}$ is compact, we see that $\mathcal{G}$ has the form \eqref{eq aw} for some function $\alpha:K\to U(1)$, where the subgroup $K\subset\AhA$ is compact. Moreover, since $\mathcal{G}$ is Abelian, by \eqref{eq weylop} and \eqref{eq 2.2}  we see that $K$ is isotropic. Also, $K$ has measure $|K|=1$ by assumption, and therefore is open in $\AhA$. As a consequence of Proposition \ref{pro max compact open sub}, it is a maximal compact open isotropic subgroup of $\AhA$. Moreover, the function $\alpha$ satisfies \eqref{eq cond alpha} because $\mathcal{G}$ is a subgroup of $\mathbb{H}_A$. It remains to prove that $\alpha$ is continuous. As already observed, its graph is identified with the subgroup $\mathcal{G}$, which is compact by assumption, and therefore it is closed as a subset of $U(1)\times K$. Since $U(1)$ is compact and Hausdorff, we deduce that $\alpha$ is continuous, by the closed graph theorem in point-set topology.  

%Finally, to check the maximality of $K$, we argue by contradiction. Suppose that $K'$ is an isotropic compact open subgroup of $\AhA$ containing strictly $K$. By Theorem \ref{} there exists $\alpha'\in {\rm Ch_2}(K')$ that extends $\alpha$, and satisfies the same property \eqref{} for $(x,\xi),(y,\eta)\in K'$. By the first part of this proof, the pair $(K',\alpha')$ gives rise to a compact Abelian subgroup $\mathcal{G}'\subset\mathbb{H}_A$, not counting multiples of the identity other than the identity itself. Since $\mathcal{G}'$ contains $\mathcal{G}$ strictly, this contradicts the maximality of $K$. 

Finally, let $A$ be a finite Abelian group, say of cardinality $N$. Since $|K|=1$, from Remark \ref{rem osservazioni su mainteo} 2) we see that $K$ has cardinality $N$. Hence, by Remark \ref{rem order} we have that $\alpha\in{\rm Ch}_2(K)$ takes values in the multiplicative group generated by $\zeta=\exp(\pi i/N)$. Therefore, $\mathcal{G}\subset\mathbb{P}_A$.
\end{proof}

%Notice that $\overline{V_\phi \phi(z)}=\Tr\ket{\phi}\bra{\phi}W(z)$ is the so-called characteristic function of the corresponding vector $\ket{\phi}\in\cH$ (\cite[Section 12.3]{holevo2019quantum}). 
\begin{proposition}\label{pro 3g}
    Let $\mathcal{G}\in\mathfrak{G}$ be as in Proposition \ref{pro struttura sottogruppo}, that is, in the form \eqref{eq aw} where $K\subset\AhA$ is a maximal compact open isotropic subgroup and $\alpha\in{\rm Ch_2}(K)$ satisfies \eqref{eq cond alpha}.
    
    A vector $\ket{\phi}\in\cH$, with $\|\ket{\phi}\|=1$, is a common eigenvector, with eigenvalue $1$, of all elements of $\mathcal{G}$ if and only if 
    \begin{equation}\label{eq 3g 1}
    V_{\phi} \phi(z)=
\begin{cases}
\alpha(z)&z\in K\\
0 & z\not\in K.
\end{cases}
    \end{equation}
    \begin{proof}
        Suppose that $\phi\in L^2(A)$, with $\|\phi\|_{L^2(A)}=1$,  satisfies \eqref{eq 3g 1}. Then, by \eqref{eq ambiguity}, we have 
        \[
        \langle \phi|W(z)|\phi\rangle=\overline{\alpha(z)}\qquad z\in K. 
        \]
        Since $|\alpha(z)|=1$, this means that, if $z\in K$, $\phi$ and $W(z)\phi$ are linearly dependent and that, in fact, \[
        \alpha(z) W(z)\ket{\phi}=\ket{\phi}\qquad z\in K.\]
        Conversely, it is clear that the latter formula implies that  $V_\phi \phi (z)=\alpha(z)$ for $z\in K$. Moreover $|K|=1$ by Proposition \ref{pro max compact open sub}. 
        Hence, $V_\phi \phi(z)=0$ for almost every $z\in \AhA\setminus K$, by \eqref{parseval}. Since $\AhA\setminus K$ is open and $V_\phi \phi$ is continuous, we see that $V_\phi \phi(z)=0$ for every $z\in \AhA\setminus K$. Hence \eqref{eq 3g 1} is satisfied. 
    \end{proof}

\end{proposition}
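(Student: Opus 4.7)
The plan is to prove the equivalence by combining the Cauchy--Schwarz inequality \eqref{eq cs} on $K$ with the Parseval-type identity \eqref{parseval}, exploiting that $|K|=1$ by Proposition \ref{pro max compact open sub}, together with the dictionary \eqref{eq ambiguity} that links the matrix coefficient $\langle\phi|W(z)|\phi\rangle$ to $\overline{V_\phi\phi(z)}$.

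For the forward direction, I would start from the eigenvalue equation $\alpha(z)W(z)\ket{\phi}=\ket{\phi}$ valid for every $z\in K$ and simply take the inner product against $\ket{\phi}$. Using \eqref{eq ambiguity}, this immediately yields $V_\phi\phi(z)=\alpha(z)$ throughout $K$. To control the behaviour on the complement, I would compare two integrals: on the one hand $\int_K|V_\phi\phi(z)|^2\,dz=|K|=1$, and on the other hand Parseval \eqref{parseval} gives $\int_{\AhA}|V_\phi\phi(z)|^2\,dz=\|\phi\|^4=1$. Subtracting, $V_\phi\phi$ must vanish almost everywhere on $\AhA\setminus K$. Since $K$ is compact \emph{and} open, the complement $\AhA\setminus K$ is open, and continuity of $V_\phi\phi$ promotes the almost-everywhere statement to pointwise vanishing.

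For the converse, assuming \eqref{eq 3g 1}, the key observation is that on $K$ one has $|V_\phi\phi(z)|=|\alpha(z)|=1=\|\pi(z)\phi\|_{L^2(A)}\|\phi\|_{L^2(A)}$, so the Cauchy--Schwarz inequality in Definition \ref{def coherent} is saturated. This forces $\pi(z)\phi$ and $\phi$ to be collinear vectors in $L^2(A)$; the explicit value $V_\phi\phi(z)=\alpha(z)$, together with \eqref{eq ambiguity}, determines the scalar to be $\overline{\alpha(z)}$, delivering $\alpha(z)W(z)\ket{\phi}=\ket{\phi}$ as desired.

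I do not anticipate any serious obstacle. The only point demanding a moment of care is the upgrade from almost-everywhere vanishing to pointwise vanishing of $V_\phi\phi$ on $\AhA\setminus K$, which is handled transparently by the fact that a compact open subgroup is also closed, so its complement is open, and by the well-known continuity of the coherent state transform recalled in Section \ref{sec 2}.
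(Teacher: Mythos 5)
Your proposal is correct and follows essentially the same route as the paper's proof: one direction via Cauchy--Schwarz saturation of $|V_\phi\phi|=|\alpha|=1$ on $K$ (forcing $\alpha(z)W(z)\ket{\phi}=\ket{\phi}$), and the other via pairing the eigenvalue equation with $\ket{\phi}$, then using $|K|=1$, Parseval, and continuity plus openness of $\AhA\setminus K$ to get pointwise vanishing off $K$. The only difference is cosmetic (you spell out the computation $\int_K|V_\phi\phi|^2=|K|=1$ that the paper leaves implicit).
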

We can now prove Theorem \ref{teo stab1}.
\begin{proof}[Proof of Theorem \ref{teo stab1}]
We use the description of the subgroups $\mathcal{G}\in\mathfrak{G}$ given in Proposition \ref{pro struttura sottogruppo}. It is clear from Proposition \ref{pro 3g} that every $\ket{\phi}\in\cH$, with $\|\ket{\phi}\|=1$, is stabilized by \textit{at most} one subgroup $\mathcal{G}\in\mathfrak{G}$, because the relation \eqref{eq 3g 1} determines uniquely $K$ and $\alpha$, and therefore $\mathcal{G}$, from $\phi$. On the other hand we see that every $\mathcal{G}$ stabilizes \textit{at most} one state, again by the relationship \eqref{eq 3g 1} and the uniqueness result in Proposition \ref{pro ambiguity phase}. 

Now, let $\ket{\phi}$ be an $S$-state vector, with $\|\ket{\phi}\|=1$. Hence
\[
\phi(x)=c\, h_0(x-y)\]
for some $c\in\mathbb{C}\setminus\{0\}$, $y\in A$ and some subcharacter of second degree $h_0$. Since $\phi$ is normalized, with the notation in the statement, $|c|=|H|^{-1/2}$.  Let us now prove that $\ket{\phi}$ is stabilized by the subgroup $\mathcal{G}\in\mathfrak{G}$ given in the statement. We have to prove that the pair $K,\alpha$ given in \eqref{eq K} and \eqref{eq alpha} satisfies the conditions in Proposition \ref{pro struttura sottogruppo} and that \eqref{eq 3g 1} is satisfied. It is clear from Theorem \ref{teo iso sub} that the set $K$ in \eqref{eq K} is a maximal compact open isotropic subgroup of $\AhA$. In fact, it is the subgroup associated with the subcharacter $h_0$ in the sense of Definition \ref{def gamma12}. Moreover, by the covariance property \eqref{eq 2.4} and Theorem \ref{teo nic} (i) we have 
\begin{equation}\label{eq 3g 2}
V_\phi \phi(x,\xi)=|H|^{-1}\overline{\xi(y)}V_{h_0}h_0(x,\xi)= \overline{h_0(-x)
\xi(y)}\chi_K(x,\xi). 
\end{equation}
This implies that \eqref{eq 3g 1} is satisfies for the function $\alpha$ defined in \eqref{eq alpha}. Moreover, Theorem \ref{teo nic} (ii) tells us that the function $V_\phi \phi$, restricted to $K$, that is $\alpha$,  is a character of second degree of $K$ associated with the homomorphism $\beta'\in{\rm Sym}(K)$ given in Proposition \ref{pro struttura sottogruppo}. This concludes the proof of the first part of the statement. 

Now, consider a subgroup $\mathcal{G}\in\mathfrak{G}$. We know from Proposition \ref{pro struttura sottogruppo} that $\mathcal{G}$ has the form in \eqref{eq aw}, where $K\subset\AhA$ is a maximal compact open isotropic subgroup and $\alpha\in{\rm Ch}_2(K)$ satisfies \eqref{eq cond alpha}. Moreover $K$ has necessarily the form given in Theorem \ref{teo iso sub} for some compact open subgroup $H\subset A$ (the image of $K$ under the canonical projection $\AhA\to A$) and some symmetric homomorphism $\beta\in{\rm Sym}(H)$. Let us prove that there exists an $S$-state that is stabilized by $\mathcal{G}$.  The proof is constructive and follows steps (i)---(iii) in the statement.

(i)  We know from Theorem \ref{teo iso sub} that $H^\bot=\{\xi\in\hA:\ (0,\xi)\in K\}$.   For $\xi,\eta\in H^\bot$, by \eqref{eq cond alpha} we have 
\[
\alpha(0,\xi+\eta)=\alpha(0,\xi)\alpha(0,\eta)\eta(0)=\alpha(0,\xi)\alpha(0,\eta).
\]
Hence the map $H^\bot\ni \xi\mapsto \alpha(0,\xi)$ is a continuous character of $H^\bot$.  
The continuous character $H^\bot\ni \xi\mapsto \overline{\alpha(0,\xi)}$ admits an extension to a continuous character of $\hA$ (see, e.g., \cite[Corollary 24.12]{hewitt63}), that is an element of the dual group of  $\hA$, which is canonically identified with $A$ (see, e.g., \cite[Theorem 24.8]{hewitt63}). Therefore there exists $y\in A$ such that $\xi(y)= \overline{\alpha(0,\xi)}$ for every $\xi\in H^\bot$. 

(ii) First, let us prove that the function $h:H\to U(1)$ in the statement is well defined: if $(x,\xi), (x,\eta)\in K$, then $\eta':=\eta-\xi\in H^\bot$ by Theorem \ref{teo iso sub}. Hence by \eqref{eq cond alpha}, 
\begin{align*}
\alpha(x,\eta)\eta(y)=\alpha(x,\xi+\eta')(\xi+\eta')(y)&=\alpha(x,\xi)\alpha(0,\eta')\overline{\eta'(x)}\xi(y)\eta'(y)\\
&= \alpha(x,\xi)\xi(y),
\end{align*}
because $\eta'(x)=1$ ($x\in H$ and $\eta'\in H^\bot)$ and $\eta'(y)=\overline{\alpha(0,\eta')}$ by the point (i). 

Let us prove that $h\in {\rm Ch}_2(H)$, and precisely that $h$ is associated (in the sense of Definition \ref{def charac}) with the aforementioned symmetric homomorphism $\beta$. We have $h(x)=\overline{\alpha(-x,-\xi)}\xi(y)$ for $(x,\xi)\in K$. Hence   by \eqref{eq cond alpha}, if $(x,\xi),(x',\xi')\in K$,
\begin{align*}
h(x+x')&=\overline{\alpha(-x-x',-\xi-\xi')}(\xi+\xi')(y)\\
&=\overline{\alpha(-x,-\xi)\alpha(-x',-\xi')}\xi'(x)\xi(y)\xi'(y)\\
&=h(x)h(x')\xi'(x)\\
&=h(x)h(x')\beta(x)(x'),
\end{align*}
The last equality is true because $\beta(x)(x')=\beta(x')x=\xi'(x)$ by \eqref{eq 2.2bis}. 

It remains to prove that $h$ is continuous. It suffices to prove that there exists a continuous section of $K$, that is, a continuous map $\gamma:H\to\hA$ such that $(x,\gamma(x))\in K$ for every $x\in H$. The continuity of $h$ then follows, because $h(-x)=\overline{\alpha(x,\gamma(x))\gamma(x)(y)}$ and $\alpha:K\to U(1)$ is continuous. To construct such a section, observe that, since $H^\bot$ is open, $\widehat{H}\simeq\hA/H^\bot$ is discrete, and $\beta(H)\subset \widehat{H}$ is compact, therefore finite. Let $\beta(H)=\{\eta_1,\ldots,\eta_N\}$, and set $H_j:=\beta^{-1}(\{\eta_j\})\subset H$, for $j=1,\ldots,N$. Then each $H_j$ is open and the collection of the $H_j$'s is a partition of $H$. Now, for each $j=1,\ldots,N$, pick an element $\xi_j\in\hA$ such that $\xi_j|_{H}=\eta_j$ (such an extension always exists, see \cite[Corollary 24.12]{hewitt63}). Define $\gamma:H\to\hA$ by setting $\gamma(x)=\xi_j$ if $x\in H_j$. It is clear that $\gamma$ is continuous and that $(x,\gamma(x))\in K$ for every $x\in H$ by construction (recall that $K$ and $\beta$ are related as in Theorem \ref{teo iso sub}).  

(iii) We now extend $h$ to a function $h_0:A\to\mathbb{C}$ by setting $h_0(x)=h(x)$ for $x\in H$ and $h_0(x)=0$ for $x\in A\setminus H$. Hence $h_0$ is a subcharacter of second degree associated with $K$ in the sense of Definition \ref{def gamma12}. For the (normalized) function $\phi(x)=|H|^{-1/2} h_0(x-y)$ we see, arguing as above, that \eqref{eq 3g 2} holds, and therefore $\eqref{eq 3g 1}$ is satisfied because $\overline{h(-x)\xi(y)}=\alpha(x,\xi)$ for $(x,\xi)\in K$, by the very definition of $h$. By Proposition \ref{pro 3g}, this implies that $\ket{\phi}$ is stabilized by the group $\mathcal{G}\in\mathfrak{G}$. This concludes the proof. 
\end{proof}
\begin{remark}
As a consequence of Theorem \ref{teo stab1}, the support of the wave function $\phi$ of a stabilizer state is a coset of a compact open subgroup $H\subset A$ and $\phi$ has constant modulus on its support. When $A$ is finite these properties also follow from general results concerning \textit{monomial stabilizer states}; see  \cite{nest2011monomial,nest2011}. 
\end{remark}

\section{A moduli space for the collection of stabilizer states}
In this section we provide a natural parametrization of the set of stabilizer states (Definition \ref{def gsubgroups}). We know from Theorem \ref{teo stab1} that the stabilizer states are precisely the $S$-states (Definition \ref{def state}). Their wave functions have therefore the form in \eqref{eq stab phi} for some $y\in A$ and some subcharacter of second degree $h_0$ of $A$. However, different pars $y,h_0$ might give rise to the same state. This fact is clarified by the following result. 
\begin{lemma}\label{lemma lin ind}
Let 
\begin{equation}\label{eq phi1}
\phi(x)=c\,  h_0(x-y)\qquad x\in A
\end{equation}
and 
\begin{equation}\label{eq phi2}
\phi'(x)=c\, h'_0(x-y')\qquad x\in A,
\end{equation}
where $c,c'\in\mathbb{C}\setminus\{0\}$, $y,y'\in A$ and $h_0, h'_0$ are subcharacters of second degree of $A$, hence $H:=\{x\in A:\ h_0(x)\not=0\}$ and $H':=\{x\in A:\ h'_0(x)\not=0\}$ are compact open subgroups of $A$, and  $h:={h_0}|_{H}\in {\rm Ch}_2(H)$, $h':={h'_0}|_{H'}\in {\rm Ch}_2(H')$. Suppose that $h$ is associated with $\beta\in{\rm Sym}(H)$ and $h'$ is associated with $\beta'\in{\rm Sym}(H')$ in the sense of Definition \ref{def charac}. 

 Then $\phi(x)$ and $\phi'(x)$ are linearly dependent if and only if 
\begin{equation}\label{eq equiv}
\begin{cases}
    H=H'\\
    y-y'\in H\\
h(x)=h'(x)\beta'(y-y')(x)\qquad x\in H.
\end{cases}
\end{equation}
In this case, we also have $\beta=\beta'$. 
\end{lemma}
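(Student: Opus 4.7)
The plan is to analyze the linear dependence condition $\phi = \lambda\phi'$ directly, first matching supports (which forces $H=H'$ and $y-y'\in H$), and then comparing the two expressions on the common support using the defining identity for a character of second degree.

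First I would observe that $\phi$ is supported on the coset $y+H$ and $\phi'$ on $y'+H'$, and that these functions never vanish on their respective supports. Consequently, if $\phi = \lambda \phi'$ for some nonzero $\lambda$, then $y+H = y'+H'$. Since both are cosets of subgroups, this forces $H=H'$ (as $H=(y-y')+H'$ must contain $0$) and $z:=y-y'\in H$. I would then set $u:=x-y\in H$, so that $x-y' = u+z$, and observe that linear dependence on the common support reduces to the pointwise identity
\[
h(u) = \mu\, h'(u+z) \qquad u\in H,
\]
for the constant $\mu = \lambda c'/c$.

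Next I would apply the defining property of $h'\in{\rm Ch}_2(H)$ (Definition \ref{def charac}), together with the symmetry of $\beta'$, to expand $h'(u+z)=h'(u)h'(z)\beta'(z)(u)$. Evaluating the identity at $u=0$ and using $h(0)=h'(0)=1$ (which follows from taking $x=y=0$ in \eqref{eq hbeta}), I get $\mu\,h'(z)=1$, hence $\mu=\overline{h'(z)}$. Substituting back yields exactly the third line of \eqref{eq equiv}. The converse direction, from \eqref{eq equiv} back to linear dependence, is then immediate: defining $\lambda$ via $\lambda c'/c = \overline{h'(z)}$ reverses the computation, and the two functions agree (up to this scalar) on $y+H=y'+H'$ and both vanish outside.

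It remains to derive $\beta=\beta'$. Here I would apply the defining identity for $h$ to the sum $u_1+u_2\in H$, and independently expand the right-hand side by applying the defining identity for $h'$ to $h'(u_1+u_2)$, while also rewriting each factor $h(u_j)$ via the relation $h(u_j)=h'(u_j)\beta'(z)(u_j)$. Cancelling the common terms on both sides would leave the identity $\beta(u_1)(u_2)=\beta'(u_1)(u_2)$ for all $u_1,u_2\in H$, hence $\beta=\beta'$. I do not anticipate a serious obstacle: the only subtle step is the bookkeeping of the quadratic cocycle factor $\beta'(z)(u)$ coming from shifting the argument of $h'$ by $z$; once this is handled, the characterization and the equality $\beta=\beta'$ both drop out.
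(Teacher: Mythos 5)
Your argument is correct and follows essentially the same route as the paper: match the supports to get $H=H'$ and $y-y'\in H$, shift the variable, expand $h'(u+z)$ via the second-degree character identity, and evaluate at $u=0$ to fix the constant, with the converse obtained by reversing the computation. The only difference is cosmetic: for $\beta=\beta'$ the paper simply notes that $h$ and $h'$ differ by the character $\beta'(y-y')\in\widehat{H}$ and invokes the uniqueness part of Theorem \ref{teo charac}, whereas you verify the same fact by a direct expansion.
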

\begin{proof}
    Suppose that $\phi$ and $\phi'$ are linearly dependent. Then $\{x\in A:\ \phi(x)\not=0\}=\{x\in A:\ \phi'(x)\not=0\}$, which implies $H=H'$.
    
    Moreover, for some $c''\in\mathbb{C}\setminus\{0\}$,
\[ 
h_0(x-y)=c'' h'_0(x-y')\quad x\in A,   
\]
that is, 
\[ 
h_0 (x) = c'' h'_0 (x + y-y')\qquad x\in A.
\]
This implies that $y-y'\in H$ and that 
    \[
    h(x)=c'' h'(x+y-y')\qquad x\in H,
    \]
    whence
    \[
    h(x)=c'' h'(x)h'(y-y')\beta'(y-y')(x).
    \]
    Setting $x=0$ we see that $c''h'(y-y')=1$ and therefore 
    \[
    h(x)=h'(x)\beta'(y-y')(x)\qquad x\in H.
    \]
    Since $h$ and $h'$ differs by the multiplication by a character, by Theorem \ref{teo charac} we have $\beta=\beta'$.

Conversely it is clear, by reversing the above reasoning, that condition \eqref{eq equiv} implies that $\phi$ and $\phi'$ are linearly dependent. 
\end{proof}
We now fix a compact open subgroup $H\subset A$. Using the notation of Lemma \ref{lemma lin ind}, and inspired by that result, we give the following definition. 
\begin{definition}\label{def associate bundle}
    In the set $A\times{\rm Ch_2}(H)$ we say that a pair $(y,h)$ is equivalent to $(y',h')$, and we write $(y,h)\sim (y',h')$, if $y-y'\in H$ and $h(x)=h'(x)\beta'(y-y')(x)$ for $x\in H$. We denote by 
    \[
A\times_H{\rm Ch}_2(H):= A\times{\rm Ch_2}(H)/\sim
    \]
    the corresponding quotient set. 
\end{definition}
Indeed, it is easy to see that the above is an equivalence relation. 
\begin{remark}\label{rem suggestive} The above construction admits a suggestive interpretation. That is, $A\times_H{\rm Ch}_2(H)$ can be regarded as ``a bundle over $A/H$, associated with the $H$-principal bundle $A\to A/H$, and with typical fiber ${\rm Ch}_2(H)$". Indeed, we know that $H$ acts naturally on $A$ via the map $y\mapsto u+y$, $u\in H$, $y\in A$, and this action is free and transitive on the fibers of the map $A\to A/H$. Moreover, $H$ also acts on ${\rm Ch}_2(H)$ by $u\cdot h= h_u $, with $h_u(x)=h(x)\beta(u)(x)$, for $u\in H$, $x\in A$. We now consider the diagonal action on $A\times{\rm Ch}_2(H)$, given by 
\[
u\cdot(y,h)=(y+u,h_u)\qquad u\in H,\ y\in A,\ h\in{\rm Ch}_2(H).
\]
Then $A\times_H{\rm Ch}_2(H)$ is just the set of the orbits of this action. 
\end{remark}
As a consequence of Lemma \ref{lemma lin ind} we have the following result. 
\begin{proposition}\label{pro param}
    Let $H\subset A$ be a compact open subgroup. There is a natural one-to-one correspondence between the set $A\times_H{\rm Ch}_2(H)$ and that of $S$-states (Definition \ref{def state}) whose wave functions have a coset of $H$ as support. Exactly, with $[(y,h)]\in A\times_H{\rm Ch}_2(H)$ it is associated the $S$-state represented by the wave function $\phi(x)=h_0(x-y)$, where $h_0(x)=h(x)$ for $x\in H$ and $h_0(x)=0$ for $x\in A\setminus H$. 
\end{proposition}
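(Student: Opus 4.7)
My plan is to exhibit the map described in the statement as a bijection by checking, in turn, that it is well defined on equivalence classes, injective, and surjective. Lemma \ref{lemma lin ind} is tailored exactly for this and will do essentially all the work. Concretely, I would first define a candidate $\Phi$ on representatives by sending $(y,h)\in A\times{\rm Ch}_2(H)$ to the state whose wave function is $\phi(x)=h_0(x-y)$, where $h_0$ is the extension of $h$ by zero outside $H$. Then $h_0$ is a subcharacter of second degree in the sense of Definition \ref{def subcharac second degree}, so $\phi$ is an $S$-state vector (Definition \ref{def state}) supported on the coset $y+H$; since a pure state is determined by its wave function only up to a nonzero scalar, I would read ``the state represented by $\phi$'' as the line through $\phi$ in $L^2(A)$.

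For well-definedness and injectivity I would observe that the equivalence relation in Definition \ref{def associate bundle} is precisely condition \eqref{eq equiv} of Lemma \ref{lemma lin ind} specialised to $H=H'$ (so that, by the lemma itself, $\beta=\beta'$ comes for free). Hence if $(y,h)\sim(y',h')$, the ``converse'' half of Lemma \ref{lemma lin ind} yields linear dependence of $\phi(x)=h_0(x-y)$ and $\phi'(x)=h_0'(x-y')$, so they represent the same state and $\Phi$ descends to $A\times_H{\rm Ch}_2(H)$. Conversely, if $\Phi([(y,h)])=\Phi([(y',h')])$, the two wave functions are linearly dependent, and the ``direct'' half of Lemma \ref{lemma lin ind} forces $H=H'$ together with \eqref{eq equiv}, i.e.\ $(y,h)\sim(y',h')$.

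For surjectivity I would simply unpack Definition \ref{def state}: any $S$-state whose wave function is supported on a coset of $H$ admits a representative of the form $c\,h_0(x-y)$ with $c\in\mathbb{C}\setminus\{0\}$, $y\in A$, and $h_0$ a subcharacter of second degree of $A$. The support condition forces the compact open subgroup attached to $h_0$ to coincide with $H$, so $h:=h_0|_H\in{\rm Ch}_2(H)$, and $[(y,h)]$ is then mapped by $\Phi$ to the prescribed state. I do not foresee any serious obstacle: Lemma \ref{lemma lin ind} already contains both directions of the key equivalence, and the only substantive bookkeeping is to verify that Definition \ref{def associate bundle} and \eqref{eq equiv} describe the same relation, which is immediate by inspection.
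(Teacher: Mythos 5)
Your proposal is correct and follows essentially the same route as the paper, which presents Proposition \ref{pro param} as a direct consequence of Lemma \ref{lemma lin ind}: well-definedness and injectivity come from the two directions of that lemma (the equivalence in Definition \ref{def associate bundle} being exactly condition \eqref{eq equiv} with $H=H'$), and surjectivity from unpacking Definition \ref{def state} together with the observation that the support condition forces the subgroup attached to $h_0$ to be $H$. Your write-up just makes explicit the bookkeeping the paper leaves implicit.
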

We now consider the natural surjective map
\begin{equation}\label{eq fibration}
A\times_H{\rm Ch}_2(H)\to A/H \qquad [(y,h)]\mapsto [y],
\end{equation}
where the square brackets represent the residue classes in the corresponding quotient spaces. It is easy to see that this map is well defined, and we now consider its fibers. Given $h,h'\in {\rm Ch}_2(H)$ we denote by $\beta$ and $\beta'\in{\rm Sym}(H)$ the corresponding symmetric homomorphisms, as in Definition \ref{def charac}. 

\begin{proposition}\label{pro group fiber} 
    The fibers of the map \eqref{eq fibration} have a natural structure of Abelian group, where the group law is defined as follows. If $[(y,h)], [(y',h')]\in A\times_H{\rm Ch}_2(H)$, with $y-y'\in H$,
    \[
[(y,h)]\cdot [(y',h')]:=[(y,h'')]
    \]
    where 
    \[
h''(x)=h(x)h'(x)\beta'(y-y')(x).
   \]
   Furthermore, each fiber is isomorphic to ${\rm Ch}_2(H)$, although not canonically. 
\end{proposition}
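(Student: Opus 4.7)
The plan is to verify that the proposed operation descends to a well-defined binary operation on each fiber, to check the group axioms, and to identify each fiber with ${\rm Ch}_2(H)$ via a choice of representative. The key algebraic facts I will use are that pointwise multiplication of characters of second degree sends a pair associated with $\beta,\beta' \in {\rm Sym}(H)$ to a character of second degree associated with $\beta+\beta'$ (immediate from Definition \ref{def charac}), and that multiplying by a character does not affect the associated symmetric homomorphism (Theorem \ref{teo charac}). Combining these with the identities $h_1 = h\cdot \beta(y_1-y)$ and $h'_1 = h'\cdot \beta'(y'_1-y')$ that characterize the equivalences $(y_1,h_1)\sim (y,h)$ and $(y'_1,h'_1)\sim (y',h')$, a telescoping computation in $\widehat{H}$ will show that the classes of $(y,\,hh'\,\beta'(y-y'))$ and $(y_1,\,h_1 h'_1\,\beta'(y_1-y'_1))$ coincide, establishing well-definedness.

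Once well-definedness is at hand, the group axioms become essentially formal. Given $[(y,h)],[(y',h')]$ in the same fiber, I will rewrite the second factor as $[(y,\widetilde{h}')]$ with $\widetilde{h}'(x):=h'(x)\beta'(y-y')(x)$, so that the product collapses to $[(y,\,h\cdot \widetilde{h}')]$; associativity and commutativity then follow from those of pointwise multiplication in ${\rm Ch}_2(H)$. The identity in the fiber over $[y_0]\in A/H$ is the class of $(y_0,\mathbf{1})$, where $\mathbf{1}$ is the constant character (associated with the trivial symmetric homomorphism); this class is independent of the choice of $y_0$ within its coset, since the trivial $\beta$ pairs trivially. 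The inverse of $[(y,h)]$ is $[(y,\overline{h})]$.

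Finally, I will fix a representative $y_0$ of the coset and consider $\iota_{y_0}\colon {\rm Ch}_2(H) \to (\text{fiber over }[y_0])$, $h\mapsto [(y_0,h)]$. Every class has a unique representative with first coordinate $y_0$, obtained from an arbitrary $(y,h)$ by replacing $h$ with $h(\cdot)\beta(y_0-y)(\cdot)$, so $\iota_{y_0}$ is a bijection. With $y=y'=y_0$ the defining formula gives $h'' = hh'$, because $\beta'(0)$ is the trivial character, so $\iota_{y_0}$ intertwines the two group structures and is therefore an isomorphism. Choosing a different representative $y_0+u$ (with $u\in H$) twists $\iota_{y_0}$ by the automorphism $h\mapsto h\cdot \beta(u)$ of ${\rm Ch}_2(H)$, which explains why the identification is not canonical. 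I expect the well-definedness step to be the main obstacle, as it requires juggling two independent equivalences simultaneously while keeping track of additivity in $\widehat{H}$; the rest of the argument is bookkeeping.
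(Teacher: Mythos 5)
Your proposal is correct and follows essentially the same route as the paper's proof: the paper verifies well-definedness "by direct inspection" using precisely the fact that $hh'$ is associated with $\beta+\beta'$ (and that multiplying by a character leaves the symmetric homomorphism unchanged), and it identifies each fiber with ${\rm Ch}_2(H)$ via the same map $h\mapsto[(y,h)]$, with injectivity/surjectivity coming from the free and transitive action of $H$ on the coset $y+H$ — which is exactly your "unique representative with first coordinate $y_0$" argument. You simply spell out the telescoping computation and the group axioms that the paper leaves implicit.
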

\begin{proof}
    It is easy to see, by direct inspection, that the above product is well defined and endows each fiber with a structure  of Abelian  group. One has to use the fact that, with the notation in the statement, $hh'\in{\rm Ch_2}(H)$ is associated with the symmetric homomorphism $\beta\beta'\in{\rm Sym}(H)$, in the sense of Definition \ref{def charac}, where $\beta\beta'(x):=\beta(x)+\beta'(x)$ for $x\in H$ (the sum being understood in $\widehat{H}$). 
    
    Concerning the second part of the statement, we observe that, for any given $y\in A$, the map 
    \[
    {\rm Ch_2}(H)\to A\times_H{\rm Ch_2(A)}\qquad h\mapsto [(y,h)]
    \]
    is a group isomorphism onto its image, and that the latter is the fiber of the map \eqref{eq fibration} over $[y]$. Injectivity follows from the fact that the action of $H$ on the coset $y+H$ (that is the fiber of the map $A\to A/H$ over $[y]$) is free, while surjectivity is a consequence of that action being transitive. 
\end{proof}
    
\begin{remark} The group law introduced in Proposition \ref{pro group fiber} on the fibers of the map \eqref{eq fibration} can be interpreted in terms of pointwise multiplication of wave functions. Namely, according to Proposition \ref{pro param}, to the pairs $[(y,h)]$ and $[(y',h')]$ in the same fiber (hence $y-y'\in H$) there correspond two wave functions $\phi$ and $\phi'$ (defined up to nonzero multiplicative constant) with the same support -- a coset of $H$ in $A$. Then, with the notation of Lemma \ref{lemma lin ind} we have, for some constant $c\not=0$ and every $x\in A$, 
\begin{align*}
    \phi(x)\phi'(x)&=c\, h_0(x-y) h'_0(x-y')\\
    &=c\, h_0(x-y)h'_0(x-y)\beta'(y-y')(x)\\
    &=C(y,y')h_0(x-y) h'_0(x-y)\beta'(y-y')(x-y)
\end{align*}
for some number $C(y,y')\in \mathbb{C}\setminus\{0\}$ depending on $y$ and $y'$. The last expression is the wave function of an $S$-state associated, in the sense of Proposition \ref{pro param}, with $[(y,h)]\cdot [(y',h')]$ (in fact, one could also interpret this product in terms of composition of shift and Clifford operators; cf. Section \ref{sec 2}).
\end{remark} 
We can summarize the above results and Theorem \ref{teo stab1} as follows.

\begin{corollary}\label{cor moduli space}
Let $\mathfrak{G}$ and $\mathfrak{S}$ be the collections of stabilizer subgroups $\mathcal{G}\subset\mathbb{H}_A$ and $S$-states, respectively (see Definitions \ref{def gsubgroups} and \ref{def state}). Then we have natural bijections
\[
\mathfrak{G}\simeq \mathfrak{S}\simeq \sqcup_{H\subset A} A\times_H{\rm Ch}_2(H)
\]
where the first correspondence is given in Theorem \ref{teo stab1}, while the second is given in  Proposition \ref{pro param}, and the disjoint union is taken over all compact open subgroups $H\subset A$. 

Moreover, if A is finite, the cardinality of these sets is given by 
\begin{equation}\label{eq card}
\#\mathfrak{G}=\# \mathfrak{S}=\#A\cdot \sum_{H\subset A} \#{\rm Sym}(H).
\end{equation}
\end{corollary}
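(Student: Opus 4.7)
The plan is to assemble the corollary from results that have already been established, so the work is organizational rather than technical. The first bijection $\mathfrak{G}\simeq \mathfrak{S}$ is precisely what Theorem \ref{teo stab1} delivers: it shows that each $\mathcal{G}\in\mathfrak{G}$ stabilizes a unique state, that this state is an $S$-state, and conversely each $S$-state is stabilized by a unique $\mathcal{G}\in\mathfrak{G}$. I would simply invoke Theorem \ref{teo stab1} and note that the map $\mathcal{G}\mapsto$ (the stabilized state) provides the bijection.

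For the bijection $\mathfrak{S}\simeq \sqcup_{H\subset A} A\times_H{\rm Ch}_2(H)$, my plan is to partition $\mathfrak{S}$ according to the support of the wave function. Given an $S$-state with wave function $\phi(x)=c\,h_0(x-y)$, the set $\{x\in A:\ \phi(x)\neq 0\}$ is a coset of the compact open subgroup $H=\{x\in A:\ h_0(x)\neq 0\}$, and $H$ is intrinsically attached to the state (independent of the choice of $y$ and $h_0$); this is immediate from the fact that the support of $\phi$ is a coset of $H$ only for this $H$. Hence $\mathfrak{S}$ decomposes as a disjoint union over compact open subgroups $H\subset A$, and for each $H$ the fiber is exactly the set to which Proposition \ref{pro param} applies, giving the bijection with $A\times_H{\rm Ch}_2(H)$.

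For the cardinality formula in the finite case, I would carry out the following two-step count. Every subgroup $H\subset A$ is compact and open, so the disjoint union runs over all subgroups of $A$. For each $H$, the map \eqref{eq fibration} is surjective onto $A/H$ and, by Proposition \ref{pro group fiber}, each fiber is in bijection with ${\rm Ch}_2(H)$; therefore
\[
\#\bigl(A\times_H{\rm Ch}_2(H)\bigr)=\#(A/H)\cdot\#{\rm Ch}_2(H)=\frac{\#A}{\#H}\cdot \#{\rm Ch}_2(H).
\]
The short exact sequence in Theorem \ref{teo charac} yields $\#{\rm Ch}_2(H)=\#\widehat{H}\cdot\#{\rm Sym}(H)=\#H\cdot\#{\rm Sym}(H)$, using that $\#\widehat{H}=\#H$ for finite Abelian groups. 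Substituting gives $\#(A\times_H{\rm Ch}_2(H))=\#A\cdot\#{\rm Sym}(H)$, and summing over $H\subset A$ yields the formula \eqref{eq card}.

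There is no substantial obstacle: the only point that deserves a sentence of verification is that the support subgroup $H$ of an $S$-state is well defined, independent of the representation $\phi(x)=c\,h_0(x-y)$. This is precisely the first conclusion of Lemma \ref{lemma lin ind}, which forces $H=H'$ whenever two such representations define linearly dependent wave functions. With this observation the disjoint union decomposition is legitimate and the proof reduces to assembling the three ingredients above.
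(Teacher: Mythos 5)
Your proposal is correct and follows essentially the same route as the paper: the bijections are assembled from Theorem \ref{teo stab1} and Proposition \ref{pro param} (with Lemma \ref{lemma lin ind} guaranteeing the support subgroup $H$ is well defined), and the cardinality formula is obtained exactly as in the paper, via $\#\bigl(A\times_H{\rm Ch}_2(H)\bigr)=\#(A/H)\cdot\#{\rm Ch}_2(H)$ from Proposition \ref{pro group fiber} and $\#{\rm Ch}_2(H)=\#\widehat{H}\cdot\#{\rm Sym}(H)$ from Theorem \ref{teo charac}. No gaps to report.
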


 \begin{proof}
It remains only to prove the last part of the statement, when $A$ is finite. We clearly have
\[
\#\mathfrak{G}=\# \mathfrak{S}=\sum_{H\subset A}\#A\times_H{\rm Ch_2}(H).
\]
On the other hand, by Proposition \ref{pro group fiber},
\[
\#A\times_H{\rm Ch_2}(H)=\# A/H\cdot \#{\rm Ch}_2(H)
\]
and by Theorem \ref{teo charac} we have 
\[
\#{\rm Ch}_2(H)=\#\widehat{H}\cdot \#{\rm Sym}(H).
\]
Since $\#A/H=\#A/\#H=\#{A}/\#\widehat{H}$, we obtain the desired conclusion. 
 \end{proof}
 \section{Minimizing the Wehrl entropy}
 In this section we prove Theorem \ref{teo wehrl}. 
 We preliminarily observe that, by the spectral theorem,  a density operator $\rho$ in $\cH$ (that is, a compact nonnegative operator with trace $1$) can always be written as 
 \begin{equation}
     \label{eq rho decomp}
     \rho=\sum_j p_j\ket{\psi_j}\bra{\psi_j},
 \end{equation}
 where $p_j>0$, with $\sum_j p_j=1$, and the $\ket{\psi_j}$'s are an \textit{orthonormal system} of $\cH$. The set of the $j$'s is at most countable, even if $\cH$ is not separable. Indeed, we have the orthogonal decomposition in $\rho$-invariant subspaces $\cH={\rm Ker}\,\rho\oplus \overline{\rho(\cH)}$, and $\overline{\rho(\cH)}$ is separable, since $\rho$ is compact.
 
As a consequence of \eqref{eq rho decomp}, for the corresponding Husimi function in \eqref{eq husimi} we have
\begin{equation}
    \label{eq husimi decomp}
   u_{\phi,\rho}(z)=\sum_{j}p_j|V_\phi \psi_j|^2,
\end{equation}
for any $\ket{\phi}\in\cH$, with $\|\ket{\phi}\|=1$, where $V_\phi \psi_j$ denotes the coherent state transform of $\psi_j$ with window $\phi$ (see Definition \ref{def coherent}).
\begin{lemma}\label{lem husimi}
Let $\ket{\phi}\in\cH$, with $\|\ket{\phi}\|=1$ and let $\rho$ be a density operator in $\cH$. The Husimi function $u_{\phi,\rho}$ satisfies 
\begin{equation}\label{eq husimi bounds1}
0\leq u_{\phi,\rho}(z)\leq 1\qquad z\in\AhA
\end{equation}
and 
\begin{equation}\label{eq husimi bounds2}
\int_{\AhA} u_{\phi,\rho}(x,\xi)\, dx\, d\xi=1.
\end{equation}
Moreover $u_{\phi,\rho}$ is a continuous function in $\AhA$, which tends to zero at infinity. In particular, the set $\{z\in\AhA:\ u_{\phi,\rho}(z)>0\}$ is $\sigma$-compact.   
\end{lemma}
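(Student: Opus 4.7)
The plan is to exploit the spectral decomposition \eqref{eq rho decomp} and the resulting expansion \eqref{eq husimi decomp} so that every claim about $u_{\phi,\rho}$ reduces to a claim about the nonnegative series $\sum_j p_j |V_\phi \psi_j|^2$ of summable terms. All four items then fall out of standard facts about $V_\phi \psi_j$ (continuity, vanishing at infinity, Cauchy--Schwarz, Parseval) together with a uniform convergence argument.

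For the pointwise bounds \eqref{eq husimi bounds1}, nonnegativity is immediate from \eqref{eq husimi decomp}. For the upper bound, I would apply the Cauchy--Schwarz inequality \eqref{eq cs} termwise: each $|V_\phi\psi_j(z)|^2 \le \|\phi\|^2\|\psi_j\|^2 = 1$, so $u_{\phi,\rho}(z) \le \sum_j p_j = 1$. The total mass \eqref{eq husimi bounds2} follows by integrating \eqref{eq husimi decomp}, interchanging sum and integral by Tonelli (all terms are nonnegative), and invoking the Parseval formula \eqref{parseval} to get $\int |V_\phi\psi_j|^2 = \|\phi\|^2\|\psi_j\|^2 = 1$.

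For continuity and decay at infinity, the key observation is that the partial sums $S_N(z):=\sum_{j\le N} p_j |V_\phi\psi_j(z)|^2$ converge \emph{uniformly} to $u_{\phi,\rho}$: for $N<M$,
\[
0 \le S_M(z) - S_N(z) \le \sum_{N<j\le M} p_j \|\phi\|^2 \|\psi_j\|^2 = \sum_{N<j\le M} p_j,
\]
which tends to $0$ as $N\to\infty$, uniformly in $z$. Since each $V_\phi\psi_j$ is continuous on $\AhA$ and vanishes at infinity (as recalled just before \eqref{eq 2.3}), the same holds for $|V_\phi\psi_j|^2$ and for every finite partial sum $S_N$. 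The space $C_0(\AhA)$ is closed under uniform convergence, so $u_{\phi,\rho}\in C_0(\AhA)$.

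Finally, the $\sigma$-compactness of $\{u_{\phi,\rho}>0\}$ is a direct consequence of the previous item: writing
\[
\{z\in\AhA:\ u_{\phi,\rho}(z)>0\}=\bigcup_{n=1}^\infty\{z\in\AhA:\ u_{\phi,\rho}(z)\ge 1/n\},
\]
each set on the right is closed (by continuity) and, since $u_{\phi,\rho}$ vanishes at infinity, contained in some compact set; hence each is itself compact. I do not anticipate any substantial obstacle: the only subtle point is the uniform convergence of the series, which the uniform Cauchy--Schwarz bound settles immediately.
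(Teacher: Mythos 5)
Your proposal is correct and follows essentially the same route as the paper: expand $u_{\phi,\rho}$ via \eqref{eq husimi decomp}, get the bounds from \eqref{eq cs} and \eqref{parseval}, and obtain continuity and decay at infinity from uniform convergence of the series (your uniform Cauchy--Schwarz estimate is precisely the Weierstrass $M$-test used in the paper). The explicit $\sigma$-compactness argument you add is a correct spelling-out of what the paper leaves implicit.
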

\begin{proof}
    We use the formula \eqref{eq husimi decomp}. Then \eqref{eq husimi bounds1} and \eqref{eq husimi bounds2} are  immediate consequence of \eqref{eq cs} and \eqref{parseval}. 
Moreover, we know that the functions $V_\phi \psi_j$ are continuous in $\AhA$ and tend to zero at infinity (see Section \ref{sec 2}). The same holds true for $u_{\phi,\rho}$, because the series (possibly a finite sum) \eqref{eq husimi decomp} converges uniformly in $\AhA$ by the Weierstrass $M$-test. 
 \end{proof}
   The following remark clarifies the hypothesis ``$G(0)=0$" in Theorem \ref{teo wehrl}.
 \begin{remark}\label{rem wehrl makessense}
 \ 
 \begin{itemize}
     \item[1)] The assumption  ``$G(0)=0$" in Theorem \ref{teo wehrl}  ensures that the integral in \eqref{eq wehrl entropy} is well defined. Indeed, this property implies that $G(\tau)\geq \tau G(1)$ for $0\leq \tau\leq 1$; in particular $G\geq0$ if $G(1)\geq 0$. Instead, if $G(1)<0$ then $G(u_{\rho,\phi})_{-}$, the negative part  of $G(u_{\rho,\phi})$, satisfies 
     \[
0\leq G(u_{\rho,\phi})_{-}\leq |G(1)|u_{\rho,\phi},
     \]
     and is therefore summable by \eqref{eq husimi bounds2}. 
      \item[2)]
     The following example shows that for certain LCA groups $A$ and concave functions $G$, with $G(0)\not=0$, the integral in \eqref{eq wehrl entropy} is not well defined.
     
     Let $\cH=L^2(A)$, with $A=U(1)$, equipped with the normalized Haar measure. Then $\hA\simeq\bZ$, endowed with the counting measure as Haar measure. Let $\phi(x)= 1$ and  $\psi_j(x)=x^{j}$, for $x\in U(1)$, $j=1,2,\ldots$. An explicit computation shows that $|V_{\phi}\psi_j(x,\xi)|$ is the indicator function of the set $U(1)\times \{j\}\subset U(1)\times\bZ$. Setting
     \[
     \rho=\sum_{j=1}^\infty 2^{-j}\ket{\psi_j}\bra{\psi_j},
     \]
   we see that $u_{\phi,\rho}=\sum_{j=1}^\infty 2^{-j}|V_\phi \psi_j|^2$ vanishes on a subset of $U(1)\times \bZ$ of infinite measure, and for every $\epsilon>0$, we have $0<u_{\phi,\rho}<\epsilon$ on a subset of infinite measure. 
     Consider therefore any concave function $G:[0,1]\to \R$ with $G(0)<0$ and $\lim_{\tau\to 0^+}G(\tau)>0$. Then both the positive part and the negative part of the function $G(u_{\phi,\rho}(z))$ have an infinite integral on $U(1)\times\bZ$. Hence, the integral in \eqref{eq wehrl entropy} would not make sense in this case. 
     \item[3)] To ensure that the integral in \eqref{eq wehrl entropy} is well defined, we could even consider concave functions $G:[0,1]\to\R$ continuous at $0$ (without assuming that $G(0)=0$). However, when $|A|=\infty$, that integral would always be $+\infty$ if $G(0)>0$ or always $-\infty$ if $G(0)<0$, due to Lemma \ref{lem husimi}. 
     \end{itemize}
 \end{remark}
 We can now prove Theorem \ref{teo wehrl}.
 \begin{proof}[Proof of Theorem \ref{teo wehrl}]
Without loss of generality we can assume that $G(1)=0$. Indeed, if the theorem has been proved in this case, given a function $G$ as in the statement, we can then apply the theorem to the function $G(\tau)-\tau G(1)$ and use \eqref{eq husimi bounds2}.
 
 Hence, thereafter we assume that $G(1)=0$. The formula \eqref{eq 1.7} is then clear, because $G$ is nonnegative (recall that $G$ is concave and $G(0)=0$ by assumption). 
 
 Concerning the remaining part of the statement, we observe that the assumption that $G$ is not linear means (for our function $G$ satisfying $G(1)=0$) that $G(\tau)>0$ for $0<\tau<1$.  Under this assumption, we now prove the following chain of implications between the points in the statement. 
 
$(i)\Longrightarrow (ii)$
Since $G(0)=G(1)=0$ and $G(\tau)>0$ for $0<\tau<1$ we see that if equality occurs in \eqref{eq 1.7} for some $\phi$ and $\rho$ then the subset of $\AhA$ where $0<u_{\phi,\rho}<1$ has measure zero.  But this subset is open, because  $u_{\phi,\rho}$ is continuous (Lemma \ref{lem husimi}), and therefore it is empty. It follows that $u_{\phi,\rho}$ is the indicator function of some subset of $\mathcal{K}\subset \AhA$, necessarily having  measure $|\mathcal{K}|=1$, by \eqref{eq husimi bounds2}. 
 
$(ii)\Longrightarrow (iii)$ Using the spectral decomposition \eqref{eq rho decomp} for $\rho$, and therefore \eqref{eq husimi decomp}, we see that 
\[
\sum_j p_j |V_\phi \psi_j|^2=\chi_\mathcal{K},
\]
where $|\mathcal{K}|=1$. Since $p_j>0$ for all $j$, and $\sum_j p_j=1$, we have
\begin{equation}\label{eq vphipsik}
|V_\phi \psi_j|=\chi_\mathcal{K}\quad\textrm{for every}\ j.
\end{equation}
 By Theorem \ref{teo nic} (iii)  and (ii) we deduce that $\ket{\phi}$ is an $S$-state,
\begin{equation}\label{eq 24g}
\ket{\psi_j}=e^{i\theta_j}W(z_j)\ket{\phi}
\end{equation}
for some $\theta_j\in\R$ and $z_j\in \AhA$ (hence $\psi_j=e^{i\theta_j}\pi(z_j)\phi$; cf. \eqref{eq tfshift} and \eqref{eq weylop}), and that $\mathcal{K}$ is a coset in $\AhA$ of the maximal compact open isotropic subgroup $K\subset\AhA$ given by
\begin{equation}\label{eq 24gbis}
 K=\{z\in\AhA: V_\phi \phi(z)\not=0\}.
 \end{equation}
We also know that the $\ket{\psi}_j$'s are pairwise orthogonal. Hence, if there exist two indices $j,k$ with $j\not=k$ in the sum \eqref{eq rho decomp}, we have 
\[
|V_\phi \phi(z_j-z_k)|=|\langle \phi|W(-z_k) W(z_j)| \phi\rangle|=|\langle \psi_k|\psi_j\rangle|=0,
\]
that is, $z_j$ and $z_k$ belong to different cosets of $K$. By \eqref{eq 24g}, \eqref{eq 24gbis} and the covariance property \eqref{eq 2.3} we deduce that $z_j+K=\{z\in\AhA: V_\phi \psi_j(z)\not=0\}$ and $z_k+K=\{z\in\AhA: V_\phi \psi_k(z)\not=0\}$, which contradicts \eqref{eq vphipsik}. In summary, $\rho$ is a pure state and the desired properties of $\ket{\phi}$ and $\rho$ have been proved.

$(iii)\Longrightarrow (i)$ It follows from Theorem \ref{teo nic} (ii) that $|V_\phi \phi|=\chi_K$ for some maximal compact open isotropic subgroup $K\subset \AhA$, hence with $|K|=1$ by Theorem \ref{teo iso sub}. By the covariance property \eqref{eq 2.3} we see that 
\[
u_{\phi,\rho}=|V_\phi \psi|^2=\chi_\mathcal{K}
\]
 for some subset $\mathcal{K}\subset \AhA$ -- a coset of $K$ in $\AhA$ -- of measure $|\mathcal{K}|=1$. As a consequence \[
 \cE_G(\phi,\rho)= \int_{\AhA} G(|V_\phi\psi(x,\xi)|^2) \, dx\, d\xi=G(1)|\mathcal{K}|=G(1).
 \]
 \end{proof}

 \section{Further results for finite-dimensional systems}\label{sec finite dim}
 In this section we suppose that $A$ is a finite Abelian group, say of cardinality 
 \[
 \#A=N.
 \]
 Hence $\cH\simeq L^2(A)$ has dimension $N$. We prove some results for the Wehrl entropy in \eqref{eq wehrl entropy} that are, at least in part, specific for this finite-dimensional setting. Observe that now we have (cf. Remark \ref{rem osservazioni su mainteo} 2))
\begin{equation}\label{eq egnuova}
\mathcal{E}_G(\phi,\rho)=\frac{1}{N}\sum_{z\in\AhA} G(u_{\phi,\rho}(z)).
\end{equation}
\subsection{Berezin-Lieb inequality} 
Let $\rho$ be a density operator in $\cH$. Let  \begin{equation}\label{eq rho spec}
    \rho=\sum_{j=1}^N p_j\ket{\psi_j}\bra{\psi_j}
    \end{equation}
    be a spectral decomposition of $\rho$. Hence $p_j\geq0$ for $j=1,\ldots,N$,  $\sum_{j=1}^N p_j=1$ and the $\ket{\psi_j}$'s define now an \textit{orthonormal basis} of $\cH$. 
    Given a function $G:[0,1]\to\R$, we define the operator 
\[
G(\rho):=\sum_{j=1}^N G(p_j)\ket{\psi_j}\bra{\psi_j}.
\]
Hence 
\[
\Tr G(\rho)=\sum_{j=1}^N G(p_j). 
\]
If $G(\tau)=-\tau\log \tau$ for $0<\tau\leq 1$ and $G(0)=0$, $\Tr G(\rho)$ is the von Neumann entropy of $\rho$; see, e.g., \cite[Section 5.2]{holevo2019quantum}. 

\begin{theorem}\label{teo von neum}
Let $G:[0,1]\to\R$ be a concave function. For every $\ket{\phi}\in\cH$, with $\|\ket{\phi}\|=1$, and density operator $\rho$, we have
\begin{equation}\label{eq von neu}
\cE_G(\phi,\rho)\geq \Tr G(\rho).
\end{equation}
Equality occurs in \eqref{eq von neu} for every $\ket{\phi}\in\cH$, with $\|\ket{\phi}\|=1$ if $\rho= N^{-1} I$ (the so-called chaotic state). 

Moreover, equality occurs in \eqref{eq von neu}  if $\ket{\phi}$ is an $S$-state vector (Definition \ref{def state}), $\rho$ has the form in \eqref{eq rho spec} with
\[
\ket{\psi_j}=e^{i\theta_j}W(z_j)\ket{\phi}
\]
for some $\theta_j\in\R$, and some $z_j\in \AhA$,  $j=1,\ldots,N$, belonging each to a different coset in $\AhA$ of the maximal isotropic subgroup $K:=\{z\in\AhA: V_\phi \phi(z)\not=0\}$ (hence $\#K=N$).

If $G$ is strictly concave, equality occurs in \eqref{eq von neu} for some pair $\ket{\phi}, \rho$, with $\|\ket{\phi}\|=1$ and $\rho$ having distinct eigenvalues, only if $\ket{\phi}$ and $\rho$ have the above form.
\end{theorem}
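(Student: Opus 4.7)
The plan is to establish the Berezin--Lieb inequality \eqref{eq von neu} via a pointwise Jensen estimate, verify the two sufficient equality conditions by direct computation of the Husimi function, and then, under strict concavity with distinct eigenvalues, reverse the Jensen step and extract the required structure using Theorem \ref{teo nic}(iii). Throughout I would use the spectral decomposition \eqref{eq rho spec} and set $q_j(z):=|V_\phi \psi_j(z)|^2$, so that \eqref{eq husimi decomp} becomes $u_{\phi,\rho}(z)=\sum_j p_j q_j(z)$.

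The two identities that drive everything are: $\sum_j q_j(z)=\|\pi(z)\phi\|^2=1$ for every $z$ (Parseval in the orthonormal basis $\{\psi_j\}$, using that $\pi(z)$ is unitary), and $\sum_{z}q_j(z)=N$ for every $j$ (the Parseval formula \eqref{parseval} combined with \eqref{eq egnuova}). Applying Jensen's inequality pointwise in $z$ gives
\[
G(u_{\phi,\rho}(z))\ \geq\ \sum_j q_j(z)\,G(p_j),
\]
and summing over $z\in\AhA$ divided by $N$ yields $\mathcal{E}_G(\phi,\rho)\geq \sum_j G(p_j)=\Tr G(\rho)$. For the chaotic state $\rho=N^{-1}I$, a direct computation gives $u_{\phi,\rho}(z)\equiv N^{-1}$, so $\mathcal{E}_G(\phi,\rho)=\frac{1}{N}\#\AhA\cdot G(N^{-1})=N\,G(N^{-1})=\Tr G(\rho)$. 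For the second sufficient condition, Theorem \ref{teo nic}(ii) gives $|V_\phi \phi|=\chi_K$ with $\#K=N$, and the covariance \eqref{eq 2.3} then gives $|V_\phi \psi_j|=\chi_{z_j+K}$. The hypothesis that the $z_j$ lie in distinct cosets of $K$, together with the cardinality count $N\cdot\#K=N^2=\#\AhA$, shows that these cosets partition $\AhA$, hence $u_{\phi,\rho}\equiv p_j$ on $z_j+K$, and \eqref{eq egnuova} produces exactly $\sum_j G(p_j)$.

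For the necessity part under strict concavity and distinct eigenvalues $p_j$, equality in the Jensen step must hold for every $z\in\AhA$. Strict concavity forces, at each $z$, the set $\{p_j:q_j(z)>0\}$ to be a singleton; combined with $\sum_j q_j(z)=1$, this forces $q_j(z)\in\{0,1\}$ and so $|V_\phi \psi_j|=\chi_{E_j}$ for pairwise disjoint sets $E_j\subset\AhA$ that partition $\AhA$, each of cardinality $N$. Now Theorem \ref{teo nic}(iii), applied to the pair $(\phi,\psi_j)$, delivers simultaneously that $\phi$ is an $S$-state, that $\psi_j=e^{i\theta_j}W(z_j)\phi$ for some $z_j\in\AhA$ (absorbing the modulus into the phase using $\|\psi_j\|=1$), and that $E_j$ is the coset $z_j+K$ where $K:=\{z\in\AhA:V_\phi \phi(z)\neq 0\}$. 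Disjointness of the $E_j$'s is exactly the statement that the $z_j$'s lie in distinct cosets of $K$.

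The only genuinely nontrivial step is the last one: converting the extremal shape $|V_\phi \psi_j|=\chi_{E_j}$ into the $S$-state structure for $\phi$ and the coherent-state structure for each $\psi_j$. This is precisely the content of Theorem \ref{teo nic}(iii), so the obstacle is already dispatched by the harmonic-analysis toolkit from Section \ref{sec 2}; the remainder is a careful bookkeeping of the Jensen equality condition with strict concavity and the normalization $\sum_j q_j(z)=1$.
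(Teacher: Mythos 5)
Your proposal is correct and follows essentially the same route as the paper: pointwise Jensen with weights $|V_\phi\psi_j(z)|^2$ summing to $1$, the Parseval identity to recover $\Tr G(\rho)$, and, for the strict-concavity case, the forced disjointness of the supports of the $V_\phi\psi_j$ followed by Theorem \ref{teo nic}(ii)--(iii) to extract the $S$-state and coherent-state structure. The only (immaterial) differences are bookkeeping ones, e.g.\ you get $\#E_j=N$ directly from Parseval and read off the distinct cosets from disjointness of the $E_j$'s rather than from orthogonality of the $\ket{\psi_j}$'s.
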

% We will see (Proposition \ref{}) that the vectors $\ket{\psi_j}$ appearing in Theorem \ref{} 3) are common eigenvectors of a maximal family of pairwise commuting Weyl-Heisenberg operators, namely the $\pi(z)$'s, with $z\in K_h$. 

%If one (and therefore every) condition above holds true, the characteristic functions $\tilde{\phi}$ and $\tilde{\psi}$ have the same support, say $K\subset\AhA$. Moreover $K$ is a maximal isotropic subgroup of $\AhA$ and both $\tilde{\phi}$ and $\tilde{\psi}$ are subcharacters of second degree of $\AhA$.  

\begin{proof}[Proof of Theorem \ref{teo von neum}]
Form \eqref{eq rho spec} we have (cf. \eqref{eq husimi decomp})
\begin{equation}
    \label{eq husimi decomp 2}
   u_{\phi,\rho}(z)=\sum_{j=1}^N p_j|V_\phi \psi_j(z)|^2\qquad z\in\AhA.
\end{equation}
Hence
\begin{align*}
\mathcal{E}_G(\phi,\rho)&=\frac{1}{N}\sum_{z\in\AhA} G\Big(\sum_{j=1}^N p_j|V_\phi \psi_j(z)|^2\Big)\\
&\geq \frac{1}{N}\sum_{z\in\AhA}\sum_{j=1}^N G(p_j) |V_\phi \psi_j(z)|^2\\
&=\sum_{j=1}^N G(p_j),
\end{align*}
where the inequality follows from the concavity of $G$ and the fact that 
\begin{equation}\label{eq 26}
\sum_{j=1}^N |V_\phi\psi_j(z)|^2=\sum_{j=1}^N |\langle \psi_j|W(z)|\phi\rangle|^2=
\|W(z)|\phi\rangle \|^2=1\quad z\in\AhA,
\end{equation} 
 whereas the last equality is a consequence of \eqref{parseval}, that here reads 
 \begin{equation}\label{parseval2}
\frac{1}{N}\sum_{z\in\AhA}|V_\phi \psi_j(z)|^2=1\qquad j=1,\ldots,N.
 \end{equation}

It is clear that equality occurs in \eqref{eq von neu} for every $\ket{\phi}$, with $\|\ket{\phi}\|=1$ if $\rho= N^{-1} I$, by \eqref{eq 26}.

From the above argument we see that, more generally, equality occurs in \eqref{eq von neu} if and only if, for every $z\in\AhA$,
\begin{equation}\label{eq 4g}
G\Big(\sum_{j=1}^N p_j|V_\phi \psi_j(z)|^2\Big)=\sum_{j=1}^N G(p_j) |V_\phi \psi_j(z)|^2.
\end{equation}
If $G$ is strictly concave and $p_j\not=p_k$ for $j,k\in \{1,\ldots,N\}$, $j\not=k$, this happens if and only if, for every $z\in \AhA$ there exists (one and) only one $j\in\{1,\ldots,N\}$ such that $V_\phi \psi_j(z)\not=0$. That is, the functions $V_\phi \psi_j$, $j=1,\ldots,N$, have disjoint supports.
By \eqref{eq cs} and \eqref{parseval2} such supports have cardinality at least $N$, and therefore they must have all cardinality $N$. By Theorem \ref{teo nic} (iii) this implies that $\ket{\phi}$ is an $S$-state vector (Definition \ref{def state}) and there exist $\theta_j\in\R$, $z_j\in \AhA$, with $j=1,\ldots,N$ such that 
\[
\ket{\psi_j}=e^{i\theta_j}W(z_j)\ket{\phi}.
\]
Moreover,  it follows from Theorem \ref{teo nic} (ii), that the set $K$ in the statement is a maximal isotropic subgroup of $\AhA$.  
 Since the $\ket{\psi_j}$'s are pairwise orthogonal, arguing as in the proof of Theorem \ref{teo wehrl} we see that the $z_j$'s belong to different cosets of $K$. 
 
 Conversely, it is clear, essentially by revering this latter reasoning, that for  $\ket{\phi}$ and $\ket{\psi_j}$ as in the statement, the vectors $\ket{\psi_j}$ define an orthonormal basis of $\cH$, and the functions $V_\phi \psi_j$ have disjoint supports. Hence \eqref{eq 4g} is satisfied for every $z\in\AhA$ regardless of whether $G$ is strictly concave and whether the $p_j$'s are all distinct (indeed, the sums in \eqref{eq 4g} contain only one summand).  
\end{proof}
%\begin{remark}
%Notice that the so-called chaotic state $\rho=N^{-1} I$ can be written in the form \eqref{eq rho spec} with $p_j=N^{-1}$ and, for example,  $\ket{\psi_j}=W(z_j)\ket{\phi}$ for $j=1,\ldots, N$, with $\phi(x)\equiv N^{-1/2}$. Hence for such $\rho$, equality equality occurs  (In this case $V_\phi \phi=\chi_{K}$, with $K=A\times\{0\}$. 
%\end{remark}
\subsection{Maximizing the Wehrl entropy}
We consider the problem of maximizing the Wehrl entropy in \eqref{eq wehrl entropy} when $A$ is a finite Abelian group, say of cardinality $N$. 

We recall from \cite[Section 12.3]{holevo2019quantum} that the {\it characteristic function} of a density operator $\rho$ is defined by
\begin{equation}
\tilde{\rho}(z):=\Tr \rho W(z)\qquad z\in\AhA.
\end{equation}
When $\rho=\ket{\psi}\bra{\psi}$ is a rank one projector, we simply write $\tilde{\psi}(z)$ for $\tilde{\rho}(z)$.

We also consider the Fourier transform on $\AhA$, defined by 
\[
\mathcal{F} f(\xi',x')=\frac{1}{N}\sum_{(x,\xi)\in \AhA}\overline{\xi'(x)\xi(x')}f(x,\xi) \qquad (\xi',x')\in \hA\times A. 
\]
We will need the following useful formula for the Fourier transform of the Husimi function. Related formulas have appeared in various contexts (such as mathematical signal processing \cite{alaifari22}) but with different notation and terminology. 
\begin{proposition}\label{pro pro6.2}
    Let $\ket{\phi}\in\cH$, with $\|\ket{\phi}\|=1$  and let $\rho$ be a density operator in $\cH$. We have 
    \[
    \mathcal{F} u_{\phi,\rho}(\xi,-x)=\tilde{\phi}(x,\xi)\overline{\tilde{\rho}(x,\xi)}\qquad (x,\xi)\in \AhA.  
    \]
\end{proposition}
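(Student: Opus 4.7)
The plan is to reduce to the rank-one case by spectral decomposition and then compute directly using character orthogonality on the finite Abelian group $\AhA$.

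First, writing $\rho=\sum_{j=1}^N p_j\ket{\psi_j}\bra{\psi_j}$ as in \eqref{eq rho spec}, the Husimi function decomposes as $u_{\phi,\rho}=\sum_j p_j |V_\phi\psi_j|^2$ by \eqref{eq husimi decomp 2}, while linearity of the trace gives $\tilde\rho=\sum_j p_j\, \tilde\psi_j$. Since both sides of the claimed identity are linear in $\rho$, it is enough to establish the rank-one identity
\[
\mathcal{F}\bigl(|V_\phi\psi|^2\bigr)(\xi,-x)=\tilde\phi(x,\xi)\,\overline{\tilde\psi(x,\xi)}\qquad(x,\xi)\in\AhA,
\]
for an arbitrary $\psi\in L^2(A)$ (where, after identifying $\cH$ with $L^2(A)$ via $U$, $\tilde\psi(x,\xi)=\sum_{s}\overline{\psi(s)}\xi(s)\psi(s-x)$, and similarly for $\phi$).

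To prove the rank-one identity, I would expand, using \eqref{eq tfshift},
\[
|V_\phi\psi(y,\eta)|^2=\sum_{s,t\in A}\eta(t-s)\,\overline{\phi(s-y)}\phi(t-y)\,\psi(s)\overline{\psi(t)},
\]
substitute into $\mathcal{F}(|V_\phi\psi|^2)(\xi,-x)=\frac{1}{N}\sum_{(y,\eta)\in\AhA}\overline{\xi(y)}\,\eta(x)\,|V_\phi\psi(y,\eta)|^2$, and perform the inner sum over $\eta\in\hA$. Character orthogonality $\sum_{\eta\in\hA}\eta(z)=N\,\delta_{z,0}$ collapses that sum and forces $t=s-x$. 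The change of variable $u=s-y$ then factors the remaining double sum as a product of two one-dimensional sums,
\[
\Bigl(\sum_{u\in A}\xi(u)\,\overline{\phi(u)}\phi(u-x)\Bigr)\Bigl(\sum_{s\in A}\overline{\xi(s)}\,\psi(s)\overline{\psi(s-x)}\Bigr),
\]
which one recognizes as $\tilde\phi(x,\xi)\,\overline{\tilde\psi(x,\xi)}$.

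The computation is entirely elementary; the only delicate point is the bookkeeping of signs, conjugates, and phase factors, so that the $\eta$-phase produced by squaring the coherent state transform combines correctly with the Fourier kernel $\eta(x)$ to yield the delta on $A$, while the remaining $\overline{\xi(y)}$ redistributes into the two $\xi$-phases of the characteristic functions after the translation $u=s-y$. No conceptual obstacle is expected: the identity is the finite analog of the classical symplectic Fourier relation between the spectrogram and the radar ambiguity function in time-frequency analysis.
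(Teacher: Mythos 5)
Your proof is correct, and I checked the computation: expanding $|V_\phi\psi(y,\eta)|^2=\sum_{s,t}\eta(t-s)\overline{\phi(s-y)}\phi(t-y)\psi(s)\overline{\psi(t)}$, the $\eta$-sum against the kernel $\eta(x)$ gives $N\delta_{t,s-x}$ by character orthogonality, and after the substitution $u=s-y$ the double sum factors exactly into $\bigl(\sum_u\xi(u)\overline{\phi(u)}\phi(u-x)\bigr)\bigl(\sum_s\overline{\xi(s)}\psi(s)\overline{\psi(s-x)}\bigr)=\tilde\phi(x,\xi)\,\overline{\tilde\psi(x,\xi)}$, consistent with the paper's normalization $\mathcal{F}f(\xi',x')=\frac1N\sum_{(y,\eta)}\overline{\xi'(y)\eta(x')}f(y,\eta)$. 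The route differs from the paper's only in how the rank-one identity is handled: the paper makes the same reduction to $\rho=\ket{\psi}\bra{\psi}$, but then rewrites $\tilde\phi=\overline{V_\phi\phi}$ and $\tilde\rho=\overline{V_\psi\psi}$ and invokes Lemma 2.2 of Alaifari--Wellershoff (stated for $A=\bZ_d$), remarking that the proof adapts to any finite Abelian (indeed any LCA) group, whereas you carry out that adaptation explicitly, so your argument is self-contained and makes the ``finite symplectic Fourier transform of the spectrogram equals ambiguity function times conjugate ambiguity function'' mechanism visible. One small wording point: the right-hand side is conjugate-linear, not linear, in $\rho$; the reduction to rank one is nevertheless legitimate (exactly as in the paper) because the spectral coefficients $p_j$ in $\rho=\sum_j p_j\ket{\psi_j}\bra{\psi_j}$ are real, so $\overline{\tilde\rho}=\sum_j p_j\overline{\tilde\psi_j}$.
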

\begin{proof}
It suffices to prove the desired result when $\rho$ is a rank one projector, therefore $\rho=\ket{\psi}\bra{\psi}$. In that case, we have (cf. \eqref{eq husimi decomp 2}) 
\[
u_{\phi,\rho}(z)=|V_\phi \psi(z)|^2.
\]
By direct inspection one sees that 
\[
\tilde{\phi}=\overline{V_\phi \phi}\qquad  \tilde{\rho}=\overline{V_\psi \psi}.
\]
As a consequence, we are reduced to prove that 
\[
\mathcal{F} |V_\phi \psi|^2 (\xi,-x)= \overline{V_\phi\phi (x,\xi)} V_\psi \psi(x,\xi)\qquad (x,\xi)\in \AhA.  
\]
This formula was proved in \cite[Lemma 2.2]{alaifari22}
when $A=\bZ_d$. The proof given there can be easily adapted to any finite Abelian group (in fact, to any LCA group). 
\end{proof}

We also need the following elementary lemma. 
\begin{lemma}\label{lem max f}
Let $G:[0,1]\to\R$ be a concave function and $N\geq1$ be an integer. Let 
\[
\mathcal{P}:=\Big\{t=(t_1,\ldots, t_{N^2})\in[0,1]^{N^2}:\ \sum_{j=1}^{N^2} t_j=N\Big\},
\]
and 
\[
f(t):=\frac{1}{N}\sum_{j=1}^{N^2} G(t_j)\qquad t\in\mathcal{P}.
\]
Then $f$ attains its maximum value on $\mathcal{P}$ at $\overline{t}:=(N^{-1},\ldots,N^{-1})$. 

If $G$ is strictly concave, then $\overline{t}$ is the unique maximum point.     
\end{lemma}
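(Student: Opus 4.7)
The plan is to prove this by a direct application of Jensen's inequality to the concave function $G$, regarded with uniform weights $1/N^2$ over the $N^2$ coordinates. Since $f$ is a concave function on the compact convex polytope $\mathcal{P}$, the only real question is which point in $\mathcal{P}$ attains the maximum, and Jensen immediately singles out $\bar{t}$.

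First I would observe that the candidate maximizer $\bar{t}$ indeed belongs to $\mathcal{P}$, since for $N\geq 1$ one has $1/N\in[0,1]$ and $N^2\cdot N^{-1}=N$, and that $f(\bar{t})=N\,G(1/N)$. Then for any $t\in\mathcal{P}$, since the uniform average of the coordinates equals $N/N^2=1/N$, Jensen's inequality with weights $1/N^2$ gives
\[
\frac{1}{N^2}\sum_{j=1}^{N^2} G(t_j)\leq G\Bigl(\frac{1}{N^2}\sum_{j=1}^{N^2} t_j\Bigr)=G(1/N).
\]
Multiplying through by $N$ yields $f(t)\leq N\,G(1/N)=f(\bar{t})$, which is the desired bound.

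For the uniqueness statement under strict concavity, I would invoke the standard equality case of Jensen's inequality: equality above forces $t_1=\ldots=t_{N^2}$, and combined with the constraint $\sum_{j}t_j=N$ this yields $t_j=1/N$ for every $j$, i.e.\ $t=\bar{t}$. There is no real obstacle here: the upper-bound constraint $t_j\leq 1$ never binds at the maximum (as $1/N\leq 1$), the domain of $G$ is naturally respected because the average $1/N$ lies in $[0,1]$, and the entire statement reduces to the classical Jensen inequality for a concave function on a simplex-like region.
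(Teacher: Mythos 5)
Your proof is correct. The only difference from the paper is in packaging: you invoke Jensen's inequality for the concave function $G$ with uniform weights $1/N^2$ (noting the coordinate average is $1/N$), and settle uniqueness via the standard equality case of Jensen for strictly concave $G$, which forces all $t_j$ equal and hence $t=\overline{t}$. The paper instead gives a self-contained symmetrization argument: it averages the points $t_\sigma$ over all permutations $\sigma$ of the coordinates, observes that this average is exactly $\overline{t}$ because $\sum_j t_j=N$, and then uses concavity of $f$ itself to get $f(\overline{t})\geq \frac{1}{(N^2)!}\sum_\sigma f(t_\sigma)=f(t)$; uniqueness follows because $f$ inherits strict concavity from $G$. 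These are essentially the same underlying fact — the permutation-averaging step is one classical way of proving the finite Jensen inequality — so your route is a legitimate shortcut that buys brevity at the cost of citing Jensen (and its equality case) as a black box, while the paper's version is elementary and self-contained. Your side remarks are also accurate: the constraint $t_j\leq 1$ is never active at $\overline{t}$ since $1/N\leq 1$, and the average $1/N$ lies in the domain of $G$, so there is no hidden issue.
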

\begin{proof}
   For every $t=(t_1,\ldots,t_{N^2})\in\mathcal{P}$ and every permutation \[
   \sigma:\{1,\ldots,N^2\}\to \{1,\ldots,N^2\},
   \]
   consider the point $t_\sigma:=(t_{\sigma(1)},\ldots,t_{\sigma(N^2)})\in\mathcal{P}$. Using the fact that $\sum_{j=1}^{N^2} t_j=N$ it is easy to check that 
   \[
   \overline{t}=\frac{1}{(N^2)!}\sum_{\sigma} t_\sigma.  
   \]
   Indeed, each component of the vector $\sum_{\sigma} t_\sigma$ is equal to 
   \[
(N^2-1)!t_1+(N^2-1)!t_2+\ldots+(N^2-1)! t_{N^2}=(N^2-1)! N.
   \]
   Since $G$ is concave,   $f$ is concave as well  and therefore 
   \[
f(\overline{t})\geq\frac{1}{(N^2)!}\sum_\sigma f(t_\sigma)=\frac{1}{(N^2)!}\sum_\sigma f(t)=f(t). 
   \]
The last part of the statement is clear, because $f$ is strictly concave whenever $G$ is strictly concave.
\end{proof}
We can now state the desired characterization of the maximizers of the Wehrl entropy. For a function $f:\AhA\to\mathbb{C}$ we set ${\rm supp}\, f=\{z\in \AhA:\ f(z)\not=0\}$. 
\begin{theorem}\label{teo wehrl max}
    Let $G:[0,1]\to\R$ be a concave function. Let $\ket{\phi}\in\cH$, with $\|\ket{\phi}\|=1$  and let $\rho$ be a density operator in $\cH$. Then 
    \begin{equation}\label{eq 28}
        \cE_G(\phi,\rho)\leq N\, G(1/N).
\end{equation}
Moreover, if $G$ is strictly concave, equality occurs in \eqref{eq 28} if and only if
\begin{equation}\label{eq 29}
{\rm supp}\,\tilde{\phi}\cap {\rm supp}\,\tilde{\psi} =\{0\}. 
\end{equation}
\end{theorem}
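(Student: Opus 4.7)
The plan is to reduce the inequality \eqref{eq 28} to the elementary optimization in Lemma \ref{lem max f} and then to identify the equality case via the Fourier-analytic identity of Proposition \ref{pro pro6.2}.

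For the bound, I would start from the expression \eqref{eq egnuova}. By Lemma \ref{lem husimi} the values $u_{\phi,\rho}(z)$ lie in $[0,1]$, and the normalization \eqref{eq husimi bounds2}, combined with the Haar measure conventions recalled in Section \ref{sec 2} (counting measure on $A$ and $N^{-1}$ times counting measure on $\hA$, so that $|\AhA|=N^2$ and $\int_{\AhA}f\,dz=N^{-1}\sum_{z\in\AhA}f(z)$), gives $\sum_{z\in\AhA}u_{\phi,\rho}(z)=N$. Thus the tuple $(u_{\phi,\rho}(z))_{z\in\AhA}$ is a point of the set $\mathcal{P}$ of Lemma \ref{lem max f} (with $N^2$ coordinates), and that lemma immediately yields
\[
\cE_G(\phi,\rho)=\frac{1}{N}\sum_{z\in\AhA}G(u_{\phi,\rho}(z))\leq N\,G(1/N),
\]
which is \eqref{eq 28}.

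Assume now that $G$ is strictly concave. The uniqueness part of Lemma \ref{lem max f} tells me that equality in \eqref{eq 28} holds if and only if $u_{\phi,\rho}\equiv N^{-1}$ on $\AhA$, so it suffices to show that this constancy is equivalent to the support condition \eqref{eq 29}. The natural tool is the Fourier transform $\mathcal{F}$ on $\AhA$. A routine application of the orthogonality of characters shows that $\mathcal{F}$ maps the constant function $N^{-1}$ to the function that equals $1$ at $(0,0)\in\hA\times A$ and vanishes elsewhere.

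By Proposition \ref{pro pro6.2}, then, $u_{\phi,\rho}\equiv N^{-1}$ is equivalent to requiring that the product $\tilde{\phi}(x,\xi)\overline{\tilde{\rho}(x,\xi)}$ equal $1$ at $(x,\xi)=(0,0)$ and vanish otherwise. The boundary value at the origin is automatic, since $\tilde{\phi}(0,0)=\|\ket{\phi}\|^2=1$ and $\tilde{\rho}(0,0)=\Tr\rho=1$; in particular $0$ lies in both supports. The remaining requirement, namely $\tilde{\phi}(x,\xi)\overline{\tilde{\rho}(x,\xi)}=0$ for all $(x,\xi)\neq 0$, is precisely the support condition \eqref{eq 29}. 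No step of this plan is genuinely delicate; the only item that demands some care is the bookkeeping of the two Haar normalizations -- counting on $A$ versus $N^{-1}$ times counting on $\hA$ -- dictated by the Fourier inversion convention, which supplies the factors of $N$ and $N^2$ that make the point $(u_{\phi,\rho}(z))_z$ fit the hypotheses of Lemma \ref{lem max f}.
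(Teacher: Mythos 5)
Your proposal is correct and follows essentially the same route as the paper: the bound comes from \eqref{eq egnuova} together with Lemma \ref{lem max f} (using \eqref{eq husimi bounds1}--\eqref{eq husimi bounds2} to place $(u_{\phi,\rho}(z))_z$ in $\mathcal{P}$), and the equality case for strictly concave $G$ is reduced, via the uniqueness in Lemma \ref{lem max f}, to $u_{\phi,\rho}\equiv N^{-1}$ and then to the support condition by Fourier transforming and invoking Proposition \ref{pro pro6.2} with $\tilde{\phi}(0)=\tilde{\rho}(0)=1$. Your extra bookkeeping of the Haar normalizations matches the paper's conventions, so nothing is missing.
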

\begin{proof}
    The upper bound  \eqref{eq 28} follows from \eqref{eq egnuova} and Lemma \ref{lem max f}. 

    Suppose that $G$ is strictly concave. Then it follows from Lemma \ref{lem max f} that equality occurs in \eqref{eq 28} if and only if 
    \[
u_{\phi,\rho}(z)=\frac{1}{N}\qquad z\in\AhA.
    \] Taking the Fourier transform of both sides, we see that this is equivalent to 
    \[
    \mathcal{F}u_{\phi,\rho}(\xi,x)=\begin{cases}1&(\xi,x)=0\\
    0&(\xi,x)\not=0.
    \end{cases}
    \]
    Using Proposition \ref{pro pro6.2}, this amounts to \eqref{eq 29}, because  $\tilde{\rho}(0)=\Tr \rho=1$ and, similarly, $\tilde{\phi}(0)=1$.
\end{proof}
\begin{remark}
 If $G(\tau)=-\tau \log \tau$ for $0<\tau\leq 1$ and $G(0)=0$, as already observed, $\Tr G(\rho)$ is the von Neumann entropy of $\rho$. In this case, combining Theorems \ref{teo von neum} and \ref{teo wehrl max} we obtain that 
    \[
\Tr G(\rho)\leq \cE_G(\phi,\rho)\leq \log N.
    \]
    It is well known and easy to check that $\Tr G(\rho)=\log N$ if and only if $\rho=N^{-1} I$. Theorems \ref{teo von neum} and \ref{teo wehrl max} provide information on the cases where the two above inequalities hold separately as an equality.  
\end{remark}

%\section*{Acknowledgments}

%\bibliographystyle{abbrv} 
%\bibliography{biblio}

\begin{thebibliography}{10}

\bibitem{alaifari22}
R.~Alaifari and M.~Wellershoff.
\newblock Stability estimates for phase retrieval from discrete {G}abor
  measurements.
\newblock {\em J. Fourier Anal. Appl.}, 27(2):Paper No. 6, 31, 2021.

\bibitem{basile2024weyl}
T.~Basile, J.~A. de~Leon, A.~Fonseca, F.~Leyvraz, and C.~Pineda.
\newblock Weyl channels for multipartite systems.
\newblock {\em Physical Review A}, 109(3):032607, 2024.

\bibitem{bertlmann2008bloch}
R.~A. Bertlmann and P.~Krammer.
\newblock Bloch vectors for qudits.
\newblock {\em Journal of Physics A: Mathematical and Theoretical},
  41(23):235303, 2008.

\bibitem{carlen91}
E.~A. Carlen.
\newblock Some integral identities and inequalities for entire functions and
  their application to the coherent state transform.
\newblock {\em J. Funct. Anal.}, 97(1):231--249, 1991.

\bibitem{degosson2011symplectic}
M.~A. De~Gosson.
\newblock {\em Symplectic methods in harmonic analysis and in mathematical
  physics}, volume~7.
\newblock Springer Science \& Business Media, 2011.

\bibitem{degosson2021quantum}
M.~A. de~Gosson.
\newblock {\em Quantum harmonic analysis---an introduction}, volume~4 of {\em
  Advances in Analysis and Geometry}.
\newblock De Gruyter, Berlin, 2021.

\bibitem{depalma2018}
G.~De~Palma, D.~Trevisan, V.~Giovannetti, and L.~Ambrosio.
\newblock Gaussian optimizers for entropic inequalities in quantum information.
\newblock {\em J. Math. Phys.}, 59(8):081101, 25, 2018.

\bibitem{dehaene2003}
J.~Dehaene and B.~De~Moor.
\newblock Clifford group, stabilizer states, and linear and quadratic
  operations over {GF}(2).
\newblock {\em Phys. Rev. A (3)}, 68(4):042318, 10, 2003.

\bibitem{feichtinger2}
H.~G. Feichtinger, M.~Hazewinkel, N.~Kaiblinger, E.~Matusiak, and M.~Neuhauser.
\newblock Metaplectic operators on {$\Bbb C^n$}.
\newblock {\em Q. J. Math.}, 59(1):15--28, 2008.

\bibitem{follandbook}
G.~B. Folland.
\newblock {\em A course in abstract harmonic analysis}.
\newblock Studies in Advanced Mathematics. CRC Press, Boca Raton, FL, 1995.

\bibitem{frank22}
R.~L. Frank.
\newblock Sharp inequalities for coherent states and their optimizers.
\newblock {\em Adv. Nonlinear Stud.}, 23(1):Paper No. 20220050, 28, 2023.

\bibitem{frank2023generalized}
R.~L. Frank, F.~Nicola, and P.~Tilli.
\newblock The generalized {W}ehrl entropy bound in quantitative form.
\newblock {\em Preprint arXiv:2307.14089}, 2023.

\bibitem{giovannetti2015}
V.~Giovannetti, A.~S. Holevo, and A.~Mari.
\newblock Majorization and additivity for multimode bosonic {G}aussian
  channels.
\newblock {\em Teoret. Mat. Fiz.}, 182(2):338--349, 2015.

\bibitem{gottesman1998heisenberg}
D.~Gottesman.
\newblock The {H}eisenberg representation of quantum computers.
\newblock In {\em Group22: {P}roceedings of the {XXII} {I}nternational
  {C}olloquium in {G}roup {T}heoretical {M}ethods in {P}hysics ({H}obart,
  1998)}, pages 32--43. Int. Press, Cambridge, MA, 1999.

\bibitem{grochenig1998aspects}
K.~Gr{\"o}chenig.
\newblock Aspects of {G}abor analysis on locally compact abelian groups.
\newblock In {\em Gabor analysis and algorithms: Theory and Applications},
  pages 211--231. Springer, 1998.

\bibitem{grochenig_book}
K.~Gr\"{o}chenig.
\newblock {\em Foundations of Time-Frequency Analysis}.
\newblock Bir\-kh\"{a}user Boston, Inc., Boston, MA, 2001.

\bibitem{gross2006hudson}
D.~Gross.
\newblock Hudson's theorem for finite-dimensional quantum systems.
\newblock {\em J. Math. Phys.}, 47(12):122107, 25, 2006.

\bibitem{hewitt63}
E.~Hewitt and K.~A. Ross.
\newblock {\em Abstract harmonic analysis. {V}ol. {I}: {S}tructure of
  topological groups. {I}ntegration theory, group representations}.
\newblock Die Grundlehren der mathematischen Wissenschaften, Band 115. Academic
  Press, Inc., Publishers, New York; Springer-Verlag,
  Berlin-G\"{o}ttingen-Heidelberg, 1963.

\bibitem{hewitt70}
E.~Hewitt and K.~A. Ross.
\newblock {\em Abstract harmonic analysis. {V}ol. {II}: {S}tructure and
  analysis for compact groups. {A}nalysis on locally compact {A}belian groups}.
\newblock Die Grundlehren der mathematischen Wissenschaften, Band 152.
  Springer-Verlag, New York-Berlin, 1970.

\bibitem{holevo2011}
A.~Holevo.
\newblock {\em Probabilistic and statistical aspects of quantum theory},
  volume~1 of {\em Quaderni/Monographs}.
\newblock Edizioni della Normale, Pisa, second edition, 2011.

\bibitem{holevo2019quantum}
A.~S. Holevo.
\newblock {\em Quantum systems, channels, information---a mathematical
  introduction}.
\newblock Texts and Monographs in Theoretical Physics. De Gruyter, Berlin,
  second edition, 2019.

\bibitem{hostens2005stabilizer}
E.~Hostens, J.~Dehaene, and B.~De~Moor.
\newblock Stabilizer states and {C}lifford operations for systems of arbitrary
  dimensions and modular arithmetic.
\newblock {\em Phys. Rev. A (3)}, 71(4):042315, 9, 2005.

\bibitem{igusa68}
J.-i. Igusa.
\newblock Harmonic analysis and theta-functions.
\newblock {\em Acta Math.}, 120:187--222, 1968.

\bibitem{igusa2012theta}
J.-i. Igusa.
\newblock {\em Theta functions}, volume 194.
\newblock Springer Science \& Business Media, 2012.

\bibitem{kaiblinger09}
N.~Kaiblinger and M.~Neuhauser.
\newblock Metaplectic operators for finite abelian groups and {$\Bbb R^d$}.
\newblock {\em Indag. Math. (N.S.)}, 20(2):233--246, 2009.

\bibitem{kulikov22bis}
A.~Kulikov, F.~Nicola, J.~Ortega-Cerd\`a, and P.~Tilli.
\newblock A monotonicity theorem for subharmonic functions on manifolds.
\newblock {\em Preprint arXiv:2212.14008}, 2022.

\bibitem{lieb}
E.~H. Lieb.
\newblock Integral bounds for radar ambiguity functions and {W}igner
  distributions.
\newblock {\em J. Math. Phys.}, 31(3):594--599, 1990.

\bibitem{lieb_book}
E.~H. Lieb and M.~Loss.
\newblock {\em Analysis}.
\newblock American Mathematical Society, Providence, RI, 2001.

\bibitem{lieb2014}
E.~H. Lieb and J.~P. Solovej.
\newblock Proof of an entropy conjecture for {B}loch coherent spin states and
  its generalizations.
\newblock {\em Acta Math.}, 212(2):379--398, 2014.

\bibitem{lieb2016}
E.~H. Lieb and J.~P. Solovej.
\newblock Proof of the {W}ehrl-type entropy conjecture for symmetric {$SU(N)$}
  coherent states.
\newblock {\em Comm. Math. Phys.}, 348(2):567--578, 2016.

\bibitem{luef2009quantum}
F.~Luef and Y.~I. Manin.
\newblock Quantum theta functions and {G}abor frames for modulation spaces.
\newblock {\em Lett. Math. Phys.}, 88(1-3):131--161, 2009.

\bibitem{nicola2023maximally}
F.~Nicola.
\newblock Maximally localized {G}abor orthonormal bases on locally compact
  {A}belian groups.
\newblock {\em Adv. Math., to appear (preprint arXiv:2305.02738)}.

\bibitem{nicola2023uncertainty}
F.~Nicola.
\newblock The uncertainty principle for the short-time {F}ourier transform on
  finite cyclic groups: cases of equality.
\newblock {\em J. Funct. Anal.}, 284(12):Paper No. 109924, 13, 2023.

\bibitem{nicola22}
F.~Nicola and P.~Tilli.
\newblock The {F}aber-{K}rahn inequality for the short-time {F}ourier
  transform.
\newblock {\em Invent. Math.}, 230(1):1--30, 2022.

\bibitem{nielsen2010quantum}
M.~A. Nielsen and I.~L. Chuang.
\newblock {\em Quantum computation and quantum information}.
\newblock Cambridge university press, 2010.

\bibitem{reiter89}
H.~Reiter.
\newblock {\em Metaplectic groups and {S}egal algebras}, volume 1382 of {\em
  Lecture Notes in Mathematics}.
\newblock Springer-Verlag, Berlin, 1989.

\bibitem{reiter00}
H.~Reiter and J.~D. Stegeman.
\newblock {\em Classical harmonic analysis and locally compact groups},
  volume~22 of {\em London Mathematical Society Monographs. New Series}.
\newblock The Clarendon Press, Oxford University Press, New York, second
  edition, 2000.

\bibitem{segal1963}
I.~E. Segal.
\newblock Transforms for operators and symplectic automorphisms over a locally
  compact abelian group.
\newblock {\em Math. Scand.}, 13:31--43, 1963.

\bibitem{singal2023counting}
T.~Singal, C.~Chiang, E.~Hsu, E.~Kim, H.-S. Goan, and M.-H. Hsieh.
\newblock Counting stabiliser codes for arbitrary dimension.
\newblock {\em Quantum}, 7:1048, 2023.

\bibitem{nest2011monomial}
M.~Van~den Nest.
\newblock A monomial matrix formalism to describe quantum many-body states.
\newblock {\em New Journal of Physics}, 13(12):123004, 2011.

\bibitem{nest2011}
M.~Van~den Nest.
\newblock Efficient classical simulations of quantum {F}ourier transforms and
  normalizer circuits over abelian groups.
\newblock {\em Quantum Inf. Comput.}, 13(11-12):1007--1037, 2013.

\bibitem{wehrl79}
A.~Wehrl.
\newblock On the relation between classical and quantum-mechanical entropy.
\newblock {\em Rep. Math. Phys.}, 16(3):353--358, 1979.

\bibitem{weil64}
A.~Weil.
\newblock Sur certains groupes d'op\'{e}rateurs unitaires.
\newblock {\em Acta Math.}, 111:143--211, 1964.

\bibitem{weyl1927quantenmechanik}
H.~Weyl.
\newblock Quantenmechanik und gruppentheorie.
\newblock {\em Zeitschrift f{\"u}r Physik}, 46(1):1--46, 1927.

\end{thebibliography}

\end{document}